\documentclass[10pt,letterpaper,twocolumn]{article}

\usepackage{arxiv}
\usepackage{bibunits}
\usepackage{orcidlink}

\titlespacing*{\section}{0pt}{1.2ex plus 0.3ex minus 0.2ex}{0.6ex}
\hypersetup{hidelinks}
\newcommand{\PaperBalanceLastPage}{}
\newcommand{\PaperIntroStart}{Credential}
\newcommand{\PaperRunIn}[1]{\paragraph{#1}}

\newcommand{\PaperFrontMatter}{%
  \hypersetup{
    pdftitle={\PaperTitle},
    pdfauthor={\PaperAuthorsMetadata},
    pdfsubject={Preprint manuscript},
    pdfkeywords={\PaperKeywords}
  }
  \twocolumn[
    \begin{center}
      \vspace*{-1.2em}
      {\sffamily\mdseries\fontsize{24}{27}\selectfont \PaperTitle\par}
      \vspace{0.8em}
      {\sffamily\normalsize \PaperAuthorOne{}\,\orcidlink{\PaperAuthorOneORCID} and \PaperAuthorTwo{}\,\orcidlink{\PaperAuthorTwoORCID}\par}
      \vspace{0.85em}
    \end{center}
    \begin{center}
      \begin{minipage}{0.91\textwidth}
        \sffamily\small
        \noindent\textbf{Abstract\textemdash}\PaperAbstract
        \par\smallskip
        \noindent\textbf{Index Terms\textemdash}\PaperKeywords
      \end{minipage}
    \end{center}
    \vspace{0.5em}
  ]
  \begingroup
    \renewcommand{\thefootnote}{}%
    \footnotetext{%
      \begin{list}{\textbullet}{%
        \setlength{\leftmargin}{1.2em}%
        \setlength{\labelwidth}{0.7em}%
        \setlength{\labelsep}{0.3em}%
        \setlength{\itemindent}{0pt}%
        \setlength{\listparindent}{0pt}%
        \setlength{\topsep}{0pt}%
        \setlength{\parsep}{0pt}%
        \setlength{\itemsep}{0pt}%
      }
        \item Corresponding author: \PaperCorrespondingAuthor{} (\href{mailto:\PaperCorrespondingEmail}{\PaperCorrespondingEmail}).
        \item \PaperAuthorOne{} and \PaperAuthorTwo{} are with \PaperAffiliationAddress.
      \end{list}%
    }%
  \endgroup
}

\begin{document}
\providecommand{\TrainTicketServices}{32}
\providecommand{\TrainTicketEdges}{72}
\providecommand{\TrainTicketTraces}{1157}
\providecommand{\FormalImpactChecks}{132}
\providecommand{\FormalGridCalls}{11}
\providecommand{\FormalSweepCallsLow}{4.7}
\providecommand{\FormalSweepCallsHigh}{5.0}
\providecommand{\BudgetPathCases}{60}
\providecommand{\SolverQualityCases}{60}
\providecommand{\ReplayWorkloads}{6}
\providecommand{\ReplayMaxWidth}{6}
\providecommand{\ReplayHeterogeneousCases}{36}
\providecommand{\ReplayTrainAlphaDPRatio}{0.09}
\providecommand{\ChainLocalMaxGap}{84.30}
\providecommand{\ChainLocalSameAssignment}{43}
\providecommand{\JointChainStagedWorseCases}{117}
\providecommand{\JointChainStagedWorseTotal}{120}
\providecommand{\JointDepthStagedWorseCases}{78}
\providecommand{\JointDepthStagedWorseTotal}{110}
\providecommand{\JointStagedWorseCases}{195}
\providecommand{\JointStagedWorseTotal}{230}
\providecommand{\TrainHRatio}{0.685}
\providecommand{\TrainSRatio}{0.743}
\providecommand{\TrainDomains}{6}
\providecommand{\TrainCountMatch}{80}
\providecommand{\TrainSolverMatch}{80}
\providecommand{\GuardrailDisagreements}{6}
\providecommand{\GuardrailCandidateRecoveries}{5}
\providecommand{\OverlayCompiledRecoveries}{11}
\providecommand{\OverlayComparisons}{12}
\providecommand{\OverlayMaxGap}{5.86}
\providecommand{\ScalingRuntimeFifty}{1.95}
\providecommand{\ScalingRuntimeThousand}{17.03}
\providecommand{\ScalingRuntimeFactor}{8.7}
\providecommand{\RuntimeMachine}{spark (aarch64)}
\providecommand{\RuntimePython}{3.13.14}
\providecommand{\RuntimeNumPy}{2.5.2}
\providecommand{\CriticalStableRows}{4}
\providecommand{\CriticalHUniform}{0.685}
\providecommand{\CriticalHWeighted}{0.711}
\providecommand{\CriticalSUniform}{0.743}
\providecommand{\CriticalSWeighted}{0.780}
\providecommand{\CriticalLooseMaxChange}{0.011}
\providecommand{\BRExactMaxRegret}{1.22}
\providecommand{\SmallGapHeuristicSlack}{0.018}
\providecommand{\SmallGapExactSlack}{0.002}
\providecommand{\SmallGapNearTieBudget}{0.60}
\providecommand{\PQCMeasurementObservations}{12000}
\providecommand{\PQCMeasurementBlocks}{4}
\providecommand{\PQCProfileCount}{3}
\providecommand{\PQCNetworkCount}{5}
\providecommand{\PQCAOneCostLow}{0.0338}
\providecommand{\PQCAOneCostHigh}{15.5815}
\providecommand{\PQCChecksPerPath}{10}
\providecommand{\PQCNTwoEdgePasses}{8}
\providecommand{\PQCNZeroGraphPasses}{7}
\providecommand{\PQCHeadroomChecksPerRule}{20}
\providecommand{\PQCMethodMaxReduction}{36}
\providecommand{\PQCNZeroRiskExactTight}{0.693}
\providecommand{\PQCNZeroRiskExactLoose}{0.638}
\providecommand{\PQCNZeroStagedExact}{0.884}

\newcommand{\PaperTitle}{Optimizing Credential Blast Radius Through Trust Boundaries and Delegation Under Post-Quantum Authentication Costs}
\newcommand{\PaperAuthorOne}{Pauli Taipale}
\newcommand{\PaperAuthorTwo}{Harri Lainio}
\newcommand{\PaperAuthorOneORCID}{0009-0000-0280-3243}
\newcommand{\PaperAuthorTwoORCID}{0009-0009-8562-3955}
\newcommand{\PaperAuthorsMetadata}{Pauli Taipale, Harri Lainio}
\newcommand{\PaperAffiliation}{OP Lab, OP Pohjola, Helsinki, Finland}
\newcommand{\PaperAffiliationAddress}{OP Lab, OP Pohjola, Gebhardinaukio 1, FI-00510 Helsinki, Finland}
\newcommand{\PaperCorrespondingAuthor}{Pauli Taipale}
\newcommand{\PaperCorrespondingEmail}{pauli.taipale@op.fi}
\newcommand{\PaperKeywords}{access control, authentication, graph theory, optimization, public key}

\newcommand{\PaperAbstract}{%
Partitioning interacting services into independently rooted trust domains limits issuer-compromise reach while increasing calls across trust boundaries.
Post-quantum replacements for public-key authentication and key-establishment mechanisms can increase crossing latency on constrained or lossy paths.
We formulate the joint selection of trust domains and credential-derivation structures under policy and latency constraints, linking separate service-interaction and credential-derivation graphs through domain assignment.
Credential blast radius measures weighted service impact after compromise. A linear upper bound supports optimization, while a joint event model gives exact expected impact.
We identify when risk from issuers trusted across domains can be incorporated into this linear score, avoiding separate issuer-propagation calculations for each candidate.
Although the general problem is NP-hard, we identify restricted cases that can be solved efficiently and exactly.
Joint optimization yields lower blast radius than choosing boundaries first in \JointStagedWorseCases{} of \JointStagedWorseTotal{} exhaustive synthetic comparisons, especially under chained delegation.
A trace-derived replay used measured post-quantum costs, synthetic risk inputs, a fixed derivation family, and one to six trust domains. Under independent compromise events, mean expected impact was up to \PQCMethodMaxReduction\% lower than with one domain within the latency budget.
The framework turns risk assumptions and measured crossing costs into candidate trust-domain and credential-derivation designs.
}

\PaperFrontMatter

\section{Introduction}
\PaperIntroStart{} compromise can propagate through two distinct mechanisms: compromise of a domain root exposes every principal in its domain, while compromise of a delegated principal exposes the descendants reachable through credential derivation.
Trust boundaries limit the first mechanism. Derivation structure limits the second.
Zero-trust guidance describes trust and identity controls architecturally.\cite{NIST800207}
We study how to choose trust boundaries and credential-derivation structures jointly when boundary crossings carry a performance cost.

Post-quantum cryptography (PQC) makes this joint decision consequential.
Classical public-key key establishment and signatures are vulnerable to Shor's algorithm, whereas Grover's generic search gives a quadratic quantum speedup against symmetric keys.\cite{Shor1994,Grover1996}
Thus, a 256-bit symmetric key retains approximately 128 bits of key-search strength in the idealized query model. Classical RSA- and elliptic-curve-based boundary mechanisms require replacement.
Post-quantum migration changes key establishment and authentication separately.
ML-KEM can replace or augment classical key establishment, while ML-DSA and SLH-DSA replace signatures used for peer authentication and certificate chains.
On some platforms, implementations of post-quantum key-establishment and signature schemes can match or outperform their classical counterparts, so PQC does not impose a uniform computational slowdown.
Larger public keys, ciphertexts, signatures, and certificates can nevertheless increase handshake byte volume and the number of transport packets or segments, amplifying delay on constrained or lossy paths.\cite{Sosnowski2023PQTLS13,KampanakisChildsKlein2024,Sim2025KpqC}
A boundary that improves containment can therefore consume a measurable latency or bandwidth budget whose dominant source depends on the deployment.
The U.S. National Institute of Standards and Technology (NIST) has standardized several PQC algorithms, while transition guidance gives a timeline for retiring quantum-vulnerable public-key algorithms in NIST standards.\cite{NISTIR8547ipd,FIPS203,FIPS204,FIPS205}

High-rate systems already amortize classical public-key authentication.
Google's Application Layer Transport Security uses resumption and long-lived remote-procedure-call channels to preserve workload authentication while reducing repeated cryptographic work.\cite{Google2017ALTS}
At the constrained-device extreme, smart-card experiments found hybrid post-quantum payment transactions to be dominated by transmitting larger certificate chains over the card interface rather than by cryptographic computation.\cite{BettaleDeOliveiraDottax2022}
Certificate-size reduction efforts for post-quantum HTTPS address the same communication bottleneck.\cite{Google2026MTC}
These cases motivate deciding where independently rooted authentication boundaries justify their path-dependent cost.
Figure~\ref{fig:overview} summarizes the two architecture decisions and the quantities used to evaluate them.

\begin{figure*}[tbp]
  \centering
  \begin{tikzpicture}[>=Latex,x=1cm,y=1cm]
  \node[draw,rounded corners,align=left,inner sep=4.5pt,text width=5.0cm] (inputs1) at (0.9,1.1)
    {\raggedright\textbf{Performance inputs}\\ call edges $E$ with rates $r_{uv}$\\ sensitivities $\ell_{uv}$ and crossing costs $c_{\mathrm{pqc}}(u,v)$};
  \node[draw,rounded corners,align=left,inner sep=4.5pt,text width=5.0cm] (inputs2) at (0.9,-1.1)
    {\raggedright\textbf{Risk inputs}\\ weights $w(v)$ and marginals $p(x)$\\ optional joint event model};

  \node[draw,rounded corners,align=left,inner sep=4.5pt,text width=4.0cm] (domains) at (6.5,0.78)
    {\raggedright\textbf{Domain assignment}\\ $D$};
  \node[draw,rounded corners,align=left,inner sep=4.5pt,text width=4.0cm] (derivations) at (6.5,-0.78)
    {\raggedright\textbf{Derivation trees}\\ $\{H_i\}$};

  \node[draw,rounded corners,align=left,inner sep=4.5pt,text width=4.0cm] (lat) at (12.1,1.1)
    {\raggedright\textbf{Boundary objective}\\ $\mathrm{Lat}(D)$};
  \node[draw,rounded corners,align=left,inner sep=4.5pt,text width=4.0cm] (br) at (12.1,-1.1)
    {\raggedright\textbf{Blast-radius analysis}\\ optimize $\mathrm{BR}_{\mathrm{node}}$\\ evaluate $\mathrm{BR}_{\mathrm{exact}}$ when specified};

  \tikzset{_couplingarrow/.style={->,shorten >=2pt,shorten <=2pt}}
  \draw[_couplingarrow] ([yshift=3mm]inputs1.east) -- ([yshift=3mm]lat.west);
  \draw[_couplingarrow] ([yshift=-3.5mm]inputs2.east) -- ([yshift=-3.5mm]br.west);
  \draw[_couplingarrow] (domains.south) -- node[midway,left=2pt,font=\scriptsize] {defines $V_i$} (derivations.north);
  \draw[_couplingarrow] ([yshift=1.2mm]domains.east) -- ([yshift=-2mm]lat.west);
  \draw[_couplingarrow] (domains.east) -- ++(0.65,0) |- ([yshift=5mm]br.west);
  \draw[_couplingarrow] ([yshift=-1.2mm]derivations.east) -- ([yshift=2mm]br.west);
\end{tikzpicture}
  \caption{\label{fig:overview}
  Two coupled architecture decisions and their evaluation.
  Performance inputs describe service calls and path-dependent crossing costs.
  Risk inputs describe service impact, marginal compromise probabilities, and optional event dependence.
  Domain assignment $D$ defines $V_i=D^{-1}(i)$, the services in domain $i$.
  Credential-derivation tree $H_i$ is built on $V_i$ and determines compromise reach within that domain.
  Rightward arrows show objective dependence.
  Performance inputs and $D$ determine boundary latency $\mathrm{Lat}(D)$.
  Risk inputs, $D$, and $\{H_i\}$ determine the conservative linear blast-radius score $\mathrm{BR}_{\mathrm{node}}$.
  Given a joint compromise model, $\mathrm{BR}_{\mathrm{exact}}$ gives exact expected impacted weight.
  Joint optimization trades boundary latency against compromise reach.}
\end{figure*}

We model operational principals in separate service-interaction and credential-derivation graphs.
The service-interaction graph assigns a calibrated crossing cost to calls that cross trust domains.
Within each domain, a constrained credential-derivation tree determines how compromise of a delegating principal reaches downstream principals.
The design chooses both under policy and latency constraints, including fanout and depth limits.
Domain-root compromise remains confined to its domain.

The tree model does not cover every credential system.
In shared-issuer JSON Web Token (JWT) or OpenID Connect (OIDC) deployments, services in several nominal domains may accept credentials from one issuer, creating compromise paths that bypass the domain roots.
We model these acceptance relationships separately. A structural condition identifies when issuer risk is represented exactly by the additive score. Otherwise, candidate designs must be evaluated by tracing issuer reachability explicitly.

This work contributes:
\begin{itemize}
\item A two-layer optimization model that chooses trust domains and credential-derivation trees under policy and latency constraints. The optimizer uses a conservative linear objective. When a joint compromise model is available, candidates are evaluated by exact expected impact.
\item Hardness and tractable regimes for the model, including a condition for exact additive scoring of shared issuers. The solver map distinguishes results proved for this formulation from guarantees inherited from established optimization algorithms.
\item An operational evaluation that measures named Transport Layer Security (TLS) profiles across network paths, applies the resulting costs to trace-derived optimization, and validates selected crossings on held-out measurements. It also tests whether additive shared-issuer scoring agrees with explicit issuer-reachability evaluation.
\end{itemize}
Given explicit workload, compromise, criticality, and credential-semantics assumptions, the planning method returns containment--performance tradeoffs and states which guarantees apply.
Proofs and supporting experiments appear in the supplementary material.

\section{Background and threat model}\label{sec:background}
A trust domain is a set of principals that derive session keys or credentials from a common issuer, signing key, or key-management root $\rho_i$.
We study key-establishment domains: after establishment, intra-domain payload interactions use symmetric protection or already-issued delegated credentials, whereas a crossing connects independently rooted credential systems and invokes public-key authentication or key establishment.
This is a deployment assumption rather than a universal definition of a trust domain.
It need not coincide with a subnet or administrative boundary: one network segment can contain several isolated issuers, while separate segments can still trust one shared issuer.
We therefore distinguish the service-interaction graph, the trust-domain assignment, and credential propagation.

We consider hybrid architectures that replace the quantum-vulnerable public-key mechanism at these crossings with PQC while retaining symmetric protection and derived credentials within domains.
A compromised domain root can mint for its entire domain.
A compromised service or delegator can impersonate the descendants enabled by its credential authority.
Boundary placement controls the former reach. Derivation design controls the latter.
Services holding identities with different scopes can be split into separate principals before optimization, as described in Supplementary Sec.~\ref{SIsup:deployment}.

Compromise probabilities $p(x)$ are marginal scenario inputs over a fixed planning horizon $T$, a chosen time interval such as a detection or rotation window. Service weights $w(v)$ encode impact.
Here compromise means credential theft or loss of control over credential authority, not quantum cryptanalysis. PQC enters through the cost of replacing quantum-vulnerable boundary mechanisms.
We assume that verifiers and boundary enforcement follow the modeled policy.
Verifier compromise, software supply-chain compromise, denial of service, and non-credential lateral movement are outside scope.
The model is conditional on its $(p,w)$ scenario and does not estimate compromise probabilities.

\section{Formal model}\label{sec:model}
Let $G=(V,E)$ be the directed service-interaction graph, where $V$ is the set of services or principals and $(u,v)\in E$ means that $u$ calls $v$.
Each edge has an interaction rate $r_{uv}\ge0$ and system-level latency sensitivity $\ell_{uv}\ge0$.
The interaction rate, latency sensitivity, and calibrated crossing cost together determine boundary-authentication overhead rather than payload-cryptography work.
Policy-forbidden calls are removed or assigned zero weight.
A design uses $k\ge1$ nonempty trust domains.
A domain assignment
\[
D:V\rightarrow\{1,\ldots,k\}
\]
induces $V_i=\{v:D(v)=i\}$. An interaction crosses a boundary when $D(u)\ne D(v)$.

Credential derivation in domain $i$ is
\begin{equation}
H_i=(V_i\cup\{\rho_i\},A_i),
\end{equation}
where virtual root $\rho_i$ represents the issuer or master secret and $A_i$ is the set of directed derivation arcs. The relation $(x,y)\in A_i$ means compromise of $x$ enables credentials for $y$.
The core model restricts $H_i$ to a rooted directed arborescence: every service has exactly one incoming derivation arc and is reachable from $\rho_i$.
This captures direct issuance, delegation chains, and shallow subissuer layouts. Because each service has one parent, every root-to-service compromise path is explicit.
The separate acceptance model below handles issuer entry points not represented by these roots.

\section{Problem definition}\label{sec:problem}
Let $\mathcal A^{\mathrm{allow}}\subseteq (V\cup\{\rho_1,\ldots,\rho_k\})\times V$ denote the permitted derivation arcs. When this relation is omitted, every root-to-service arc and every service-to-service arc between distinct vertices is eligible.
The inputs are $G$, $r$, $\ell$, service weights $w$, compromise probabilities $p$, domain count $k$, derivation limits $(\Delta,h)$, $\mathcal A^{\mathrm{allow}}$, optional domain-policy constraints, and the effective crossing-cost function $c_{\mathrm{pqc}}$ defined in Sec.~\ref{sec:latency}.
Service weights satisfy $w(v)\ge0$, compromise probabilities satisfy $p(x)\in[0,1]$, $\Delta$ is a positive integer, and $h$ is a nonnegative integer.
The decision variables are $D$ and $\{H_i\}$.
A design is feasible, written $(D,\{H_i\})\in\mathcal{F}$, when it satisfies:
\begin{description}
\item[(C1)] \textbf{Partition.} Every service belongs to exactly one nonempty domain.
\item[(C2)] \textbf{Policy.} Required must-link pairs share a domain and cannot-link pairs do not.
\item[(C3)] \textbf{Derivation.} Each $H_i$ is a rooted directed arborescence spanning $V_i$, with $A_i\subseteq\mathcal A^{\mathrm{allow}}$.
\item[(C4)] \textbf{Operational limits.} Every derivation out-degree is at most $\Delta$ and every root-to-leaf depth is at most $h$.
\end{description}
In particular, (C3)--(C4) imply
\begin{equation}
|V_i|\le
\begin{cases}
h, & \Delta=1,\\
\sum_{d=1}^{h}\Delta^d, & \Delta>1.
\end{cases}
\label{eq:capacity}
\end{equation}
This condition is necessary under (C3)--(C4). It is also sufficient when derivation eligibility is complete and no additional policy constraint applies.
Capacity and latency constrain $k$ jointly with $\Delta$ and $h$.
A heterogeneous root risk must be attached to a concrete issuer label, and policy determines which services may use that issuer.
Interchangeable roots use homogeneous risk, whose root contribution is assignment-independent (Supplementary Sec.~\ref{SIsup:risk_summaries}).

\section{Blast-radius objective}\label{sec:br_objectives}
For $S\subseteq V_i\cup\{\rho_i\}$, let $\mathrm{Reach}_{H_i}(S)$ be the services reachable from $S$ in $H_i$ and define deterministic impact as
\[
\mathrm{BR}_i(S,H_i)=\sum_{v\in\mathrm{Reach}_{H_i}(S)}w(v).
\]
Our primary objective sums each compromise point's probability-weighted impact:
\begin{equation}
\mathrm{BR}_{\mathrm{node}}(D,\{H_i\})
=
\sum_{i=1}^{k}\sum_{x\in V_i\cup\{\rho_i\}}
p(x)\,\mathrm{BR}_i(\{x\},H_i).
\label{eq:br_node}
\end{equation}
A service reachable from several compromise points contributes once for each point.
For an arborescence, let $\mathrm{Anc}_{H_i}(v)$ be the vertices on the root-to-$v$ path, including both $\rho_i$ and $v$.
Swapping the sums gives
\begin{equation}
\mathrm{BR}_{\mathrm{node}}
=
\sum_{i=1}^{k}\sum_{v\in V_i}w(v)
\sum_{x\in\mathrm{Anc}_{H_i}(v)}p(x).
\label{eq:br_ancestor}
\end{equation}
This form makes the design pressure explicit: high-probability principals should control little descendant weight.

Give each derivation arc entering a non-root vertex $x$ length $p(x)$, and let $d_{H_i}(\rho_i,v)$ be the resulting root-to-$v$ distance.
Then Eq.~(\ref{eq:br_ancestor}) is equivalently
\begin{equation}
\mathrm{BR}_{\mathrm{node}}
=
\sum_{i=1}^{k}\left[
p(\rho_i)W(V_i)+\sum_{v\in V_i}w(v)d_{H_i}(\rho_i,v)
\right],
\label{eq:br_root_distance}
\end{equation}
where $W(S)=\sum_{v\in S}w(v)$.
For a fixed service set $S$, let $\mathcal H_i(S)$ be the derivation trees on $S\cup\{\rho_i\}$ that satisfy (C3)--(C4), including $A_i\subseteq\mathcal A^{\mathrm{allow}}$, and the applicable domain-policy constraints.
Define $\Phi_i(S)$ as the minimum bracketed term over this family, with $\Phi_i(S)=+\infty$ when the family is empty.
For fixed $D$, each domain can attain $\Phi_i(V_i)$ independently. The outer assignment remains coupled because it determines both $V_i$ and the interaction cut.

Let $E_x$ denote compromise of $x$.
The exact expected impacted weight is
\begin{equation}
\mathrm{BR}_{\mathrm{exact}}
=
\sum_{i=1}^{k}\sum_{v\in V_i}w(v)
\Pr\!\left(\bigcup_{x\in\mathrm{Anc}_{H_i}(v)}E_x\right).
\label{eq:br_exact}
\end{equation}
The union bound gives $\mathrm{BR}_{\mathrm{exact}}\le\mathrm{BR}_{\mathrm{node}}$ under any dependence structure.
Under independence the probability is $1-\prod_x(1-p(x))$, so the bound is first-order tight when probabilities are small.
We optimize the conservative linear form because it avoids an assumed dependence model and preserves additive structure. When probabilities are large or strongly dependent, candidates should be rescored under an explicit joint model.

Worst-case single-compromise and issuer-only summaries are diagnostics rather than additional propagation models.
Alternative risk summaries appear in Supplementary Sec.~\ref{SIsup:risk_summaries}. A surrogate-regret check appears in Supplementary Table~\ref{SItab:br_exact_regret_stress}.

\section{Latency cost}\label{sec:latency}
For each interaction $(u,v)$, let $r_{uv}$ be its rate, $\ell_{uv}$ its system-level latency sensitivity, and $c_{\mathrm{pqc}}(u,v)$ the effective cost of invoking the boundary mechanism.
When latency is expressed in ms/request, $r_{uv}$ counts calls per top-level request, $\ell_{uv}$ is dimensionless, and $c_{\mathrm{pqc}}(u,v)$ is in milliseconds per crossing call.
For a predicate $P$, $\mathbf{1}[P]$ is $1$ when $P$ is true and $0$ otherwise.
The boundary objective is
\begin{equation}
  \mathrm{Lat}(D)
  =
  \sum_{(u,v)\in E} r_{uv}\ell_{uv}c_{\mathrm{pqc}}(u,v)
  \mathbf{1}[D(u)\ne D(v)].
  \label{eq:lat_simple}
\end{equation}
Equation~(\ref{eq:lat_simple}) is a linearized boundary-overhead proxy, not a general end-to-end or tail-latency model.
The formal results require only nonnegative edge coefficients and do not depend on PQC specifically. PQC supplies the motivating calibration regime.
Accordingly, $c_{\mathrm{pqc}}(u,v)$ is a deployment-specific coefficient.
It is a measured or estimated profile for a selected cryptographic suite, protocol stack, hardware platform, network path, and reuse policy, and different edges may use different profiles.
The product $r_{uv}c_{\mathrm{pqc}}(u,v)$ represents key-establishment or authentication work after session reuse, credential caching, and key-update policy.
Represent reuse through either an event-equivalent rate or an amortized per-interaction cost, not both. The cost may include key establishment, signature verification, parsing, proxy work, and communication.
Since the indicator is symmetric, antiparallel directed interactions can be summed when an undirected cut or labeling algorithm is used.

\begin{figure}[tbp]
  \centering
  \begin{tikzpicture}[>=Latex]
  \node[svc] (a) {$a$};
  \node[svc,below=10mm of a] (b) {$b$};
  \node[svc,below=10mm of b] (c) {$c$};

  \node[svc,right=34mm of a] (d) {$d$};
  \node[svc,below=10mm of d] (e) {$e$};
  \node[svc,below=10mm of e] (f) {$f$};

  \node[domainbox,fit=(a)(b)(c),label=above:{Domain 1}] (box1) {};
  \node[domainbox,fit=(d)(e)(f),label=above:{Domain 2}] (box2) {};

  \draw[intedge] (a) -- (b);
  \draw[intedge] (b) -- (c);
  \draw[boundedge] (c) -- (d);
  \draw[boundedge] (b) -- (e);
  \draw[intedge] (e) -- (f);

  \node[
    anchor=west,
    draw,
    rounded corners,
    fill=white,
    inner sep=3pt,
    font=\scriptsize,
    align=left,
  ] (leg) at ($(box1.south)!0.5!(box2.south)+(0,-8mm)$) {%
    \tikz{\draw[intedge] (0,0) -- (7mm,0);} \; interaction edge ($E$)\\
    \tikz{\draw[boundedge] (0,0) -- (7mm,0);} \; boundary edge ($D(u)\ne D(v)$)%
  };
\end{tikzpicture}
  \caption{\label{fig:running_example_main}
  The two red service-call arrows cross the fixed domains, and their $r_{uv}\ell_{uv}c_{\mathrm{pqc}}(u,v)$ contributions sum to $\mathrm{Lat}(D)=220$ in the unit-normalized toy calibration. Blue arrows remain within a domain.
  Supplementary Sec.~\ref{SIsec:worked_example} gives the derivation-tree comparison at this fixed cut.}
\end{figure}
At this fixed cut, raising the fanout bound from $\Delta=2$ to $3$ permits direct root issuance and reduces $\mathrm{BR}_{\mathrm{node}}$ from $0.06$ to $0.03$ without changing the boundary latency.

Stateful stacks can be retained without changing the optimizer by calibrating
\begin{equation}
c_{\mathrm{eff}}(u,v)
=
c^{\mathrm{warm}}_{uv}
+
P^{\mathrm{cold}}_{uv}c^{\mathrm{cold}}_{uv},
\end{equation}
where $c^{\mathrm{warm}}_{uv}$ is the reused-session cost, $c^{\mathrm{cold}}_{uv}$ is the additional setup penalty, and $P^{\mathrm{cold}}_{uv}$ is the probability of incurring that penalty under the workload and reuse policy.
Request-class mixtures and critical-path weights similarly alter edge coefficients. Supplementary Sec.~\ref{SIsup:deployment} gives these extensions.

Our named TLS measurements evaluate three cryptographic profiles on five controlled network paths.
Under mutual authentication, 20 requests per connection, and a $0.05$ full-handshake probability, the hybrid X25519+ML-KEM-768/ML-DSA-65 profile spans \PQCAOneCostLow--\PQCAOneCostHigh\,ms per crossing call from the unshaped local path to the constrained lossy path.
These values are architecture inputs rather than algorithm rankings: changing the implementation, platform, path, or reuse policy changes the edge weights.
Supplementary Table~\ref{SItab:pqc_named_profile} gives the profiles, path conditions, and measured costs.
\section{Optimization problem}\label{sec:optimization}
The design object is $(D,\{H_i\})\in\mathcal{F}$, and its Pareto objective is
\begin{equation}
\min_{(D,\{H_i\})\in\mathcal{F}}
\bigl(\mathrm{Lat}(D),\mathrm{BR}_{\mathrm{node}}(D,\{H_i\})\bigr).
\end{equation}
Boundary placement changes root exposure and boundary latency. The derivation structures determine delegated-principal propagation inside each domain.
Fixing $D$ separates the derivation-tree best responses. A latency-first optimization can miss the optimum because the assignment changes $\Phi_i(V_i)$ and can change feasibility through $(\Delta,h)$.
Supplementary Sec.~\ref{SIsec:worked_example} gives a three-service counterexample.

For architecture decisions we use a latency budget $B$:
\begin{equation}
\begin{aligned}
\min_{(D,\{H_i\})\in\mathcal{F}}\quad&
\mathrm{BR}_{\mathrm{node}}(D,\{H_i\})\\
\text{s.t.}\quad&\mathrm{Lat}(D)\le B.
\end{aligned}
\label{eq:budgetopt}
\end{equation}
Varying $B$ traces the containment--latency frontier without mixing units.
Additional linear constraints, such as a boundary-byte cap, can be imposed in the same form.

A raw-unit linear scalarization is
\begin{equation}
\min_{(D,\{H_i\})\in\mathcal{F}}
\mathrm{BR}_{\mathrm{node}}(D,\{H_i\})
+\lambda\,\mathrm{Lat}(D),\qquad \lambda\ge0.
\label{eq:mainopt}
\end{equation}
Here $\lambda$ converts boundary latency to impact units. When latency is measured in ms/request, its units are impact per (ms/request).
To sample tradeoffs independently of reporting units, candidate generation uses
\begin{equation}
\alpha\frac{\mathrm{Lat}(D)}{L_0}
+(1-\alpha)\frac{\mathrm{BR}_{\mathrm{node}}(D,\{H_i\})}{R_0},
\qquad \alpha\in[0,1],
\label{eq:normalized_scalarization}
\end{equation}
where $L_0=\sum_{(u,v)\in E}r_{uv}\ell_{uv}c_{\mathrm{pqc}}(u,v)$ is the cost if every modeled interaction crossed a boundary and $R_0=\sum_{v\in V}w(v)$ is total modeled impact.
If either scale is zero, its identically zero term is omitted.
For positive $L_0,R_0$ and $\alpha<1$, Eq.~(\ref{eq:normalized_scalarization}) is equivalent to Eq.~(\ref{eq:mainopt}) with $\lambda=\alpha R_0/((1-\alpha)L_0)$.
Thus changing milliseconds to microseconds or multiplying all service weights by a common constant leaves candidate ordering unchanged.

Linear scalarization generates supported candidates.
For each budget $B$, the implementation refines retained scalarized and traffic-structured candidates using moves that preserve $\mathrm{Lat}(D)\le B$. It then selects the feasible candidate with minimum blast radius.
This $\varepsilon$-constraint refinement addresses Eq.~(\ref{eq:budgetopt}) and can recover unsupported points. It remains heuristic in the general regimes.

\section{Shared-issuer validation and rescoring}\label{sec:explicit_prop_setting}
Shared-issuer acceptance adds deployment-specific propagation to the core design problem.
The domain assignment and derivation trees remain the architecture variables, while the observed issuer-minting and verifier-acceptance relations $(M,J)$ determine whether a candidate's blast radius can be evaluated by the core score.
When these relations create additional entry points, final candidate selection must use explicit issuer reachability.

Let $I$ be issuers with compromise probabilities $p(a)$, $E_{\mathrm{auth}}\subseteq E$ protected calls, $M\subseteq V\times I$ caller--issuer minting, and $J\subseteq I\times V$ verifier acceptance.
Issuer events in $I$ are distinct from service and domain-root events. A physical principal occupying several roles is represented by one compromise event, with its reachable targets unioned before scoring.
A redesign remains authenticatable only if
\[
\forall (u,v)\in E_{\mathrm{auth}},\quad
\exists a\in I:\ (u,a)\in M\ \land\ (a,v)\in J.
\]
Compromise of issuer $a$ reaches
\begin{equation}
R_{H,J}(a)=
\bigcup_{(a,v)\in J}
\{u\in V_{D(v)}:v\leadsto u\text{ in }H_{D(v)}\}.
\label{eq:explicit_reach}
\end{equation}
Equation~(\ref{eq:explicit_reach}) conservatively treats successful impersonation at an accepted target $v$ as access to every downstream credential capability represented below $v$ in $H_{D(v)}$.
Because verifier-acceptance relations are fixed deployment inputs rather than optimization variables, feasible redesigns preserve required/allowed acceptance pairs and enforce intended issuer isolation through (C2).

For issuer $a$ and domain $i$, $T_i(a)=\{v\in V_i:(a,v)\in J\}$ is the set of accepted targets.
Define
\[
W_{H_i}(S)=\sum_{u:\exists s\in S,\ s\leadsto u\text{ in }H_i}w(u).
\]
Issuer $a$ contributes $p(a)W_{H_i}(T_i(a))$ in domain $i$.
The full first-order score is therefore
\begin{equation}
\begin{aligned}
&\mathrm{BR}_{\mathrm{explicit}}(D,\{H_i\},J)\\
&\quad=\mathrm{BR}_{\mathrm{node}}(D,\{H_i\})\\
&\qquad+\sum_{a\in I}p(a)\sum_{i=1}^{k}W_{H_i}(T_i(a)).
\end{aligned}
\label{eq:explicit_br}
\end{equation}
An additive alternative assigns issuer risk separately to each accepted target and contributes
$p(a)\sum_{v\in T_i(a)}W_{H_i}(\{v\})$.
We call the result the additive issuer score. It can count the same descendant through several accepted targets.
Such overlap is absent when the targets form an antichain, meaning that no accepted target is an ancestor of another.
A star refers to the derivation arborescence $H_i$, where $\rho_i$ is the parent of every service. The interaction graph $G$ remains arbitrary.

\begin{proposition}[Additive issuer-score upper bound]\label{prop:explicit_compiled_upper_bound}
For fixed $(D,\{H_i\})$ and nonnegative $p,w$, the additive issuer score upper-bounds the first-order explicit issuer-reachability score.
Equality holds if, for every issuer and domain, the accepted targets form an antichain in $H_i$.
\end{proposition}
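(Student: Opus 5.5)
The plan is to compare the two issuer terms issuer by issuer and domain by domain. The core term $\mathrm{BR}_{\mathrm{node}}(D,\{H_i\})$ is common to $\mathrm{BR}_{\mathrm{explicit}}$ and to the additive issuer score. It therefore suffices to show, for each fixed issuer $a\in I$ and domain $i$, that
\[
W_{H_i}(T_i(a))\le\sum_{v\in T_i(a)}W_{H_i}(\{v\}),
\]
with equality when $T_i(a)$ is an antichain in $H_i$. Multiplying by $p(a)\ge0$ and summing over $i$ and $a$ then yields both claims, since sums of nonnegatively weighted inequalities preserve direction and equalities.

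For the inequality, I will write $\mathrm{Desc}(v)=\{u:v\leadsto u\text{ in }H_i\}$, including $v$ itself. This gives $W_{H_i}(S)=\sum_{u\in\bigcup_{s\in S}\mathrm{Desc}(s)}w(u)$. Rewriting with indicators,
\[
W_{H_i}(S)=\sum_{u\in V_i}w(u)\,\mathbf{1}\Bigl[\textstyle\bigvee_{s\in S}u\in\mathrm{Desc}(s)\Bigr]
\le\sum_{u\in V_i}w(u)\sum_{s\in S}\mathbf{1}[u\in\mathrm{Desc}(s)]
=\sum_{s\in S}W_{H_i}(\{s\}).
\]
The step uses only $w\ge0$ and the fact that an indicator of a disjunction is at most the count of true disjuncts. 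This is the union bound for counting measure.

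For equality, the key step is that in a rooted arborescence the descendant sets of an antichain are pairwise disjoint. Suppose $u\in\mathrm{Desc}(s)\cap\mathrm{Desc}(s')$ with $s\ne s'$. Because every vertex of $H_i$ has exactly one incoming arc, by (C3), the root-to-$u$ path $\mathrm{Anc}_{H_i}(u)$ is unique. Both $s$ and $s'$ lie on it, so one is an ancestor of the other, which contradicts the antichain property. Consequently, each $u$ lies in at most one $\mathrm{Desc}(s)$, the indicator inequality is tight termwise, and the equality follows.

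I expect the only real obstacle to be bookkeeping rather than mathematics. First, $T_i(a)$ must be taken as a set, so a target is not double-listed. Second, the reach definition in Eq.~(\ref{eq:explicit_reach}) must coincide with $\sum_i W_{H_i}(T_i(a))$. This holds because the union over $(a,v)\in J$ splits by domain, and the domains $V_i$ are disjoint by (C1). I would state that decomposition explicitly before the per-domain argument.
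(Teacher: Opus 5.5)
Your proposal is correct and takes essentially the same route as the paper. You reduce the claim to a per-issuer, per-domain comparison of $W_{H_i}(T_i(a))$ with $\sum_{v\in T_i(a)}W_{H_i}(\{v\})$, apply the union bound using $w\ge 0$ and $p(a)\ge 0$, and obtain equality from the pairwise disjointness of descendant sets of an antichain in an arborescence. Your unique-root-path argument for that disjointness, and your note that the domain-wise split of Eq.~(\ref{eq:explicit_reach}) relies on the disjoint domains from (C1), spell out steps that the paper only asserts.
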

The bound follows because weighted union size is at most the sum of weighted set sizes. Antichain targets in an arborescence have disjoint descendant sets.
A full proof appears in Supplementary Sec.~\ref{SIapp:proof_explicit_compiled_upper_bound}.

\begin{corollary}[Star-overlay exactness]\label{cor:explicit_star_exact}
For star $H_i$, the additive issuer score is exact for any number of accepted targets.
\end{corollary}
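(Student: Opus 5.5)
The plan is to derive the corollary directly from Proposition~\ref{prop:explicit_compiled_upper_bound} by verifying its antichain hypothesis for a star. Fix an issuer $a$ and a domain $i$ whose derivation tree $H_i$ is a star, so $\rho_i$ is the parent of every service in $V_i$ and every service is a leaf. The accepted targets satisfy $T_i(a)\subseteq V_i$, so they exclude the virtual root $\rho_i$.

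The key step is to show that $T_i(a)$ is an antichain. In a star, the ancestors of a service $v$ are exactly $\mathrm{Anc}_{H_i}(v)=\{\rho_i,v\}$, so no service other than $v$ itself is an ancestor of $v$. Hence no accepted target is an ancestor of a distinct accepted target, whatever the size of $T_i(a)$. As an independent check, I would note that $v\leadsto u$ in a star forces $u=v$. This gives $W_{H_i}(\{v\})=w(v)$ and
\[
W_{H_i}(T_i(a))=\sum_{v\in T_i(a)}w(v)=\sum_{v\in T_i(a)}W_{H_i}(\{v\}),
\]
so the per-issuer, per-domain explicit and additive contributions coincide without even invoking the proposition.

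Finally, I multiply by $p(a)$ and sum over issuers and domains. The two scores share the $\mathrm{BR}_{\mathrm{node}}$ term in Eq.~(\ref{eq:explicit_br}), so they are equal. I would state the result domain by domain, and read the corollary as applying whenever every domain in which some issuer has accepted targets is a star. If only some $H_i$ are stars, exactness holds for those domains' contributions.

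There is no real obstacle here. The only care needed is to confirm that $\rho_i$ is never an accepted target, since $J\subseteq I\times V$, because a root target would break the antichain property. I would also note that "any number of accepted targets" needs no separate argument, since the leaf property holds pairwise.
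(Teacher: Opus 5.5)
Your proposal is correct and follows the paper's route: the paper also notes that in a star each service's only service descendant is itself, so the accepted targets form an antichain and the equality case of Proposition~\ref{prop:explicit_compiled_upper_bound} applies. Your direct computation $W_{H_i}(T_i(a))=\sum_{v\in T_i(a)}w(v)=\sum_{v\in T_i(a)}W_{H_i}(\{v\})$ is the same observation written out explicitly.
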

\begin{corollary}[Chain-overlay exactness]\label{prop:restricted_chain_overlay_exact}
For chain $H_i$, the additive issuer score is exact when each issuer has at most one accepted target per domain, using
\[
p_{\mathrm{eff}}(v)=p(v)+\sum_{a:(a,v)\in J}p(a).
\]
The objectives therefore agree on any candidate family satisfying this condition.
\end{corollary}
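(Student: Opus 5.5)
The plan is to reduce the chain case to the antichain criterion of Proposition~\ref{prop:explicit_compiled_upper_bound} and then to check that the effective-probability form reproduces the explicit score term by term. Fix $(D,\{H_i\})$ with each $H_i$ a chain $\rho_i\to v_1\to\cdots\to v_m$. Suppose every issuer $a$ has $|T_i(a)|\le 1$ in every domain $i$. A set with at most one element is trivially an antichain in $H_i$, so by Proposition~\ref{prop:explicit_compiled_upper_bound} the additive issuer score equals the first-order explicit issuer term:
\[
\sum_{a}p(a)\sum_i W_{H_i}(T_i(a)).
\]
The hypothesis is needed because in a chain any two distinct vertices are comparable. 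With two targets $v_s,v_t$, $s<t$, the descendant set of $v_t$ sits inside that of $v_s$, and the additive score would double-count $W_{H_i}(\{v_t\})$. This explains why the condition is "at most one target" rather than the general antichain condition.

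Next I would show that the sum $\mathrm{BR}_{\mathrm{node}}$ plus the additive issuer score equals $\mathrm{BR}_{\mathrm{node}}$ recomputed with $p$ replaced by $p_{\mathrm{eff}}$. The issuer term, exchanged in order of summation, is
\[
\sum_i\sum_{v\in V_i}W_{H_i}(\{v\})\sum_{a:(a,v)\in J}p(a).
\]
Here $W_{H_i}(\{v\})=\mathrm{BR}_i(\{v\},H_i)$ is the reach weight of $v$, since reachability includes $v$ itself. The service part of Eq.~(\ref{eq:br_node}) is $\sum_v p(v)\mathrm{BR}_i(\{v\},H_i)$. Adding the two gives $\sum_v p_{\mathrm{eff}}(v)\mathrm{BR}_i(\{v\},H_i)$, and the root terms are unchanged. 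So the explicit score equals $\mathrm{BR}_{\mathrm{node}}$ under $p_{\mathrm{eff}}$. Equivalently, via Eq.~(\ref{eq:br_root_distance}), each arc into $v$ receives length $p_{\mathrm{eff}}(v)$.

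The final sentence of the corollary follows pointwise. For every candidate in a family satisfying the condition, the two objectives coincide. They therefore induce the same ordering and the same minimizers on that family.

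The main obstacle is bookkeeping rather than depth. First, I would confirm that issuer events are distinct from service and root events. This guarantees that adding $p(a)$ into $p_{\mathrm{eff}}(v)$ does not double-count any event already present in $p(v)$, which the model guarantees. Second, I would confirm that the modeling convention treating a physical principal with several roles as one event is consistent with the first-order (union-bound) score. If a principal were both an issuer and a service, I would first merge the two into one event as the paper prescribes, and then apply the same argument.
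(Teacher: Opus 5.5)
Your proposal is correct and follows the paper's argument. A singleton target set is trivially an antichain, so Proposition~\ref{prop:explicit_compiled_upper_bound} gives equality. Each accepted target then contributes $p(a)W_{H_i}(\{v\})$, which the paper writes as $p(a)\sum_{t=j}^{m}w(v_t)$ for a target at chain position $j$. That is exactly what raising $p(v)$ to $p_{\mathrm{eff}}(v)$ adds to $\mathrm{BR}_{\mathrm{node}}$, and your summation exchange just writes this bookkeeping in slightly more general form.

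One small caution concerns your closing remark about merging a multi-role principal into one event. That case lies outside the statement, because Eq.~(\ref{eq:explicit_br}) keeps issuer events separate from service events. The argument also would not carry over verbatim there: if the principal and its accepted target lie in the same chain, their descendant sets are nested. The unioned merged event would then count the overlap once, while the $p_{\mathrm{eff}}$ sum would count it twice.
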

Both specializations follow in the proof of Proposition~\ref{prop:explicit_compiled_upper_bound} in Supplementary Sec.~\ref{SIapp:proof_explicit_compiled_upper_bound}.
For a fixed deeper tree, removing any accepted target reachable from another accepted target leaves explicit reach unchanged.
This deduplication depends on $H_i$ and cannot be represented by a candidate-independent transformation to $p_{\mathrm{eff}}$.
If accepted targets overlap by ancestry, the additive issuer score is only a screening upper bound and final selection requires explicit scoring.

\section{Complexity and structural regimes}\label{sec:complexity}
The joint problem ranges from standard cut primitives to coupled NP-hard cases. Write $n=|V|$.
The scalarized direct-issuance $k=2$ slice is a minimum cut. For $k\ge3$, multiway-cut variants are NP-hard.\cite{Dahlhaus1992MultiwayCuts}
Derivation constraints can preserve hardness even when boundary latency vanishes.

\begin{theorem}[Coupled NP-hardness (chains)]\label{thm:coupled_hard_chain}
Let $E=\emptyset$, impose $(\Delta,h)=(1,|V|)$ with complete derivation eligibility, and set $p(\rho_i)=0$ for every domain.
Minimizing $\mathrm{BR}_{\mathrm{node}}$ over feasible designs is NP-hard for every fixed $k\ge3$, and strongly NP-hard when $k$ is part of the input.
\end{theorem}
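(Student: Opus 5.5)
The plan is to exploit the fact that with $E=\emptyset$ the latency term vanishes. With $\Delta=1$ and $h=|V|$, each $H_i$ is a path $\rho_i\to v_1\to\cdots\to v_{|V_i|}$, and the depth bound never binds. The problem therefore becomes a pure partition-plus-ordering problem. Since $p(\rho_i)=0$, Eq.~(\ref{eq:br_node}) gives the domain-$i$ contribution as $\sum_{x\in V_i}p(x)\,W(\{x\}\cup\text{desc}(x))$. I will first analyse the best response $\Phi_i(S)$ for a fixed set $S$, then choose parameters that make it depend only on an aggregate of $S$.

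\textbf{Step 1 (exchange argument).} Take two adjacent chain vertices, $x$ immediately above $y$. Exchanging them changes the objective by $p(y)w(x)-p(x)w(y)$, and leaves everything else unchanged. This is Smith's rule: sort by $p/w$ ascending.

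I then specialise to $w(x)=a_x$ and $p(x)=a_x/M$, where the $a_x$ are positive integers and $M=\sum_x a_x$, so that $p\in[0,1]$. All ratios $p/w$ are then equal, so every chain order on $S$ has the same cost. That cost is
\[
\frac1M\sum_{x\preceq y}a_xa_y=\frac{1}{2M}\Bigl(A(S)^2+\sum_{x\in S}a_x^2\Bigr),
\qquad A(S)=\sum_{x\in S}a_x .
\]
Summing over domains, $\min\mathrm{BR}_{\mathrm{node}}=\frac{1}{2M}\bigl(\sum_i A(V_i)^2+\sum_x a_x^2\bigr)$. Minimising the objective is thus equivalent to minimising $\sum_i A(V_i)^2$ over partitions into $k$ nonempty parts. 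By convexity, this sum is at least $M^2/k$, with equality iff every part sums to $M/k$.

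\textbf{Step 2 (fixed $k\ge3$).} I reduce from \textsc{Partition} with integers $b_1,\dots,b_n$ of total $2T$. I add $k-2$ items of value $T$, so $M=kT$, and ask whether the optimum equals $\frac{1}{2M}\bigl(kT^2+\sum a_x^2\bigr)$, a rational threshold computable in polynomial time.

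If a partition exists, put each $T$-item alone in its own domain and the two halves in the remaining two domains. Conversely, suppose all parts sum to exactly $T$. A part containing a $T$-item cannot contain anything else, since all values are positive. The $k-2$ $T$-items therefore occupy $k-2$ parts, and the remaining two parts give a partition of the $b_j$. Nonemptiness (C1) holds automatically because every part has sum $T>0$.

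\textbf{Step 3 (strong hardness, $k$ in the input).} I reduce from \textsc{3-Partition} with $k=m$ groups, bound $B$, and $B/4<a_x<B/2$. The equal-sum threshold is again $\frac{1}{2M}\bigl(mB^2+\sum a_x^2\bigr)$. All numbers in the instance ($w$, the numerators of $p$, and the common denominator $M$) are polynomial in the unary size, so the reduction is pseudo-polynomial-preserving. Hence strong NP-hardness follows.

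\textbf{Main obstacle.} I expect the delicate step to be Step 1: verifying that the chain cost for equal ratios collapses exactly to the symmetric quadratic form, including the self-term $p(x)w(x)$ coming from $x\in\mathrm{Anc}(x)$. I also need to confirm that $p(\rho_i)=0$ removes any assignment-dependent root term. Everything after that is standard convexity plus the two classical reductions.
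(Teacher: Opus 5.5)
Your proof is correct, but it takes a different route from the paper's.

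\textbf{The paper's route.} The paper reduces directly from $P_k\mathbin{\|}\sum_j w_jC_j$ with arbitrary weights and processing times. It sets $w(v_j)=w_j$ and $p(v_j)=t_j/T$ with $T=1+\sum_j t_j$, and notes that the ancestor sum of $v_j$ along its chain is exactly $C_j/T$. Hence $\mathrm{BR}_{\mathrm{node}}=\frac1T\sum_j w_jC_j$ for \emph{every} design. It then inherits both the fixed-$k$ and the strong NP-hardness from the scheduling literature. A separate idle-machine argument (move the last job of a machine with at least two jobs onto an idle one) reconciles (C1) with schedules that leave machines unused.

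\textbf{Your route.} You restrict to the subfamily $p\propto w$. There every chain order ties and the objective collapses to $\frac{1}{2M}\bigl(\sum_i A(V_i)^2+\sum_x a_x^2\bigr)$. You then reduce directly from Partition, padded with $k-2$ items of value $T$, and from 3-Partition. In effect you inline the standard $w_j=t_j$ hardness argument for the scheduling problem.

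\textbf{What each buys.} The paper's route is shorter. It also gives an exact objective-preserving equivalence on this slice, so algorithmic facts about the scheduling problem transfer as well, such as ratio ordering on each machine and the cited PTAS. Your route has four advantages:
\begin{itemize}
\item It is self-contained.
\item It proves the fixed-$k$ case explicitly rather than relying on the known hardness of $P_k\mathbin{\|}\sum_j w_jC_j$.
\item It handles (C1) for free, because the equality case forces every part to have positive sum ($T$ or $B$).
\item It gives a slightly sharper conclusion: hardness persists even when within-domain ordering is irrelevant, so all the difficulty lies in the domain assignment.
\end{itemize}
Your exchange argument in Step 1 is not actually needed, since the symmetric identity already shows that the cost does not depend on the order.
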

The reduction is from $P_k\mathbin{\|}\sum_jw_jC_j$, where $C_j$ is job completion time: domains are machines, chain order is job order, and $p(v_j)$ is a scaled processing time.
The full proof appears in Supplementary Sec.~\ref{SIapp:proof_coupled_hard_chain}.

Under direct issuance, each root issues credentials directly to its domain's services, so each $H_i$ is a star. The assignment-dependent risk is then the root-risk contribution, which gives the unary costs in cut and labeling formulations.
An anchor is a service fixed to a specified domain label.
\begin{proposition}[Two-domain min-cut regime]\label{prop:coupled_min_cut_star}
Consider Eq.~(\ref{eq:mainopt}) with $k=2$ and label-specific root probabilities independent of $D$. Assume every root-to-service arc is eligible, $\Delta\ge |V|-1$, and $h\ge1$. No additional domain-policy constraints apply.
Exact optimization under (C1) is the minimum over ordered anchor pairs of $s$--$t$ cuts with service-label costs $w(v)p(\rho_i)$ and edge-disagreement costs $\lambda r_{uv}\ell_{uv}c_{\mathrm{pqc}}(u,v)$, and remains polynomial time.
\end{proposition}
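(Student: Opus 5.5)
The plan is to reduce the joint problem to a pure labeling problem by showing that, under the stated hypotheses, the inner derivation-tree optimization always returns a star. Then I would encode the resulting binary labeling objective, with nonemptiness enforced by anchors, as a family of $s$--$t$ cuts.

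\textbf{Step 1 (inner best response is the star).} Fix any $D$ with both $V_1,V_2$ nonempty. By Eq.~(\ref{eq:br_root_distance}), domain $i$ contributes $p(\rho_i)W(V_i)+\sum_{v\in V_i}w(v)d_{H_i}(\rho_i,v)$. Since $p,w\ge0$, the second sum is nonnegative, so $\Phi_i(V_i)\ge p(\rho_i)W(V_i)$.

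The star on $V_i\cup\{\rho_i\}$ attains this bound, because every root-to-service path has no non-root internal vertex and hence $d_{H_i}(\rho_i,v)=0$. The star is also feasible:
\begin{itemize}
\item every root-to-service arc is eligible, so (C3) holds;
\item its depth is $1\le h$;
\item its root out-degree is $|V_i|\le|V|-1\le\Delta$, because the other domain is nonempty, so (C4) holds.
\end{itemize}
No further policy applies, so (C2) is vacuous. Hence $\Phi_i(V_i)=p(\rho_i)W(V_i)$ for each $i$. Since the root probabilities are label-specific and independent of $D$, the objective in Eq.~(\ref{eq:mainopt}) becomes
\[
F(D)=\sum_{v\in V}w(v)p(\rho_{D(v)})+\sum_{(u,v)\in E}\lambda r_{uv}\ell_{uv}c_{\mathrm{pqc}}(u,v)\mathbf{1}[D(u)\ne D(v)].
\]
This is a two-label energy with unary terms and nonnegative Potts pairwise terms, minimized over labelings in which both labels are used.

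\textbf{Step 2 (cut encoding).} Build the standard network on $V\cup\{s,t\}$, with $s$ representing label $1$ and $t$ representing label $2$:
\begin{itemize}
\item an arc $s\to v$ with capacity $w(v)p(\rho_2)$, which is cut exactly when $v$ is placed on the $t$ side;
\item an arc $v\to t$ with capacity $w(v)p(\rho_1)$;
\item for each interaction edge, capacity $\lambda r_{uv}\ell_{uv}c_{\mathrm{pqc}}(u,v)$ in both directions, or summed over antiparallel pairs as noted in Sec.~\ref{sec:latency}.
\end{itemize}
All capacities are nonnegative, so cut value equals $F(D)$ for the induced labeling.

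\textbf{Step 3 (nonemptiness via anchors).} This is the step I expect needs the most care, because an unconstrained minimum cut may put every service on one side, violating (C1) and also invalidating the degree argument of Step 1. For each ordered pair $(a,b)$ of distinct services, anchor $a$ to label $1$ and $b$ to label $2$ by adding infinite-capacity arcs $s\to a$ and $b\to t$. Then compute a minimum cut.

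Every cut with finite value in the anchored network yields a labeling with both domains nonempty, so it is feasible with value $F(D)$. Conversely, every feasible $D$ contains some ordered pair with $D(a)=1$ and $D(b)=2$, so it is admissible for that pair's network. Pairs must be ordered because $p(\rho_1)$ and $p(\rho_2)$ may differ. Therefore the minimum over all $n(n-1)$ anchored cuts equals the exact optimum, and the optimal cut together with star trees recovers an optimal design.

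\textbf{Conclusion.} The procedure uses $O(n^2)$ max-flow computations, each polynomial, so the whole method runs in polynomial time. The assumption $n\ge2$ is implicit, since otherwise no feasible design with $k=2$ exists.
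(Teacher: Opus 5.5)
\textbf{Same route as the paper, but Step 1 contains a false claim.} Your overall approach matches the paper's proof. You reduce each domain's best response to the star, then encode the two-label energy as an $s$--$t$ cut: arcs $s\to v$ carry $w(v)p(\rho_2)$, arcs $v\to t$ carry $w(v)p(\rho_1)$, interaction arcs are symmetric with capacity $\lambda r_{uv}\ell_{uv}c_{\mathrm{pqc}}(u,v)$, and you minimize over the $n(n-1)$ ordered anchor pairs using infinite-capacity anchor arcs. Your feasibility check is also sound, and more explicit than the paper's appeal to star optimality: the star's root out-degree is $|V_i|\le|V|-1\le\Delta$ precisely because the other domain is nonempty.

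\textbf{The error.} Under the paper's definition, the arc entering a non-root vertex $x$ has length $p(x)$. In a star, the single arc $\rho_i\to v$ enters $v$ itself, so $d_{H_i}(\rho_i,v)=p(v)$, not $0$. Equivalently, $\mathrm{Anc}_{H_i}(v)=\{\rho_i,v\}$ in Eq.~(\ref{eq:br_ancestor}). Consequently:
\begin{itemize}
\item The star does not attain your lower bound $p(\rho_i)W(V_i)$ whenever some $w(v)p(v)>0$.
\item The inference ``the bound is attained, so the star is optimal'' therefore fails as written.
\item The identities $\Phi_i(V_i)=p(\rho_i)W(V_i)$ and ``the objective becomes $F(D)$'' are false.
\end{itemize}

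\textbf{The repair.} In every feasible tree, the last arc on the root-to-$v$ path enters $v$, and all other arc lengths are nonnegative. Hence $d_{H_i}(\rho_i,v)\ge p(v)$, with equality for the star. This gives $\Phi_i(V_i)=p(\rho_i)W(V_i)+\sum_{v\in V_i}w(v)p(v)$. Summing over both domains, the scalarized objective is $F(D)+\sum_{v\in V}w(v)p(v)$. The extra term does not depend on $D$; it is exactly the constant the paper's proof drops. Once Step 1 is corrected this way, your cut encoding, anchor enumeration, and polynomial-time conclusion go through unchanged.
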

Supplementary Sec.~\ref{SIapp:proof_coupled_min_cut_star} gives the construction.
The proposition maps the architecture variables to cut costs. Polynomial-time optimization then follows from the standard minimum-cut algorithm.
The two-domain supported-point procedure varies the scalarization parameter and collects the resulting cut solutions. Supplementary Corollary~\ref{SIcor:parametric_cut_sweep} states the corresponding parametric-flow result.
A hub is a selected depth-one node in a depth-two derivation tree.
The solver routes require both semantic and topological eligibility.
Direct root-to-workload issuance gives a star $H_i$. A root--intermediate--workload hierarchy gives depth two, and sequential delegation gives a chain. Shared verifier acceptance requires the explicit-propagation model rather than an arborescence shortcut.
Table~\ref{tab:regime_map} summarizes the applicable solver choices.
The chain route uses exact ratio ordering, while the multiway-cut, metric-labeling, and parametric rows inherit guarantees from established algorithms.\cite{Dahlhaus1992MultiwayCuts,KleinbergTardos2002,GalloGrigoriadisTarjan1989}
The corresponding reductions and bounded-treewidth statement appear in Supplementary Sec.~\ref{SIsup:proofs}.

\begin{table*}[tbp]
\caption{\label{tab:regime_map}
Solver guarantees by structural regime.
Coupled chain and multiway cases are hard, while the listed two-domain, bounded-width, and fixed-partition cases admit exact methods under their stated assumptions.
  Under the antichain condition, additive issuer scoring equals explicit issuer-reachability scoring. Supporting statements appear in Supplementary Sec.~\ref{SIsup:proofs}.}
\centering
\small
\setlength{\tabcolsep}{3.5pt}
\renewcommand{\arraystretch}{1.16}
\begin{tabular}{>{\raggedright\arraybackslash}p{0.24\textwidth} >{\raggedright\arraybackslash}p{0.20\textwidth} >{\raggedright\arraybackslash}p{0.25\textwidth} >{\raggedright\arraybackslash}p{0.23\textwidth}}
\rowcolor{black!15}
\textbf{Applicable structure} & \textbf{Method} & \textbf{Guarantee} & \textbf{What it provides}\\
\hline\hline
\rowcolor{black!10}
\multicolumn{2}{l}{\textbf{Coupled partition and derivation ($D,H$)}} & & \\[-0.2ex]
Chain, $k\ge3$ & Parallel-machine scheduling reduction & \textbf{Hard.} NP-hard for fixed $k\ge3$ and strongly NP-hard when $k$ varies. & Coupled hardness boundary.\\[1pt]
\rowcolor{black!5}
Direct issuance, $k=2$ & Anchored min-cut and parametric flow & \textbf{Exact.} $O(n^2)$ anchor pairs. & Boundary and supported Pareto-point optimization.\\[1pt]
Direct issuance with equal root risk and $k$ anchors & Multiway cut & \textbf{Hard.} NP-hard with a $(2-2/k)$-approximation. & Anchored boundary optimization.\\[1pt]
\rowcolor{black!5}
Direct issuance with label-specific root risk & Graph labeling (Potts/metric labeling) & \textbf{Conditional.} Exact for fixed $k$ on bounded-treewidth interaction graphs. Specified anchors enforce nonempty labels. A $2$-approximation applies only when labels may be unused. & Label-aware boundary optimization.\\
\hline
\rowcolor{black!10}
\multicolumn{2}{l}{\textbf{Fixed partition (optimize $H_i$)}} & & \\[-0.2ex]
Chain with unrestricted ordering & Ratio ordering & \textbf{Exact.} $O(n\log n)$. & Optimal derivation order.\\[1pt]
\rowcolor{black!5}
Direct issuance & Root-to-service arcs & \textbf{Exact.} No inner search. & Derivation star.\\[1pt]
Depth two with fixed hubs & Capacitated assignment & \textbf{Exact} after fixed-$\Delta$ hub enumeration. & Small-fanout assignment.\\
\hline
\rowcolor{black!10}
\multicolumn{2}{l}{\textbf{Issuer overlay validation}} & & \\[-0.2ex]
Shared-issuer acceptance & Explicit reachability and antichain test & \textbf{Conditional.} Equal scores under the antichain condition. & Determines when explicit rescoring is required.\\
\end{tabular}
\end{table*}

SPIRE documents single-authority, nested-intermediate, and federated trust-domain deployments, while OAuth 2.0 Token Exchange represents impersonation and delegation chains.\cite{SPIREScaling,SPIFFEFederation,RFC8693}
These specifications establish implementability. Deployment prevalence remains unknown.
Service-call traces contain interaction topology and omit credential semantics. The replay therefore does not estimate regime prevalence.

\section{Operational solver}\label{sec:algorithms}
The implementation follows four steps.

\PaperRunIn{1) Build the instance}
Construct $G$ from policy-gated call telemetry, calibrate crossing costs on the deployed cryptographic stack, specify $w$ and $p$ for a planning horizon, and encode domain policy, derivation eligibility, fanout, and depth constraints.
Retain $(I,M,J)$ when shared-issuer acceptance creates entry points not represented by the domain roots.

\PaperRunIn{2) Generate candidates}
Use exact structural solvers when their assumptions hold. The joint routes are the two-domain cut for direct issuance and bounded-treewidth dynamic programming for eligible low-width graphs. After fixing the domain assignment, use an exact chain, direct-issuance, or fixed-hub depth-two routine when applicable.
Supplementary Tables~\ref{SItab:regime_applicability} and~\ref{SItab:formal_refinement_stress} report the tested low-width dispatch and the treewidth increase caused by scoped refinement.
Otherwise initialize $D$ from traffic structure using spectral or multilevel partitioning,\cite{KarypisKumar1998Metis} construct a feasible derivation tree in each domain, and alternate risk-aware boundary moves with derivation updates.\cite{KernighanLin1970,FiducciaMattheyses1982}
The default general-tree construction sorts by $p(v)/w(v)$ and fills a $\Delta$-ary tree breadth first. It is exact for chains and heuristic otherwise.

\PaperRunIn{3) Refine under the budget}
Weighted-sum solutions and deterministic traffic partitions form the initial candidate set.
For each latency budget $B$, apply only moves and swaps that preserve feasibility and satisfy $\mathrm{Lat}(D)\le B$, then retain the candidate with the lowest blast-radius score.
This hard-budget stage can recover designs omitted by linear scalarization. Increasing the density of the $\alpha$ grid alone cannot provide that coverage.
Supplementary Table~\ref{SItab:solver_budget_path_ablation} isolates the contribution of this stage.

\PaperRunIn{4) Validate semantics and cost}
Use the additive issuer score directly when the antichain condition holds. Otherwise, trace issuer reachability for each retained candidate before selection.
Exact enumeration provides reference solutions on small instances. Larger instances use additive or proxy scores to build a shortlist, then evaluate that shortlist using explicit reachability.
For chain derivations, the proxy orders services using $p_{\mathrm{eff}}$ and evaluates the resulting order under both derivation and acceptance paths.
After choosing $B$ from measured latency constraints, rank designs that satisfy it by blast radius and inspect the edge and compromise-point contributions.
Then remeasure the selected crossing edges in separate validation blocks. Reject a design whose observed boundary latency exceeds $B$, update the calibration or reserve, and rerun the budgeted selection.
The reserve is a planning control rather than a statistical guarantee.
Control-plane issuer inventories, trust stores, and delegation metadata determine plausible $H_i$ and $J$. Where these are uncertain, solve several defensible scenarios rather than treating one inferred graph as ground truth.
Supplementary Sec.~\ref{SIsup:deployment} gives the extraction and refinement details.

\PaperRunIn{Guarantees and limits}
For a fixed domain assignment, the implementation contains exact chain, direct-issuance, and fixed-hub depth-two derivation routines. It also contains exact two-domain direct-issuance/min-cut and anchored low-treewidth assignment routines.
The general candidate search assumes complete within-domain derivation eligibility. Restricted relations can be represented and checked, while optimizing over them requires a separate eligible-tree routine.
The implementation uses anchored $\alpha$-expansion as a graph-labeling heuristic and generates supported points through repeated exact two-domain solves. The $2$-approximation and parametric-flow guarantees in Table~\ref{tab:regime_map} refer respectively to the algorithms of Kleinberg--Tardos and Gallo--Grigoriadis--Tarjan.\cite{KleinbergTardos2002,GalloGrigoriadisTarjan1989}
The implementation applies one propagation model to the whole instance and does not yet combine different models or solvers across local graph regions.
General guarantees for explicit propagation and such locally mixed strategies remain open.
\section{Evaluation}\label{sec:evaluation}
We ask three questions: whether the coupled objective changes the design relative to a staged baseline, whether designs selected using measured crossing costs retain their benefit and latency compliance on held-out measurements, and when shared-issuer relationships require explicit rescoring.

\PaperRunIn{Setup}
The main trace replay applies the optimization to a Train-Ticket service-interaction graph with $\TrainTicketServices$ services and $\TrainTicketEdges$ directed edges, extracted from $\TrainTicketTraces$ public Jaeger traces.\cite{Steidl2022TrainTicketAnomalies}
Edge multiplicities determine normalized calls/request. The baseline uses $\ell_{uv}=1$. Supplementary Sec.~\ref{SIsup:train_ticket_instantiation} defines the scenario inputs $w(v)$ and $p(x)$.
The heterogeneous-baseline scenario retains these priors. The cluster-skew scenario triples $p(v)$ in the largest supplied service cluster, capped at $0.25$.
The formal problem treats the number of domains $k$ as an input. Replay searches $k\in\{1,\ldots,6\}$ under a computational cap. A value of $k=6$ means the largest tested count, not an optimum over larger $k$.
Replay root priors are zero and no fixed per-domain management charge is applied, so these experiments isolate delegated-principal risk and crossing cost. An equal nonzero root prior would add the same partition-invariant term to every candidate.

Unless stated otherwise, replay assumes complete within-domain derivation eligibility, and candidate partitions satisfy the domain-size condition induced by $(\Delta,h)=(3,3)$.
Each partition is scored with the deterministic breadth-first construction used for candidate generation.
Within each domain, services are ordered by $p(v)/w(v)$ and attached in that order to the earliest parent with remaining fanout, level by level to depth $h$.
Thus, these replay results optimize one feasible family under complete eligibility. Global optimization over all admissible arborescences and replay under deployment-specific eligibility remain outside the evaluation.
The controlled solver-transfer and baseline replay experiments retain $c_{\mathrm{pqc}}=0.03$\,ms per crossing call and 20 calls per top-level request so that only the tested algorithmic factor changes.
The deployment study instead uses \PQCMeasurementObservations{} TLS observations from \PQCProfileCount{} named cryptographic profiles and \PQCNetworkCount{} network paths, then optimizes with those measured profiles and validates selected crossings on held-out blocks.
We denote classical X25519/ECDSA-P256 by C0, hybrid X25519+ML-KEM-768/ECDSA-P256 by K1, and hybrid X25519+ML-KEM-768/ML-DSA-65 by A1. Paths N0--N4 denote the unshaped local, datacenter, regional, edge or mobile, and constrained lossy profiles.
In the replay text and tables, $\mathrm{BR}$ abbreviates the conservative $\mathrm{BR}_{\mathrm{node}}$ score, and $\mathrm{BR}_0$ is its $k=1$ value under the same workload, risk scenario, derivation method, and feasibility filter.
Candidate selections use fixed seeds and pinned numerical dependencies. Reported wall-clock measurements remain machine dependent. Supplementary Sec.~\ref{SIsup:eval_protocol} gives parameters and source records.

\PaperRunIn{Q1: Does joint optimization change the design?}
Table~\ref{tab:joint_vs_staged} compares joint optimization with the staged baseline on controlled $n=9$, $k=3$ instances. For each instance and budget, the exact reference evaluates every feasible assignment of the nine services to three nonempty labeled domains.
The staged baseline minimizes the direct-issuance blast-radius score under the latency budget. Each seed assigns the three domain labels deterministic heterogeneous root priors drawn uniformly from $[0.01,0.07)$. These priors remain fixed across partitions, budgets, and risk settings. If several assignments tie, the baseline receives the one with the lowest chain or depth-two score. This tie rule favors the staged baseline by giving it information that a practical sequential procedure would not have.
Joint optimization minimizes the chain or depth-two score directly under the same budget.

\begin{table*}[tbp]
  \caption{\label{tab:joint_vs_staged}
  Joint optimization lowers the chain or depth-two blast-radius score relative to the staged baseline in \JointStagedWorseCases{} of \JointStagedWorseTotal{} feasible comparisons.
  Each comparison enumerates every feasible nonempty domain assignment. Regret is the staged score minus the exact joint optimum, divided by that optimum.
  H is the heterogeneous-baseline risk setting and S triples service risk in the largest supplied cluster with clipping at $0.25$.
  Each setting contains 20 seeds and three budgets. Five depth-two cases per setting have no feasible $k=3$ design at the tightest budget and are omitted.}
  \centering
  \footnotesize
  \PaperTableRows
  \begin{tabular}{llrrrr}
Derivation & Risk & Cases & Staged worse (\%) & Mean regret (\%) & Max regret (\%) \\
\hline\hline
Chain & H & 60 & 98 & 29.38 & 66.41 \\
Chain & S & 60 & 97 & 32.73 & 64.67 \\
Depth two & H & 55 & 64 & 2.47 & 10.75 \\
Depth two & S & 55 & 78 & 2.67 & 14.72 \\
\end{tabular}

\end{table*}

The staged score is higher in nearly every chain case: \JointChainStagedWorseCases{} of \JointChainStagedWorseTotal{}.
It is also higher in most depth-two cases: \JointDepthStagedWorseCases{} of \JointDepthStagedWorseTotal{}.
The larger chain regret shows that a partition chosen under direct issuance can be poorly matched to the delegated compromise paths introduced later.
The depth-two effect is smaller and remains present in most feasible cases.
Supplementary Sec.~\ref{SIsup:eval_protocol} reports the adaptive supported-point solver-call check, hard-budget refinement, low-width dispatch, and heuristic comparisons with exact small-instance references.

\PaperRunIn{Q2: Workload, risk, and crossing cost}
A reference replay first tests sensitivity to the service-risk scenario. At $B=0.10$\,ms, both scenarios select $k=\TrainDomains$.
The resulting $\mathrm{BR}/\mathrm{BR}_0$ is \TrainHRatio{} under the heterogeneous baseline and \TrainSRatio{} under cluster skew.
Under each scenario, the corresponding domain count recurs in at least $\TrainCountMatch\%$ of edge-count resamples and $\TrainSolverMatch\%$ of solver seeds (Supplementary Sec.~\ref{SIsup:recommendation_stability}).

The operational comparison keeps the bounded breadth-first derivation family fixed and changes only how the domain assignment is selected.
Risk-aware search optimizes the conservative score under the measured latency budget.
For each $k\in\{1,\ldots,6\}$, latency-first search applies the restart-limited latency-only search and retains its lowest-latency partition. It then scores these partitions with the fixed breadth-first derivation family and selects the lowest-score candidate satisfying the budget. Traffic clustering partitions the interaction graph without risk input, and the single-domain design provides the baseline.
All methods use the same edge-specific calibration-block maxima with a $5\%$ budget reserve. Their selected crossings are then measured in held-out blocks.

\begin{figure*}[tbp]
  \centering
  \begin{tikzpicture}
\begin{axis}[
  name=pqcmethodimpact,
  width=0.49\textwidth,
  height=6.1cm,
  xmin=0.60,
  xmax=1.02,
  ymin=-0.5,
  ymax=7.5,
  ytick={0,1,2,3,4,5,6,7},
  yticklabels={N2: Single domain,N2: Traffic clustering,N2: Latency-first,N2: Risk-aware,N0: Single domain,N0: Traffic clustering,N0: Latency-first,N0: Risk-aware},
  yticklabel style={font=\scriptsize,align=right},
  xlabel={Ratio to single-domain baseline},
  title={(a) Exact expected impact},
  title style={font=\footnotesize},
  xmajorgrids=true,
  grid style={line width=.1pt,draw=gray!35},
]
\addplot+[only marks,mark=*,mark size=1.7pt,color=black,
  restrict expr to domain={\thisrow{budget_ms}}{0.099:0.101},
  error bars/.cd,x dir=both,x explicit]
  table[x=br_exact_mean_ratio,y=plot_y,x error minus=br_exact_error_minus,x error plus=br_exact_error_plus,col sep=comma]
  {data/fig_inputs/pqc_method_comparison.csv};
\addplot+[only marks,mark=square*,mark size=1.6pt,color=blue!70!black,
  restrict expr to domain={\thisrow{budget_ms}}{0.199:0.201},
  error bars/.cd,x dir=both,x explicit]
  table[x=br_exact_mean_ratio,y=plot_y,x error minus=br_exact_error_minus,x error plus=br_exact_error_plus,col sep=comma]
  {data/fig_inputs/pqc_method_comparison.csv};
\addplot[gray!60,thin,forget plot] coordinates {(0.60,3.5) (1.02,3.5)};
\end{axis}

\begin{axis}[
  name=pqcmethodlatency,
  at={(pqcmethodimpact.east)},
  anchor=west,
  xshift=0.55cm,
  width=0.46\textwidth,
  height=6.1cm,
  xmin=0,
  xmax=1.05,
  ymin=-0.5,
  ymax=7.5,
  ytick={0,1,2,3,4,5,6,7},
  yticklabels={,,,,,,,,},
  xlabel={Held-out boundary latency / budget},
  title={(b) Budget use (must not exceed 1)},
  title style={font=\footnotesize},
  xmajorgrids=true,
  grid style={line width=.1pt,draw=gray!35},
  legend to name=pqcmethodlegend,
  legend columns=2,
  legend style={draw=none,fill=none,font=\footnotesize,/tikz/every even column/.append style={column sep=1em}},
]
\addplot+[only marks,mark=*,mark size=1.7pt,color=black,
  restrict expr to domain={\thisrow{budget_ms}}{0.099:0.101},
  error bars/.cd,x dir=both,x explicit]
  table[x=latency_budget_mean_ratio,y=plot_y,x error minus=latency_budget_error_minus,x error plus=latency_budget_error_plus,col sep=comma]
  {data/fig_inputs/pqc_method_comparison.csv};
\addlegendentry{$B=0.10$ ms}
\addplot+[only marks,mark=square*,mark size=1.6pt,color=blue!70!black,
  restrict expr to domain={\thisrow{budget_ms}}{0.199:0.201},
  error bars/.cd,x dir=both,x explicit]
  table[x=latency_budget_mean_ratio,y=plot_y,x error minus=latency_budget_error_minus,x error plus=latency_budget_error_plus,col sep=comma]
  {data/fig_inputs/pqc_method_comparison.csv};
\addlegendentry{$B=0.20$ ms}
\addplot[gray!60,thin,forget plot] coordinates {(0,3.5) (1.05,3.5)};
\addplot[black,dashed,thin,forget plot] coordinates {(1,-0.5) (1,7.5)};
\end{axis}
\node[anchor=north] at ($(pqcmethodimpact.south)!0.5!(pqcmethodlatency.south)+(0,-0.75cm)$)
  {\pgfplotslegendfromname{pqcmethodlegend}};
\end{tikzpicture}
  \caption{\label{fig:pqc_method_comparison}
  Risk-aware search produces the largest reduction in exact expected impacted weight on N0. On N2, its observed range overlaps that of traffic clustering.
  N0 is the unshaped local path. N2 has 35-ms round-trip time, 100-Mbit/s rate, and 0.1\% packet loss.
  Panel (a) reports exact expected impacted weight under independent root and service compromise events. Candidate selection uses the conservative linear score.
  Panel (b) divides held-out boundary latency by the budget $B$ in ms/request. The dashed line marks the budget limit.
  Points are means over five independently calibrated runs and bars span the observed range.
  The comparison tests domain selection within one fixed derivation family. It does not establish global optimization over all derivation arborescences.}
\end{figure*}

On N0, latency-first search has a mean exact expected-impact ratio of \PQCNZeroStagedExact{} at both budgets.
Risk-aware search lowers these ratios to \PQCNZeroRiskExactTight{} and \PQCNZeroRiskExactLoose{}, respectively.
Every risk-aware design satisfies its held-out budget in these runs.
The broader five-path sensitivity, including absolute C0/K1/A1 costs and incremental A1-minus-C0 migration costs, appears in Supplementary Fig.~\ref{SIfig:pqc_path_architecture} and Table~\ref{SItab:pqc_profile_architecture}.

Table~\ref{tab:pqc_budget_validation} reports five independent graph-replay runs on each of N0 and N2 that separated calibration blocks from held-out validation blocks and checked two budgets per run.
Reserving $5\%$ of the budget under either block-maximum rule meets all \PQCHeadroomChecksPerRule{} path-budget checks per rule.
Finite held-out success does not establish a path-independent guarantee, so remeasurement and rejection remain necessary.

\begin{table}[tbp]
  \caption{\label{tab:pqc_budget_validation}
  Held-out budget compliance across five independent runs and two budgets per path.
  Both block-maximum rules meet every check with a $5\%$ reserve. Point estimates and unmodified maxima fail on some paths.
  N0 is unshaped. N2 has 35-ms round-trip time, 100-Mbit/s rate, and 0.1\% packet loss. $B$ is in ms/request.
  Standalone uses a path-level profile, graph-weighted uses one call-weighted coefficient, and edge-specific retains per-edge coefficients.
  Block maxima are taken across calibration blocks. Reserve rows inflate them by $1/(1-0.05)$.}
  \centering
  \footnotesize
  \PaperTableRows
  \begin{tabular}{lcc}
    \toprule
    Calibration & N0 & N2 \\
    \midrule
    Standalone profile & 2/10 & 10/10 \\
Graph-weighted mean & 1/10 & 10/10 \\
Edge-specific means & 1/10 & 9/10 \\
Graph block maximum & 7/10 & 10/10 \\
Edge-specific block maxima & 10/10 & 8/10 \\
Graph maximum + 5\% reserve & 10/10 & 10/10 \\
Edge maxima + 5\% reserve & 10/10 & 10/10 \\
\hiderowcolors
\bottomrule

  \end{tabular}
\end{table}

Additional solver-scaling, protocol, and calibration diagnostics appear in Supplementary Sec.~\ref{SIsup:eval_tables}.

\PaperRunIn{Q3: Explicit-rescoring conditions}
The hand-constructed shared-issuer reachability example in Fig.~\ref{fig:mini_jwt_explicit} uses unit-normalized interaction coefficients. Its edge labels are illustrative toy costs rather than measured latency.
Keeping that interaction cut fixed while adding acceptance from the isolated issuer \texttt{auth} to \texttt{payments} and \texttt{orders} leaves boundary latency unchanged and adds explicit propagation paths.
The discrepancy comes from semantics: arborescence-only scoring omits issuer entry paths that the deployment accepts.

On the larger candidate sets derived from traces, ranking by the additive issuer score alone identifies the best explicitly evaluated candidate in \OverlayCompiledRecoveries{} of \OverlayComparisons{} comparisons and misses by at most \OverlayMaxGap\%. Explicitly evaluating the three candidates ranked highest by the reachability-aware proxy recovers the best generated candidate in all \OverlayComparisons{} comparisons.
When the antichain condition holds, the additive issuer score can select the final design. In all other cases, final selection uses explicit issuer reachability over the shortlist.

\begin{figure}[tbp]
  \centering
  \includegraphics[width=\columnwidth]{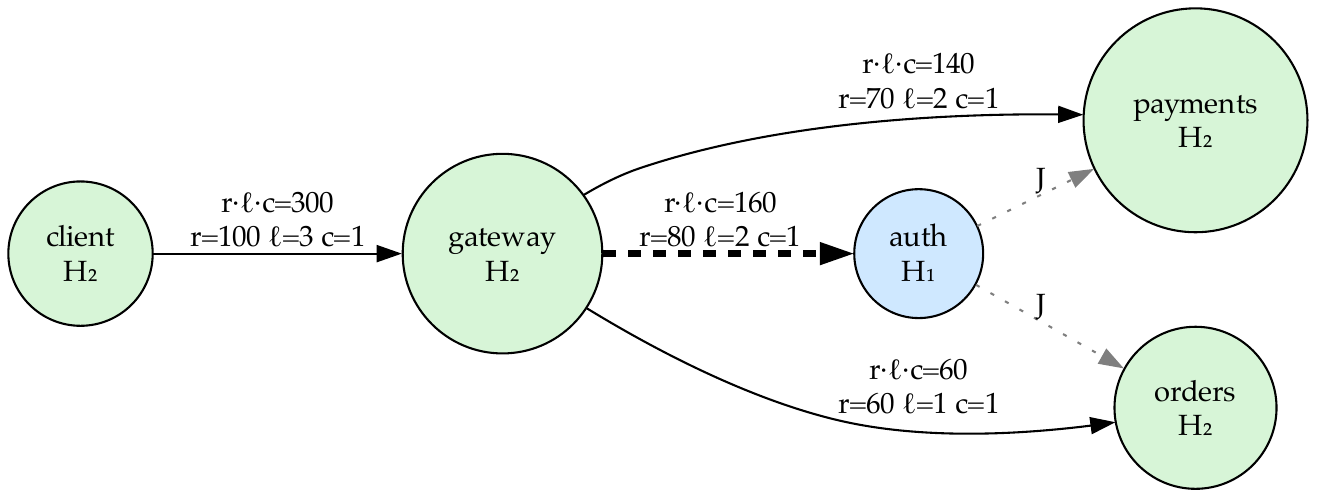}
  \caption{\label{fig:mini_jwt_explicit}
  Shared-issuer reachability example showing why boundary placement alone does not capture issuer propagation.
  The $H_1/H_2$ labels identify the credential-derivation trees containing each principal.
  Black arrows are service interactions. Solid arrows stay within $H_2$, whereas the thick dashed \texttt{gateway}$\to$\texttt{auth} arrow is the sole boundary crossing and contributes the full toy cut cost $r\ell c=160$.
  Gray dotted arrows are verifier-acceptance pairs $J$, not derivation arcs.
  Because \texttt{payments} and \texttt{orders} in $H_2$ accept credentials from \texttt{auth} in $H_1$, issuer compromise reaches both services without changing $\mathrm{Lat}(D)$.}
\end{figure}

\section{Limitations}\label{sec:limitations}
The model is a credential-architecture planning abstraction. It does not cover every enterprise compromise path.
Its conclusions are conditional on the supplied workload, criticality, compromise, and cost scenarios.
\begin{itemize}
\setlength{\itemsep}{0.35\baselineskip}
\item Propagation is limited to credential authority. Verifier compromise, software supply-chain compromise, and non-credential exploitation paths are not modeled.
Compromise probabilities are exogenous, although segmentation may change real exposure.
\item The core assumes single-parent derivation.
Multi-parent or threshold issuance requires a directed acyclic graph (DAG) model.
Explicit issuer acceptance is handled separately. General complexity guarantees and globally coordinated mixtures of core and explicit-propagation scoring remain open.
\item The mapping from trust boundaries to crossing cost assumes key-establishment domains in which intra-domain payload protection uses established symmetric keys or derived credentials.
Deployments that perform independent public-key operations within a domain must represent those operations as additional weighted events or refine the domain assignment.
\item Boundary latency is linearized from measured or estimated effective crossing costs.
Static edge coefficients can absorb path-specific mean effects. They do not directly model correlated packet loss, retries, shared congestion, queueing, or end-to-end tail latency.
The controlled TLS measurements expose path-specific retransmission and handshake effects only through effective coefficients. The graph replay estimates how several calibration and reserve rules transfer across N0 and N2. Finite held-out success cannot establish a path-independent guarantee.
Operational use therefore requires iterative calibration, optimization, and remeasurement of the selected crossing edges.
\item Trace results use one deterministic bounded-breadth-first derivation family and finite local-search budgets. The reported designs are best found within this search. Global optimality over all feasible arborescences remains unverified.
Supplementary Tables~\ref{SItab:solver_budget_path_ablation} and~\ref{SItab:joint_chain_transfer} quantify observed gaps against exhaustive references on small instances.
\item Two public trace-derived workloads, synthetic families, and controlled stack measurements do not establish production-wide generality.
Control-plane uncertainty should be represented by several plausible $H_i$ and $J$ scenarios.
\end{itemize}

\section{Related work}\label{sec:related}
The formulation combines a segmentation decision with credential-derivation design. Its compromise semantics differ from the two closest decision problems.
Authentication-graph partitioning and identity and access management (IAM) policy synthesis change access structure to reduce credential-connected components, unnecessary permissions, or compromise impact while limiting operational disruption.\cite{PopeTauritzKent2019,KazdagliTiwariKumar2022}
Their decision variables stop at access structure. The credential-authority arborescence among operational principals remains fixed.

Logical-key-hierarchy research chooses rooted auxiliary-key trees using update probabilities, communication cost, and network topology.\cite{WongGoudaLam1998KeyGraphs,WallnerHarderAgee1999RFC2627,ChanRajaramanSunZhu2009,SakaiYamamoto2005}
Those models use rekeying keys as internal vertices under a shared controller, and their objective is update or recovery communication.
Each domain has an independently compromised root. The hierarchy's internal vertices are operational principals, and steady-state service calls incur cost when they cross roots.

Hierarchical key assignment and delegation systems provide authority-structure context while optimizing different objectives.\cite{Atallah2009KeyHierarchies,Birgisson2014Macaroons,Andersen2019WAVE}
Zero-trust identity guidance defines relevant issuer relationships.\cite{NIST800207,WardBeyer2014BeyondCorp}
Risk-optimized and role-based microsegmentation methods synthesize access boundaries from policy or flow evidence while treating credential-derivation structure as fixed.\cite{NoelSwarupJohnsgard2023Microsegmentation,Mani2025ZTS}
The comparison turns on four elements used together in this formulation: an independent-root partition, an operational derivation forest, steady-state crossing cost, and a model-specific condition for exact additive issuer scoring.

\section{Conclusion}\label{sec:conclusion}
We formulate trust-boundary placement and credential delegation as one design problem: limit compromise reach without exceeding the latency budget for cross-boundary authentication. The model separates domain-authority compromise, delegated-principal compromise, and shared-issuer acceptance.
The formal results establish NP-hardness and identify polynomial-time cases under the stated direct-issuance assumptions: scalarized two-domain optimization and bounded-width interaction graphs with specified anchors or fixed $k$. The results also identify when additive issuer scoring equals explicit issuer-reachability scoring. Other regimes inherit guarantees from established scheduling, graph-labeling, and parametric-flow algorithms.

The evaluation shows where the formulation changes design decisions. Joint optimization lowers the chain or depth-two blast-radius score in \JointStagedWorseCases{} of \JointStagedWorseTotal{} exact comparisons, with the larger effect under chain delegation.
Measured path conditions change the domain count selected under the same latency budget. On N0, risk-aware search yields a large impact reduction. The five N2 runs leave the ordering of risk-aware search and traffic clustering unresolved. A $5\%$ reserve meets every held-out check for both tested block-maximum rules. The reserve remains a planning rule that requires deployment remeasurement.
The additive issuer score can misrank designs when accepted services have overlapping credential reach. In deployment, use the optimization to generate candidate architectures, trace shared-issuer reach where required, and remeasure selected crossings before accepting a design.

\section*{Data and Code Availability}
Code and processed data may be made available by the corresponding author upon reasonable request, subject to organizational approval.

\section*{Acknowledgment}
The authors thank K. Halunen for feedback on an earlier version of this manuscript. The views expressed in this paper are those of the authors and do not necessarily reflect the views and policies of their respective employers.

\PaperBalanceLastPage
\bibliographystyle{IEEEtran}
\bibliography{refs}

@inproceedings{Shor1994,
  author    = {Shor, Peter W.},
  title     = {Algorithms for Quantum Computation: Discrete Logarithms and Factoring},
  booktitle = {Proceedings of the 35th Annual Symposium on Foundations of Computer Science},
  year      = {1994},
  pages     = {124--134},
  doi       = {10.1109/SFCS.1994.365700},
}

@inproceedings{Grover1996,
  author    = {Grover, Lov K.},
  title     = {A Fast Quantum Mechanical Algorithm for Database Search},
  booktitle = {Proceedings of the 28th Annual ACM Symposium on Theory of Computing},
  year      = {1996},
  pages     = {212--219},
  doi       = {10.1145/237814.237866},
}

@article{WardBeyer2014BeyondCorp,
  author  = {Ward, Rory and Beyer, Betsy},
  title   = {{BeyondCorp}: A New Approach to Enterprise Security},
  journal = {;login:},
  volume  = {39},
  number  = {6},
  pages   = {6--11},
  year    = {2014},
  url     = {https://research.google/pubs/beyondcorp-a-new-approach-to-enterprise-security/},
}

@misc{SPIREScaling,
  author       = {{The SPIFFE Authors}},
  title        = {Scaling {SPIRE}},
  howpublished = {{SPIFFE} deployment guidance},
  year         = {2026},
  url          = {https://spiffe.io/docs/latest/planning/scaling_spire/},
  note         = {Version 1.15.1, accessed 2026-07-17},
}

@misc{SPIFFEFederation,
  author       = {{The SPIFFE Authors}},
  title        = {{SPIFFE Federation}},
  howpublished = {{SPIFFE} specification},
  year         = {2026},
  url          = {https://spiffe.io/docs/latest/spiffe-specs/spiffe_federation/},
  note         = {Version 1.15.1, accessed 2026-07-17},
}

@techreport{RFC8693,
  author      = {Jones, Michael and Nadalin, Anthony and Campbell, Brian and Bradley, John and Mortimore, Chuck},
  title       = {{OAuth 2.0 Token Exchange}},
  institution = {Internet Engineering Task Force (IETF)},
  type        = {RFC},
  number      = {8693},
  year        = {2020},
  month       = jan,
  doi         = {10.17487/RFC8693},
  url         = {https://www.rfc-editor.org/rfc/rfc8693},
}

@techreport{NIST800207,
  author      = {Rose, Scott and Borchert, Oliver and Mitchell, Stu and Connelly, Sean},
  title       = {Zero Trust Architecture},
  institution = {National Institute of Standards and Technology},
  number      = {NIST Special Publication 800-207},
  year        = {2020},
  month       = aug,
  doi         = {10.6028/NIST.SP.800-207},
  url         = {https://csrc.nist.gov/pubs/sp/800/207/final},
  note        = {Accessed 2026-02-07},
}

@techreport{NISTIR8547ipd,
  author      = {Moody, Dustin and Perlner, Ray and Regenscheid, Andrew and Robinson, Angela and Cooper, David},
  title       = {Transition to Post-Quantum Cryptography Standards},
  institution = {National Institute of Standards and Technology},
  number      = {NIST IR 8547 (Initial Public Draft)},
  year        = {2024},
  month       = nov,
  doi         = {10.6028/NIST.IR.8547.ipd},
  url         = {https://csrc.nist.gov/pubs/ir/8547/ipd},
  note        = {Accessed 2026-02-07},
}

@techreport{FIPS203,
  author      = {{National Institute of Standards and Technology}},
  title       = {Module-Lattice-Based Key-Encapsulation Mechanism Standard},
  institution = {National Institute of Standards and Technology},
  number      = {FIPS 203},
  year        = {2024},
  month       = aug,
  doi         = {10.6028/NIST.FIPS.203},
  url         = {https://csrc.nist.gov/pubs/fips/203/final},
  note        = {Accessed 2026-02-07},
}

@techreport{FIPS204,
  author      = {{National Institute of Standards and Technology}},
  title       = {Module-Lattice-Based Digital Signature Standard},
  institution = {National Institute of Standards and Technology},
  number      = {FIPS 204},
  year        = {2024},
  month       = aug,
  doi         = {10.6028/NIST.FIPS.204},
  url         = {https://csrc.nist.gov/pubs/fips/204/final},
  note        = {Accessed 2026-02-07},
}

@techreport{FIPS205,
  author      = {{National Institute of Standards and Technology}},
  title       = {Stateless Hash-Based Digital Signature Standard},
  institution = {National Institute of Standards and Technology},
  number      = {FIPS 205},
  year        = {2024},
  month       = aug,
  doi         = {10.6028/NIST.FIPS.205},
  url         = {https://csrc.nist.gov/pubs/fips/205/final},
  note        = {Accessed 2026-02-07},
}

@misc{Sim2025KpqC,
  author        = {Sim, Minjoo and Song, Gyeongju and Lee, Minwoo and Yoon, Seyoung and Baksi, Anubhab and Seo, Hwajeong},
  title         = {Integrating and Benchmarking {KpqC} in {TLS}/{X.509}},
  year          = {2025},
  eprint        = {2025/1245},
  archiveprefix = {IACR ePrint},
  url           = {https://eprint.iacr.org/2025/1245},
  note          = {Accessed 2026-02-07},
}

@inproceedings{Sosnowski2023PQTLS13,
  author    = {Sosnowski, Markus and Wiedner, Florian and Hauser, Eric and Steger, Lion and Schoinianakis, Dimitrios and Gallenm{\"u}ller, Sebastian and Carle, Georg},
  title     = {The Performance of Post-Quantum {TLS} 1.3},
  booktitle = {Companion of the 19th International Conference on Emerging Networking Experiments and Technologies},
  year      = {2023},
  pages     = {19--27},
  doi       = {10.1145/3624354.3630585},
}

@inproceedings{KampanakisChildsKlein2024,
  author    = {Kampanakis, Panos and Childs-Klein, Will},
  title     = {The Impact of Data-Heavy, Post-Quantum {TLS} 1.3 on the Time-to-Last-Byte of Web Connections},
  booktitle = {Proceedings 2024 Workshop on Measurements, Attacks, and Defenses for the Web},
  publisher = {Internet Society},
  year      = {2024},
  doi       = {10.14722/madweb.2024.23010},
}

@misc{Google2017ALTS,
  author       = {{Google Cloud Security and Privacy Team}},
  title        = {Securing Communications Between {Google} Services with Application Layer Transport Security},
  howpublished = {Google Online Security Blog},
  year         = {2017},
  month        = dec,
  url          = {https://security.googleblog.com/2017/12/securing-communications-between-google.html},
  note         = {Accessed 2026-08-10},
}

@misc{BettaleDeOliveiraDottax2022,
  author       = {Bettale, Luk and De Oliveira, Marco and Dottax, Emmanuelle},
  title        = {Post-Quantum Protocols for Banking Applications},
  howpublished = {Fourth NIST PQC Standardization Conference},
  year         = {2022},
  url          = {https://csrc.nist.gov/csrc/media/Events/2022/fourth-pqc-standardization-conference/documents/papers/post-quantum-protocols-for-banking-applications-pqc2022.pdf},
  note         = {Accessed 2026-08-10},
}

@misc{Google2026MTC,
  author       = {{Chrome Secure Web and Networking Team}},
  title        = {Cultivating a Robust and Efficient Quantum-Safe {HTTPS}},
  howpublished = {Google Online Security Blog},
  year         = {2026},
  month        = feb,
  day          = {27},
  url          = {https://security.googleblog.com/2026/02/cultivating-robust-and-efficient.html},
  note         = {Accessed 2026-03-23},
}

@article{PopeTauritzKent2019,
  author  = {Pope, Aaron S. and Tauritz, Daniel R. and Kent, Alexander D.},
  title   = {Evolving Bipartite Authentication Graph Partitions},
  journal = {IEEE Transactions on Dependable and Secure Computing},
  volume  = {16},
  number  = {1},
  pages   = {58--71},
  year    = {2019},
  doi     = {10.1109/TDSC.2017.2652469},
}

@misc{KazdagliTiwariKumar2022,
  author        = {Kazdagli, Mikhail and Tiwari, Mohit and Kumar, Akshat},
  title         = {Using Constraint Programming and Graph Representation Learning for Generating Interpretable Cloud Security Policies},
  year          = {2022},
  howpublished  = {arXiv:2205.01240},
  eprint        = {2205.01240},
  archiveprefix = {arXiv},
  primaryclass  = {cs.CR},
  doi           = {10.48550/arXiv.2205.01240},
  url           = {https://arxiv.org/abs/2205.01240},
}

@inproceedings{Andersen2019WAVE,
  author    = {Andersen, Michael P. and Kumar, Sam and AbdelBaky, Moustafa and Fierro, Gabe and Kolb, John and Kim, Hyung-Sin and Culler, David E. and Popa, Raluca Ada},
  title     = {{WAVE}: A Decentralized Authorization Framework with Transitive Delegation},
  booktitle = {28th USENIX Security Symposium (USENIX Security 19)},
  year      = {2019},
  address   = {Santa Clara, CA},
  pages     = {1375--1392},
  publisher = {USENIX Association},
  url       = {https://www.usenix.org/conference/usenixsecurity19/presentation/andersen},
  month     = aug,
}

@inproceedings{Dahlhaus1992MultiwayCuts,
  author    = {Dahlhaus, Elias and Johnson, David S. and Papadimitriou, Christos H. and Seymour, Paul D. and Yannakakis, Mihalis},
  title     = {The Complexity of Multiway Cuts (Extended Abstract)},
  booktitle = {Proceedings of the Twenty-Fourth Annual ACM Symposium on Theory of Computing},
  year      = {1992},
  pages     = {241--251},
  doi       = {10.1145/129712.129736},
}

@article{Atallah2009KeyHierarchies,
  author  = {Atallah, Mikhail J. and Blanton, Marina and Fazio, Nelly and Frikken, Keith B.},
  title   = {Dynamic and Efficient Key Management for Access Hierarchies},
  journal = {ACM Transactions on Information and System Security},
  volume  = {12},
  number  = {3},
  year    = {2009},
  pages   = {1--43},
  doi     = {10.1145/1455526.1455531},
  note    = {Article 13},
}

@incollection{ChanRajaramanSunZhu2009,
  author    = {Chan, Agnes and Rajaraman, Rajmohan and Sun, Zhifeng and Zhu, Feng},
  title     = {Approximation Algorithms for Key Management in Secure Multicast},
  booktitle = {Computing and Combinatorics},
  series    = {Lecture Notes in Computer Science},
  volume    = {5609},
  pages     = {148--157},
  publisher = {Springer},
  year      = {2009},
  doi       = {10.1007/978-3-642-02882-3_16},
}

@misc{SakaiYamamoto2005,
  author        = {Sakai, Hideyuki and Yamamoto, Hirosuke},
  title         = {Asymptotically Optimal Tree-Based Group Key Management Schemes},
  year          = {2005},
  howpublished  = {arXiv:cs/0507001},
  eprint        = {cs/0507001},
  archiveprefix = {arXiv},
  primaryclass  = {cs.IT},
  doi           = {10.48550/arXiv.cs/0507001},
  url           = {https://arxiv.org/abs/cs/0507001},
}

@article{SkutellaWoeginger2000,
  author  = {Skutella, Martin and Woeginger, Gerhard J.},
  title   = {A {PTAS} for Minimizing the Total Weighted Completion Time on Identical Parallel Machines},
  journal = {Mathematics of Operations Research},
  volume  = {25},
  number  = {1},
  pages   = {63--75},
  year    = {2000},
  doi     = {10.1287/moor.25.1.63.15212},
}

@article{NoelSwarupJohnsgard2023Microsegmentation,
  author  = {Noel, Steven and Swarup, Vipin and Johnsgard, Karin},
  title   = {Optimizing Network Microsegmentation Policy for Cyber Resilience},
  journal = {The Journal of Defense Modeling and Simulation: Applications, Methodology, Technology},
  volume  = {20},
  number  = {1},
  pages   = {57--79},
  year    = {2023},
  doi     = {10.1177/15485129211051386},
}

@article{KernighanLin1970,
  author  = {Kernighan, Brian W. and Lin, Shen},
  title   = {An Efficient Heuristic Procedure for Partitioning Graphs},
  journal = {Bell System Technical Journal},
  volume  = {49},
  number  = {2},
  pages   = {291--307},
  year    = {1970},
  doi     = {10.1002/j.1538-7305.1970.tb01770.x},
}

@inproceedings{FiducciaMattheyses1982,
  author    = {Fiduccia, Charles M. and Mattheyses, Robert M.},
  title     = {A Linear-Time Heuristic for Improving Network Partitions},
  booktitle = {Proceedings of the 19th Design Automation Conference},
  year      = {1982},
  pages     = {175--181},
  doi       = {10.1109/DAC.1982.1585498},
}

@article{KarypisKumar1998Metis,
  author  = {Karypis, George and Kumar, Vipin},
  title   = {A Fast and High Quality Multilevel Scheme for Partitioning Irregular Graphs},
  journal = {SIAM Journal on Scientific Computing},
  volume  = {20},
  number  = {1},
  pages   = {359--392},
  year    = {1998},
  doi     = {10.1137/S1064827595287997},
}

@article{WongGoudaLam1998KeyGraphs,
  author  = {Wong, Chung Kei and Gouda, Mohamed and Lam, Simon S.},
  title   = {Secure Group Communications Using Key Graphs},
  journal = {ACM SIGCOMM Computer Communication Review},
  volume  = {28},
  number  = {4},
  pages   = {68--79},
  year    = {1998},
  doi     = {10.1145/285243.285260},
}

@techreport{WallnerHarderAgee1999RFC2627,
  author      = {Wallner, D. and Harder, E. and Agee, R.},
  title       = {Key Management for Multicast: Issues and Architectures},
  institution = {RFC Editor},
  number      = {RFC 2627},
  year        = {1999},
  month       = jun,
  doi         = {10.17487/RFC2627},
  url         = {https://www.rfc-editor.org/rfc/rfc2627},
  note        = {Accessed 2026-02-07},
}

@article{KleinbergTardos2002,
  author  = {Kleinberg, Jon and Tardos, {\'E}va},
  title   = {Approximation Algorithms for Classification Problems with Pairwise Relationships},
  journal = {Journal of the ACM},
  volume  = {49},
  number  = {5},
  pages   = {616--639},
  year    = {2002},
  doi     = {10.1145/585265.585268},
}

@article{GalloGrigoriadisTarjan1989,
  author  = {Gallo, Giorgio and Grigoriadis, Michael D. and Tarjan, Robert E.},
  title   = {A Fast Parametric Maximum Flow Algorithm and Applications},
  journal = {SIAM Journal on Computing},
  volume  = {18},
  number  = {1},
  pages   = {30--55},
  year    = {1989},
  doi     = {10.1137/0218003},
}

@misc{Steidl2022TrainTicketAnomalies,
  author       = {Steidl, Monika},
  title        = {Anomalies in Microservice Architecture (train-ticket) based on version configurations},
  howpublished = {Zenodo dataset},
  year         = {2022},
  doi          = {10.5281/zenodo.6979726},
}

@inproceedings{Birgisson2014Macaroons,
  author    = {Birgisson, Arnar and Politz, Joe Gibbs and Erlingsson, {\'U}lfar and Taly, Ankur and Vrable, Michael and Lentczner, Mark},
  title     = {Macaroons: Cookies with Contextual Caveats for Decentralized Authorization in the Cloud},
  booktitle = {Proceedings of the Network and Distributed System Security Symposium (NDSS)},
  year      = {2014},
  organization = {Internet Society},
  doi          = {10.14722/ndss.2014.23212},
}

@inproceedings{Mani2025ZTS,
  author    = {Mani, Sathiya Kumaran and Hsieh, Kevin and Segarra, Santiago and Chandra, Ranveer and Zhou, Yajie and Kandula, Srikanth},
  title     = {Securing Public Cloud Networks with Efficient Role-based {Micro-Segmentation}},
  booktitle = {22nd USENIX Symposium on Networked Systems Design and Implementation (NSDI 25)},
  year      = {2025},
  address   = {Philadelphia, PA},
  pages     = {1033--1048},
  publisher = {USENIX Association},
  url       = {https://www.usenix.org/conference/nsdi25/presentation/mani},
  month     = apr,
}

\clearpage
\onecolumn
\begin{bibunit}[IEEEtran]
\setcounter{section}{0}
\setcounter{table}{0}
\setcounter{figure}{0}
\setcounter{equation}{0}
\renewcommand{\thesection}{S\arabic{section}}
\renewcommand{\thesubsection}{\thesection.\arabic{subsection}}
\renewcommand{\thesubsubsection}{\thesubsection.\arabic{subsubsection}}
\renewcommand{\thetable}{S\arabic{table}}
\renewcommand{\thefigure}{S\arabic{figure}}
\renewcommand{\theequation}{S\arabic{equation}}
\providecommand*{\theHtable}{}
\providecommand*{\theHfigure}{}
\renewcommand*{\theHsection}{SI.\arabic{section}}
\renewcommand*{\theHtable}{SI.\arabic{table}}
\renewcommand*{\theHfigure}{SI.\arabic{figure}}
\renewcommand*{\theHequation}{SI.\arabic{equation}}
\newcommand{\MainTableRef}[2]{Table~\ref{#1}}
\newcommand{\MainCorRef}[2]{Corollary~\ref{#1}}
\newcommand{\MainRemarkRef}[2]{Remark~\ref{#1}}
\newcommand{\MainSecRef}[2]{Sec.~\ref{#1}}
\newcommand{\MainEqRef}[2]{Eq.~(\ref{#1})}
\begin{center}
  {\Large\bfseries Supplementary Information:\\[0.25em]
  \PaperTitle\par}
  \vspace{0.75em}
  {\large \PaperAuthorOne{}\,\orcidlink{\PaperAuthorOneORCID} and \PaperAuthorTwo{}\,\orcidlink{\PaperAuthorTwoORCID}\par}
\end{center}
\vspace{0.75em}
\begingroup
  \renewcommand{\thefootnote}{}%
  \footnotetext{%
    \begin{list}{\textbullet}{%
      \setlength{\leftmargin}{1.2em}%
      \setlength{\labelwidth}{0.7em}%
      \setlength{\labelsep}{0.3em}%
      \setlength{\itemindent}{0pt}%
      \setlength{\listparindent}{0pt}%
      \setlength{\topsep}{0pt}%
      \setlength{\parsep}{0pt}%
      \setlength{\itemsep}{0pt}%
    }
      \item Corresponding author: \PaperCorrespondingAuthor{} (\href{mailto:\PaperCorrespondingEmail}{\PaperCorrespondingEmail}).
      \item \PaperAuthorOne{} and \PaperAuthorTwo{} are with \PaperAffiliationAddress.
    \end{list}%
  }%
\endgroup

The worked example makes trust-domain assignment and within-domain credential derivation concrete.
The remaining sections give structural results and proofs, define secondary risk summaries, report solver, workload, crossing-cost, and shared-issuer diagnostics, and describe deployment considerations.

\section{Worked example}\label{SIsec:worked_example}
We illustrate how feasibility constraints and intra-domain derivation structure affect blast radius and latency.
Throughout this example, the domain assignment $D$ is fixed and maps each service to one of two domains. The quantity $\mathrm{Lat}(D)$ is the total calibrated cost of interactions that cross between those domains. Fixing $D$ isolates derivation-side changes in blast radius at constant boundary latency.
Let $V=\{a,b,c,d,e,f\}$ and $k=2$ with $D(a)=D(b)=D(c)=1$ and $D(d)=D(e)=D(f)=2$.
Let the interaction edges include $a\to b$, $b\to c$, $c\to d$, $b\to e$, and $e\to f$.
\begin{figure}[!htbp]
  \centering
  \begin{tikzpicture}[>=Latex]
  \node[svc] (a) {$a$};
  \node[svc,below=10mm of a] (b) {$b$};
  \node[svc,below=10mm of b] (c) {$c$};

  \node[svc,right=34mm of a] (d) {$d$};
  \node[svc,below=10mm of d] (e) {$e$};
  \node[svc,below=10mm of e] (f) {$f$};

  \node[domainbox,fit=(a)(b)(c),label=above:{Domain 1}] (box1) {};
  \node[domainbox,fit=(d)(e)(f),label=above:{Domain 2}] (box2) {};

  \draw[intedge] (a) -- (b);
  \draw[intedge] (b) -- (c);
  \draw[boundedge] (c) -- (d);
  \draw[boundedge] (b) -- (e);
  \draw[intedge] (e) -- (f);

  \node[
    anchor=west,
    draw,
    rounded corners,
    fill=white,
    inner sep=3pt,
    font=\scriptsize,
    align=left,
  ] (leg) at ($(box1.south)!0.5!(box2.south)+(0,-8mm)$) {%
    \tikz{\draw[intedge] (0,0) -- (7mm,0);} \; interaction edge ($E$)\\
    \tikz{\draw[boundedge] (0,0) -- (7mm,0);} \; boundary edge ($D(u)\ne D(v)$)%
  };
\end{tikzpicture}
  \caption{\label{SIfig:running_example}
  Fixed domains separate boundary latency from derivation risk.
  All arrows are service calls. Red arrows mark the two crossing calls, whose $r_{uv}\ell_{uv}c_{\mathrm{pqc}}(u,v)$ contributions sum to $\mathrm{Lat}(D)=220$ in the unit-normalized toy calibration.
  Changing only the derivation trees leaves this boundary latency fixed while allowing a different blast radius.}
\end{figure}
Assume only $c\to d$ and $b\to e$ cross the domain boundary.
Let $r_{uv}$ denote the interaction rate, $\ell_{uv}$ its path-sensitivity weight, and $c_{\mathrm{pqc}}(u,v)$ its effective crossing cost. Using a unit-normalized toy crossing cost on the crossing edges (not a literal classical or PQC deployment measurement), together with $r_{cd}=100$, $\ell_{cd}=2$, $c_{\mathrm{pqc}}(c,d)=1$ and $r_{be}=20$, $\ell_{be}=1$, $c_{\mathrm{pqc}}(b,e)=1$, we have
\begin{equation}
  \mathrm{Lat}(D) = 100\cdot 2\cdot 1 \;+\; 20\cdot 1\cdot 1 \;=\; 220.
\end{equation}
Let $H_i$ denote the credential-derivation tree rooted at issuer $\rho_i$, let $\Delta$ bound the number of children of any tree vertex, and let $h$ bound the root-to-service depth. In this example, $\mathcal A^{\mathrm{allow}}$ contains every root-to-service arc and only the service-to-service arcs $b\to c$ and $e\to f$.
With $\Delta=2$ and $h=2$, these constraints require $\rho_1\to a$, $\rho_1\to b$, and $b\to c$ in domain~1, and $\rho_2\to d$, $\rho_2\to e$, and $e\to f$ in domain~2.
Let $w(v)=1$ for all $v\in V$, and assume compromise probabilities $p(b)=0.02$ and $p(e)=0.01$ with all other $p(x)=0$.
Writing $\mathrm{Reach}_{H_i}(\{x\})$ for the services reachable from a compromised vertex $x$ in $H_i$, including $x$ itself, gives $\mathrm{Reach}_{H_1}(\{b\})=\{b,c\}$ and $\mathrm{Reach}_{H_2}(\{e\})=\{e,f\}$. The conservative score $\mathrm{BR}_{\mathrm{node}}$ sums each compromise probability times the total weight reached, so
\begin{equation}
  \mathrm{BR}_{\mathrm{node}}(D,\{H_i\})
  = 0.02\cdot 2 \;+\; 0.01\cdot 2 \;=\; 0.06.
\end{equation}
Increasing the degree bound to $\Delta=3$ permits each root to issue directly to all three services. Then $\mathrm{Reach}_{H_1}(\{b\})=\{b\}$ and $\mathrm{Reach}_{H_2}(\{e\})=\{e\}$, reducing the conservative blast-radius score to $0.02\cdot 1 + 0.01\cdot 1 = 0.03$ without changing $\mathrm{Lat}(D)$.
This demonstrates why derivation eligibility and fanout limits can change blast radius even when trust boundaries are fixed.
In this example the derivation structures are simple depth-two arborescences with the arcs listed above.

\subsection{Why latency-first optimization can fail}
Consider three unit-weight principals $a,b,c$, two nonempty domains, zero root risk, and chain derivations $(\Delta,h)=(1,2)$.
Let $p(a)=0.01$ and $p(b)=p(c)=0.9$.
The only interaction coefficients are $1$ on $a\to b$ and $2$ on $b\to c$.
For singleton $a$, $b$, or $c$, respectively, the best-chain triples $(\mathrm{Lat},\mathrm{BR}_{\mathrm{node}},\mathrm{BR}_{\mathrm{node}}+\tfrac12\mathrm{Lat})$ are $(1,2.71,3.21)$, $(3,1.82,3.32)$, and $(2,1.82,2.82)$.
A latency-first design therefore selects $\{a\}\mid\{b,c\}$, whereas joint scalarized optimization selects $\{a,b\}\mid\{c\}$.
Equivalently, under budget $B=2$, the latter reduces blast radius from $2.71$ to $1.82$.
The difference arises because grouping $b$ and $c$ forces one high-risk principal above the other, while grouping $a$ and $b$ places the low-risk principal first.

\FloatBarrier

\section{Formal results and proofs}\label{SIsup:proofs}
The formal results distinguish model-specific reductions from guarantees inherited from cited algorithms.
This section states domain-assignment and fixed-domain derivation results before their supporting proofs.
\subsection{Secondary structural results}\label{SIsup:secondary_results}
\paragraph{Direct-issuance assignment assumptions}
Every service must be eligible to receive credentials directly from any root, and root probabilities must remain fixed as services change domains. Domain-policy constraints are limited to the anchors specified in each result.

The degree and depth limits must permit direct issuance for every allowed assignment. We require $h\ge1$ and sufficient root fanout: $\Delta\ge |V|-k+1$ when all $k$ domains must be nonempty, or $\Delta\ge |V|$ when unused domains are permitted.

\begin{corollary}[Parametric cut sweep]\label{SIcor:parametric_cut_sweep}
Consider two nonempty domains with fixed label-specific root probabilities independent of $D$, where direct issuance is feasible and optimal. An anchor is a service fixed to a specified domain. Varying the latency weight $\lambda\in[0,\infty)$ in $\mathrm{BR}_{\mathrm{node}}+\lambda\mathrm{Lat}(D)$ traces the supported tradeoff between boundary latency and blast radius.
For each fixed ordered anchor pair, applying the parametric maximum-flow/min-cut algorithm of Gallo--Grigoriadis--Tarjan computes all breakpoints and a nested representative optimal cut for every parameter interval in polynomial time, rather than solving an independent cut on a dense parameter grid.\cite{GalloGrigoriadisTarjan1989}
Without fixed anchors, enumerating ordered anchor pairs and taking the lower envelope still gives a polynomial-time sweep. Representatives selected from different anchor pairs need not be globally nested.
\end{corollary}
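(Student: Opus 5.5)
The plan is to reduce the corollary to Proposition~\ref{prop:coupled_min_cut_star} and then invoke the Gallo--Grigoriadis--Tarjan (GGT) parametric flow theorem for each anchor pair.

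\textbf{Step 1: fix the objective for a given assignment.} Under the stated direct-issuance assumptions, each $H_i$ is a star. Every service then has ancestor set $\{\rho_i,v\}$, so Eq.~(\ref{eq:br_ancestor}) gives
\[
\mathrm{BR}_{\mathrm{node}}=\sum_{v}w(v)\bigl(p(\rho_{D(v)})+p(v)\bigr).
\]
The term $\sum_v w(v)p(v)$ does not depend on $D$. Hence $\mathrm{BR}_{\mathrm{node}}+\lambda\mathrm{Lat}(D)$ equals a constant plus unary label costs $w(v)p(\rho_i)$ plus pairwise disagreement costs $\lambda r_{uv}\ell_{uv}c_{\mathrm{pqc}}(u,v)$.

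\textbf{Step 2: build the parametric network for a fixed anchor pair.} Fix an ordered anchor pair $(s,t)$, with $s$ forced to domain~1 and $t$ to domain~2. This guarantees both domains are nonempty, so (C1) holds. Build the standard $s$--$t$ network:
\begin{itemize}
\item Each service $v$ receives the source arc with capacity $w(v)p(\rho_2)$ and the sink arc with capacity $w(v)p(\rho_1)$.
\item The anchors receive infinite capacity on the appropriate side.
\item Each interaction pair receives symmetric arcs with capacity $\lambda a_{uv}$, where $a_{uv}=r_{uv}\ell_{uv}c_{\mathrm{pqc}}(u,v)$ and antiparallel calls are summed as noted in Sec.~\ref{sec:latency}.
\end{itemize}
Cut values then equal the objective minus the constant, as in Proposition~\ref{prop:coupled_min_cut_star}.

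\textbf{Step 3: apply GGT.} Here I expect the main obstacle. GGT requires source-arc capacities nondecreasing in $\lambda$ and sink-arc capacities nonincreasing, while interior capacities stay constant. In our network the parameter sits on the interior arcs, so GGT does not apply directly. I would try two fixes.
\begin{itemize}
\item \emph{Reparametrize.} Divide the objective by $\lambda>0$ and set $\mu=1/\lambda$. The interior arcs become the constants $a_{uv}$, and the terminal arcs become $\mu\,w(v)p(\rho_i)$. The minimizers are unchanged.
\item \emph{Restore monotonicity.} The terminal arcs are not yet monotone in opposite directions. Use the standard normalization: subtract $\min$ of the two terminal capacities from both, which shifts every cut by the same constant. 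Each node is then left with a single terminal arc of capacity $\mu\,w(v)\bigl(p(\rho_2)-p(\rho_1)\bigr)$, attached to the source or the sink according to the sign. Since the sign does not depend on $\mu$, each such capacity is monotone in the direction GGT needs.
\end{itemize}
GGT then yields at most $n-1$ breakpoints and nested minimal cuts, computed in the time of one max-flow. The cases $\lambda=0$ and $\lambda\to\infty$ are handled as endpoints; they give the pure-risk cut and the minimum-latency cut respectively. Each breakpoint interval's representative cut is optimal for the scalarization, hence a supported point of the tradeoff.

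\textbf{Step 4: remove the anchors.} Every feasible two-domain assignment has some service in domain~1 and some in domain~2, so it is feasible for at least one ordered anchor pair. The global optimum for each $\lambda$ is therefore the minimum over the $O(n^2)$ anchor pairs. Each per-pair optimal value is a piecewise-linear concave function of $\lambda$ with polynomially many pieces. Their lower envelope is computable in polynomial time and gives the supported sweep. Nestedness holds within a pair by GGT but can fail across pairs, since the minimizing pair may change between intervals. This is exactly the caveat in the statement.
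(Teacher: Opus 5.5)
Your proposal is correct and is essentially the paper's argument. You reduce to the anchored cut of Proposition~\ref{prop:coupled_min_cut_star}, divide by $\lambda$ and set $\mu=1/\lambda$ so the interaction capacities are fixed, then cancel the common terminal part so each service keeps one terminal arc of capacity $\mu\,\delta\,w(v)$ with $\delta=p(\rho_2)-p(\rho_1)$.

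One point in your monotonicity justification needs tightening. What matters is not that the sign is independent of $\mu$, but that $\delta$ has the same sign for every service. When $\delta<0$, every parametric arc is a sink arc increasing in $\mu$, which is not GGT's stated form. Exchanging the labels so that $\delta\ge0$, as the paper does (or equivalently reversing the network), turns them into source arcs nondecreasing in $\mu$.
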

To match the cited parametric-flow family, exchange labels so that $\delta=p(\rho_2)-p(\rho_1)\ge0$, subtract constants, divide the $\lambda>0$ objective by $\lambda$, and set $\mu=1/\lambda$.
The interaction capacities are then fixed and the source capacities $\mu\delta w(v)$ are monotone in $\mu$.
\begin{corollary}[Interval-robustness]\label{SIcor:interval_robustness}
Consider interval uncertainty in boundary-edge coefficients and compromise probabilities.
For each interaction edge $(u,v)\in E$, let $a_{uv}\in[\underline a_{uv},\overline a_{uv}]$, and for each compromise point $x$ (service or root) let $p(x)\in[\underline p(x),\overline p(x)]$.
Let $\overline a_{uv}$ and $\overline p(x)$ denote the upper endpoints, and write $\overline p$ for the collection $\{\overline p(x)\}_x$.
Then for any fixed design $(D,\{H_i\})$,
\[
\begin{aligned}
  \max_{\{a_{uv}\}}\mathrm{Lat}(D)
  &=
  \sum_{(u,v)\in E: D(u)\ne D(v)} \overline a_{uv}, \\
  \max_{\{p(x)\}}\mathrm{BR}_{\mathrm{node}}(D,\{H_i\})
  &=
  \mathrm{BR}_{\mathrm{node}}(D,\{H_i\})\big|_{p=\overline p}.
\end{aligned}
\]
Consequently, the robust counterparts of the scalarized objective (minimize worst-case scalarized objective) and the budgeted objective (minimize worst-case $\mathrm{BR}_{\mathrm{node}}$ subject to worst-case budget feasibility) reduce to the same problems with $a_{uv}$ and $p(x)$ replaced by their upper bounds.
In particular, under the assumptions of the two-domain min-cut regime, the interval-robust scalarized problem remains reducible to a minimum $s$--$t$ cut (using capacities $\overline a_{uv}$ and root-risk terms evaluated at $\overline p$), and the parametric sweep in Corollary~\ref{SIcor:parametric_cut_sweep} still applies.
\end{corollary}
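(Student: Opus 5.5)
The plan has three steps: show that each objective is monotone in the uncertain coefficients, conclude that the worst case sits at the upper endpoints, and then show the robust problems keep the structure needed for the min-cut regime and the parametric sweep.

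\textbf{Latency.} For a fixed design $(D,\{H_i\})$, write Eq.~(\ref{eq:lat_simple}) as $\mathrm{Lat}(D)=\sum_{(u,v)\in E}a_{uv}\mathbf{1}[D(u)\ne D(v)]$, with $a_{uv}=r_{uv}\ell_{uv}c_{\mathrm{pqc}}(u,v)$. Each coefficient enters with a nonnegative multiplier. The uncertainty set is the box $\prod_{(u,v)}[\underline a_{uv},\overline a_{uv}]$, and the objective is separable across edges. So the maximum is reached by setting every crossing coefficient to $\overline a_{uv}$; non-crossing coefficients do not matter. This gives the first identity.

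\textbf{Blast radius.} Use the representation in Eq.~(\ref{eq:br_node}). For fixed $(D,\{H_i\})$, $\mathrm{BR}_{\mathrm{node}}=\sum_x p(x)\,\mathrm{BR}_i(\{x\},H_i)$. The deterministic impacts $\mathrm{BR}_i(\{x\},H_i)\ge0$ because $w\ge0$, and they do not depend on $p$. The score is therefore linear and coordinatewise nondecreasing in $p$ over a box, so it is maximized at $p=\overline p$. One point needs a remark: root and service probabilities are separate coordinates, and $p$ does not affect feasibility or the tree family. The maximizer is therefore the same for every design, with no coupling between designs.

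\textbf{Robust problems.} The worst case is attained at the same parameter point $(\overline a,\overline p)$ for every design. Hence, for $\lambda\ge0$,
\[
\max_{a,p}\bigl(\mathrm{BR}_{\mathrm{node}}+\lambda\mathrm{Lat}\bigr)=\mathrm{BR}_{\mathrm{node}}|_{\overline p}+\lambda\mathrm{Lat}|_{\overline a},
\]
because both terms are separately maximized at the same point. Taking the minimum over $\mathcal F$ gives the nominal problem at the upper endpoints. For the budgeted version, "$\mathrm{Lat}(D)\le B$ for all $a$ in the box" is equivalent to $\mathrm{Lat}(D)|_{\overline a}\le B$ by the first identity, and the objective reduces by the second identity.

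\textbf{Min-cut regime and sweep.} Substituting the upper endpoints leaves every hypothesis of Proposition~\ref{prop:coupled_min_cut_star} unchanged. The root probabilities $\overline p(\rho_i)$ are still label-specific and independent of $D$. The edge costs $\lambda\overline a_{uv}$ are still nonnegative. Eligibility, fanout and depth are untouched. The construction therefore applies verbatim, with capacities $\lambda\overline a_{uv}$ and unary costs $w(v)\overline p(\rho_i)$. The same substitution carries over to Corollary~\ref{SIcor:parametric_cut_sweep}.

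The only step needing care is the min-cut regime's requirement that direct issuance be optimal. Under direct issuance, service probabilities $\overline p(v)$ contribute a partition-invariant constant $\sum_v \overline p(v)w(v)$. So the replacement changes only constants and root unary costs, and optimality of stars is preserved by the same argument as in the nominal case.
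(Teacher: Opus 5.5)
Your proposal is correct and follows the paper's argument. The paper's proof is the one-line observation that every uncertain coefficient enters $\mathrm{Lat}(D)$ and $\mathrm{BR}_{\mathrm{node}}$ with a nonnegative multiplier, so the worst case over the box is the upper-endpoint substitution for every design simultaneously, and you make explicit what it leaves implicit: the design-independence of the maximizer, which lets the inner maximum pass through the outer minimum and the budget constraint, and the persistence of star optimality and the min-cut and parametric-sweep hypotheses, which need only nonnegativity and $D$-independence of $\overline p$ and $\overline a$.
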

The endpoint substitution follows directly because every uncertain coefficient has a nonnegative multiplier in both objectives.
\begin{corollary}[Multiway-cut regime]\label{SIcor:multiway_cut_regime}
Under these direct-issuance assumptions, with $k\ge3$, $\lambda>0$, and uniform root compromise probability $p(\rho_i)=p_0$ for all $i$, the $\mathrm{BR}_{\mathrm{node}}$ term becomes constant.
If, in addition, $k$ anchor vertices $t_1,\dots,t_k\in V$ are required to lie in distinct domains (a policy constraint in (C2)), then optimizing the scalarized objective over $D$ is equivalent to a minimum multiway cut instance on the underlying undirected interaction graph (ignoring directions, equivalently summing antiparallel weights) with terminals $\{t_1,\dots,t_k\}$.
This problem is NP-hard and admits a $(2-2/k)$-approximation algorithm.\cite{Dahlhaus1992MultiwayCuts}
\end{corollary}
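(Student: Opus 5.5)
We prove Corollary~\ref{SIcor:multiway_cut_regime} in three steps. First we show that the blast-radius term is constant. Then we match the feasible sets of the two problems. Finally we match their objectives and cite Dahlhaus et al.\ for hardness and approximation.

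\textbf{Step 1: the blast-radius term is constant.} Under the direct-issuance assumptions, every $H_i$ is the star rooted at $\rho_i$, and that star is optimal. Direct issuance is feasible because $h\ge1$ and $\Delta\ge|V|-k+1$, and the largest nonempty domain has at most $|V|-k+1$ services. It is optimal because in Eq.~(\ref{eq:br_ancestor}) every arborescence has $\{\rho_i,v\}\subseteq\mathrm{Anc}_{H_i}(v)$, and the star attains exactly that set. Eq.~(\ref{eq:br_root_distance}) therefore gives
\[
\min_{\{H_i\}}\mathrm{BR}_{\mathrm{node}}=\sum_{i}p_0W(V_i)+\sum_{v}w(v)p(v)=p_0W(V)+\sum_{v}w(v)p(v).
\]
The right-hand side does not depend on $D$.

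\textbf{Step 2: the feasible sets coincide.} The scalarized objective, after subtracting this constant and dividing by $\lambda>0$, reduces to minimizing $\mathrm{Lat}(D)$. We must show that the feasible assignments are exactly the multiway cuts.
\begin{itemize}
\item \emph{Design to cut.} Any feasible $D$ puts $t_1,\dots,t_k$ in distinct domains. Its crossing edges therefore separate every pair of terminals.
\item \emph{Cut to design.} Take a minimal multiway cut. Its components can be grouped into $k$ labeled parts, one per terminal: each component containing a terminal goes to that terminal's label, and each component with no terminal goes to an arbitrary label. Every label is nonempty because it contains its terminal, so (C1) holds. The anchor constraint in (C2) holds by construction. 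Step~1 supplies feasible stars for (C3)--(C4).
\end{itemize}

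\textbf{Step 3: the objectives match.} With $a_{uv}=r_{uv}\ell_{uv}c_{\mathrm{pqc}}(u,v)\ge0$, the latency $\mathrm{Lat}(D)$ equals the weight of the crossing edges. The crossing set contains a multiway cut, so $\mathrm{Lat}(D)$ is at least the minimum multiway cut weight. In the other direction, the construction in Step~2 yields $\mathrm{Lat}(D)\le$ the cut weight. Hence the two optimal values are equal and optimal solutions correspond to each other. Because the indicator $\mathbf 1[D(u)\ne D(v)]$ is symmetric, antiparallel directed edges can be merged by summing their weights, which produces the undirected instance. Hardness and the $(2-2/k)$-approximation then follow from Dahlhaus et al. For hardness we also need the reverse reduction: any multiway cut instance with nonnegative weights embeds by setting $r=\ell=1$ and $c_{\mathrm{pqc}}$ equal to the given edge weight.

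The step that needs the most care is feasibility in the cut-to-design direction. Extra components must be assigned without violating the fanout bound, and the bound $\Delta\ge|V|-k+1$ is exactly what guarantees this. We also need the problem to have no must-link or cannot-link constraints other than the anchors. That is the case here, since domain-policy constraints are limited to the specified anchors.
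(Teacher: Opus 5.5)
Your proposal is correct and follows the same route as the paper's largely implicit argument. Star optimality makes the minimized $\mathrm{BR}_{\mathrm{node}}$ independent of the assignment under uniform root risk, so for $\lambda>0$ the scalarized problem reduces to minimizing $\mathrm{Lat}(D)$ with the anchors in distinct domains, which is multiway cut, and hardness and the $(2-2/k)$ bound are inherited from Dahlhaus et al.; your component-grouping argument for the cut-to-design direction and your reverse embedding for hardness make explicit two details the paper leaves unstated.
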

At $\lambda=0$, all feasible assignments have the same scalarized objective value.
\begin{proposition}[Star regime: treewidth dynamic programming]\label{SIprop:star_treewidth}
Under the same star-derivation assumptions, for general $k\ge 2$ and arbitrary root compromise probabilities $p(\rho_i)$, the scalarized objective reduces (up to additive constants) to a uniform metric labeling objective, also called a Potts maximum-a-posteriori (MAP) objective.\cite{KleinbergTardos2002}
\[
  \min_{D:V\to[k]}\;
  \begin{aligned}[t]
    &\sum_{v\in V} c_{v,D(v)} \\
    &\quad+\;\sum_{(u,v)\in E} b_{uv}\,\mathbf{1}[D(u)\ne D(v)].
  \end{aligned}
\]
where $c_{v,i}=w(v)p(\rho_i)$ and $b_{uv}=\lambda\,r_{uv}\ell_{uv}c_{\mathrm{pqc}}(u,v)$.
Since $\mathbf{1}[D(u)\ne D(v)]$ is symmetric, one may equivalently view the edge-disagreement term as living on the underlying undirected graph with weights obtained by summing antiparallel directed interactions.
The displayed formulation permits unused labels and is therefore the at-most-$k$ relaxation of (C1).
If distinct anchors $t_1,\ldots,t_k$ are fixed to labels $1,\ldots,k$, the anchors enforce (C1). Given an $O(n)$-bag width-$\tau$ tree decomposition, dynamic programming (DP) then finds an exact optimum in $O(nk^{\tau+1})$ time.
Without fixed anchors, exact (C1) optimization follows by minimizing over all $(n)_k=n!/(n-k)!$ ordered anchor tuples, in $O((n)_k n k^{\tau+1})$ time and hence polynomial time for fixed $k$.
\end{proposition}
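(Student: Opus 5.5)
First I reduce the scalarized objective to the displayed labeling form. Under the direct-issuance assumptions, every feasible assignment $D$ admits the star $H_i$ in each domain. The star is also optimal for fixed $D$. Indeed, by Eq.~(\ref{eq:br_root_distance}), any tree's root-distance term $\sum_{v}w(v)d_{H_i}(\rho_i,v)$ is at least $\sum_v w(v)p(v)$, because the path to $v$ always contains the arc entering $v$ and all lengths are nonnegative. The star attains this bound exactly. Hence
\[
\min_{\{H_i\}}\mathrm{BR}_{\mathrm{node}}=\sum_{v\in V}w(v)p(v)+\sum_{v\in V}w(v)p(\rho_{D(v)}).
\]
The first sum does not depend on $D$ and is the additive constant. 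The second sum equals $\sum_v c_{v,D(v)}$ with $c_{v,i}=w(v)p(\rho_i)$. The term $\lambda\mathrm{Lat}(D)$ is the stated disagreement sum with weights $b_{uv}$. Since the indicator is symmetric, antiparallel weights can be merged into undirected weights. Feasibility of the star requires root fanout at least $|V_i|$. This is guaranteed by $\Delta\ge|V|-k+1$ when domains are nonempty, or by $\Delta\ge|V|$ when labels may go unused, which gives the at-most-$k$ relaxation.

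Next comes the dynamic program. I would convert the given width-$\tau$ decomposition into a nice tree decomposition with $O(n)$ bags. For each bag $X_t$ and each labeling $\sigma:X_t\to[k]$, let $f_t(\sigma)$ be the minimum cost of labeling the vertices in the subtree of $t$ consistently with $\sigma$. This cost counts the unary terms of forgotten vertices and the edge terms charged at forget nodes.
\begin{itemize}
\item At an introduce node, $\sigma$ is extended to the new vertex.
\item At a forget node for a vertex $v$, the table minimizes over the label of $v$. It adds $c_{v,\cdot}$ and the $b_{uv}$ terms for the edges from $v$ to the remaining bag vertices. Every edge lies in some bag, and each vertex is forgotten exactly once at a well-defined top node, so each edge is charged exactly once.
\item At a join node, the two children's tables are added.
\end{itemize}
Each bag has $k^{\tau+1}$ labelings. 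The per-labeling work is $O(1)$ amortized, once the edge-charging work is absorbed by bounded bag size, treating $\tau$ as fixed. This gives $O(nk^{\tau+1})$ in total. Anchors $t_j\mapsto j$ are enforced by restricting $\sigma$ to be consistent with them in every bag containing an anchor. Distinct anchors on all $k$ labels make every label used, so (C1) holds exactly.

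For the unanchored case, any (C1)-feasible $D$ uses all $k$ labels. Choosing one representative vertex per label gives an ordered tuple of distinct vertices that is consistent with $D$. Conversely, every anchored optimum is (C1)-feasible. So the minimum over the $(n)_k$ tuples of the anchored DP optimum equals the (C1) optimum, which gives the stated time bound.

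The step needing the most care is the edge-charging bookkeeping in the DP. It must charge each $b_{uv}$ exactly once, and it must attain the claimed $k^{\tau+1}$ factor without an extra $\tau^2$ term. I would handle this with the standard nice-decomposition convention and by treating $\tau$ as a constant.
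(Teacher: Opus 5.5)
Your proposal is correct and follows essentially the same route as the paper: you reduce to the Potts objective through star optimality, run the standard introduce/forget/join tree-decomposition DP with anchors imposed as hard label constraints, and minimize over the $(n)_k$ ordered anchor tuples. The differences are minor. You rederive star optimality from the root-distance form instead of citing the star proposition, and you use an explicit charge-at-forget convention on a nice decomposition. Like the paper, your $O(nk^{\tau+1})$ bound treats factors polynomial in $\tau$ as constants.
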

Proof in Supplementary Sec.~\ref{SIapp:proof_star_treewidth}.
\begin{corollary}[At-most-$k$ star-regime approximation]\label{SIcor:star_metric_labeling_approx}
The at-most-$k$ relaxation in Proposition~\ref{SIprop:star_treewidth} is a uniform metric labeling instance.
It admits a polynomial-time $2$-approximation algorithm.\cite{KleinbergTardos2002}
This inherited guarantee does not by itself apply to the exact-$k$ nonempty-domain constraint (C1).
\end{corollary}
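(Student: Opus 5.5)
The plan is to show that the at-most-$k$ relaxation is literally an instance of the uniform metric labeling problem of Kleinberg--Tardos, and then to import their algorithm and guarantee. First, write out the relaxation from Proposition~\ref{SIprop:star_treewidth} explicitly. The label set is $L=\{1,\ldots,k\}$, and the graph is the underlying undirected interaction graph, with edge weights $b_{uv}\ge0$ obtained by summing antiparallel directed interactions. This summation is legitimate because the disagreement indicator is symmetric, as already noted in the proposition. The assignment costs are $c_{v,i}=w(v)p(\rho_i)\ge0$, and the separation cost is the uniform metric $d(i,j)=\mathbf{1}[i\ne j]$. I then check the hypotheses of the metric labeling framework: nonnegative edge weights, arbitrary nonnegative assignment costs, and a metric on labels. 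Here $d$ is the discrete metric, and nonnegativity follows from $w\ge0$, $p\in[0,1]$, $\lambda\ge0$, $r,\ell,c_{\mathrm{pqc}}\ge0$. Because the relaxation allows unused labels, every map $D:V\to L$ is feasible, which is exactly the feasible set of that framework.

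Second, I would verify that objective values correspond exactly, up to the additive constant dropped in Proposition~\ref{SIprop:star_treewidth}. That constant is assignment-independent and nonnegative; under direct issuance the derivation terms vanish, so only the root contribution $p(\rho_{D(v)})w(v)$ depends on $D$. A multiplicative guarantee is not preserved under additive shifts in general, so I must handle this point carefully. The key observation is that the proposition's displayed objective is the full scalarized objective with nothing subtracted. In Eq.~(\ref{eq:br_root_distance}) with stars, $d_{H_i}(\rho_i,v)=p(v)$, which gives the term $\sum_v w(v)p(v)$. This term is a nonnegative constant $C$, and adding it can only improve the ratio: if $A\le 2\,\mathrm{OPT}'$, then $A+C\le 2(\mathrm{OPT}'+C)$. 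So the $2$-approximation for the labeling objective transfers to the relaxed scalarized objective.

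Third, I would invoke the Kleinberg--Tardos result for the uniform metric. Their LP relaxation with rounding by random labels and thresholds gives expected cost at most twice optimum. It can be derandomized, or one can cite their deterministic statement, to obtain a polynomial-time $2$-approximation. The instance size is polynomial in $n$, $|E|$, and $k$.

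Finally, I would justify the caveat that the guarantee does not extend to (C1). The rounding may leave some label unused, and enforcing nonemptiness changes the feasible set. The optimum under (C1) can therefore strictly exceed the relaxed optimum, and the rounded solution need not be feasible. A small example suffices: three unit-weight services with large edge weights and $k=3$ make the relaxed optimum use a single label. The main obstacle is the constant-shift bookkeeping in the second step; the remaining steps are direct citation.
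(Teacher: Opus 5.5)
Your proposal is correct and follows the paper's route: you read the star-regime Potts objective displayed in Proposition~\ref{SIprop:star_treewidth} as a uniform metric labeling instance over all maps $D:V\to[k]$, inherit the Kleinberg--Tardos $2$-approximation by citation, and rest the caveat on unused labels being allowed, so that both the relaxed optimum and the rounded output may violate (C1). Your extra bookkeeping is sound---the omitted term $\sum_{v}w(v)p(v)\ge0$ can only improve the ratio, which matters only if you want the guarantee for the full scalarized objective rather than the displayed one---but delete the sentence saying the displayed objective is the full objective ``with nothing subtracted,'' since your next line correctly identifies the subtracted term.
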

\begin{corollary}[Treewidth under refinement]\label{SIcor:treewidth_refinement}
Let $G$ be the underlying undirected interaction graph and suppose $\mathrm{tw}(G)=\tau$.
Form a refined graph $G'$ by replacing each vertex $v\in V$ by between one and $s$ scoped principals (e.g., public vs.\ privileged).
Add edges only between refined endpoints of original edges.
The standard bag-expansion construction gives $\mathrm{tw}(G')\le s(\tau+1)-1$.
Under the direct-issuance assumptions of Proposition~\ref{SIprop:star_treewidth}, an expanded decomposition with $O(|V'|)$ bags gives runtime $O(|V'|\,k^{s(\tau+1)})$ for specified anchors or the at-most-$k$ relaxation.
For unanchored exact-$k$ optimization, enumerating ordered anchor tuples gives runtime $O((|V'|)_k\,|V'|\,k^{s(\tau+1)})$.
\end{corollary}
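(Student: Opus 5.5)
The statement to prove is Corollary~\ref{SIcor:treewidth_refinement}. It has two parts: a treewidth bound for the refined graph $G'$, obtained by bag expansion, and runtime bounds, obtained by plugging that width into Proposition~\ref{SIprop:star_treewidth}.

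\textbf{Step 1: build the expanded decomposition.} Fix a tree decomposition $(T,\{X_t\})$ of $G$ of width $\tau$ with $O(|V|)$ bags. For each $v\in V$, let $\phi(v)\subseteq V'$ be its $1$ to $s$ scoped principals. On the same tree $T$, define new bags $X'_t=\bigcup_{v\in X_t}\phi(v)$. Then check the three decomposition axioms in order.
\begin{itemize}
\item \emph{Vertex coverage.} Each $v'\in\phi(v)$ lies in every $X'_t$ with $v\in X_t$, and such a $t$ exists.
\item \emph{Edge coverage.} By construction, every edge of $G'$ joins $u'\in\phi(u)$ and $v'\in\phi(v)$ with $\{u,v\}\in E(G)$. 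Some bag $X_t$ contains both $u$ and $v$, so $X'_t$ contains both $u'$ and $v'$. If edges between two scoped copies of the same service are also present, they are covered by any bag containing that service.
\item \emph{Running intersection.} The bags containing $v'\in\phi(v)$ are exactly the bags of the original decomposition containing $v$, since the sets $\phi(v)$ are disjoint. These form a connected subtree of $T$ by the original decomposition.
\end{itemize}
The bag sizes satisfy $|X'_t|\le s|X_t|\le s(\tau+1)$. Hence $\mathrm{tw}(G')\le s(\tau+1)-1$. The number of bags is unchanged, $O(|V|)\subseteq O(|V'|)$ because $|V'|\ge|V|$. Nice-form conversion, if the dynamic program needs it, keeps $O(|V'|)$ bags.

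\textbf{Step 2: transfer the direct-issuance assumptions to $G'$.} The refined principals are services in their own right. The star-regime assumptions (eligible root arcs, fanout $\Delta\ge|V'|-k+1$ or $|V'|$, $h\ge1$, and root risk independent of $D$) are imposed on the refined instance. The objective therefore again reduces to the Potts form of Proposition~\ref{SIprop:star_treewidth}, with unary costs $w(v')p(\rho_i)$ and edge weights on $G'$.

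\textbf{Step 3: apply Proposition~\ref{SIprop:star_treewidth} with width $\tau'=s(\tau+1)-1$ on $n'=|V'|$ vertices.}
\begin{itemize}
\item With specified anchors, or under the at-most-$k$ relaxation (no anchors), the DP runs in $O(n'k^{\tau'+1})=O(|V'|\,k^{s(\tau+1)})$.
\item For unanchored exact-$k$ optimization, enumerate the $(|V'|)_k$ ordered anchor tuples. Each tuple costs one DP run, giving $O((|V'|)_k\,|V'|\,k^{s(\tau+1)})$.
\end{itemize}

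\textbf{Main obstacle.} The only real care point is in Step 1: confirming that refinement introduces no edges outside the product of original edges. The statement guarantees this, so running intersection holds. If intra-service scope edges exist, they are still covered, as noted above. Otherwise the argument is routine bookkeeping.
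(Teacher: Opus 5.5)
Your proposal is correct and follows the paper's proof essentially step for step. Both expand each bag to $B'_t=\bigcup_{v\in B_t}S_v$, get edge coverage from the bag holding the original edge and connectedness from the parent vertex's subtree, bound bag sizes by $s(\tau+1)$, and feed width $s(\tau+1)-1$ on $|V'|$ vertices into the dynamic program of Proposition~\ref{SIprop:star_treewidth}, with the extra $(|V'|)_k$ factor from enumerating ordered anchor tuples. One minor imprecision: your aside that nice-form conversion keeps $O(|V'|)$ bags ignores that conversion can cost a factor of order the width $s(\tau+1)$, so it holds only under the same constant-width convention the paper's runtime bound already uses; you can avoid the issue by running the DP directly on the unchanged $O(|V|)$-bag expanded decomposition.
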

Proof in Supplementary Sec.~\ref{SIapp:proof_treewidth_refinement}.
\begin{remark}[Tightness]\label{SIrem:treewidth_refinement_tight}
The dependence on $s$ and $\tau$ is tight up to constants.
For $\tau\ge 1$, let $G$ be a clique on $\tau+1$ vertices and replace each vertex by an independent set of $s$ refined principals, with all cross-fiber edges induced by the original clique.
The resulting complete $(\tau+1)$-partite graph has equal part size $s$ and treewidth $s\tau$, within an additive $s-1$ of the upper bound $s(\tau+1)-1$.
\end{remark}

We now fix the domain assignment and optimize one derivation tree at a time.
For a fixed domain, write its contribution as
\begin{equation}
  \mathrm{BR}^{(i)}_{\mathrm{node}}(H_i)
  =\sum_{v\in V_i}w(v)\sum_{x\in\mathrm{Anc}_{H_i}(v)}p(x).
  \label{SIeq:br_domain}
\end{equation}

\begin{corollary}[Exact chain ordering ($\Delta=1$)]\label{SIcor:chain}
Fix a domain $V_i$ with weights $w(v)\ge 0$ and compromise probabilities $p(v)\ge 0$.
Assume (C3)--(C4) with $\Delta=1$, $h\ge |V_i|$, and complete derivation eligibility on $V_i\cup\{\rho_i\}$.
Any feasible $H_i$ is a directed chain rooted at $\rho_i$ and therefore induces an ordering (permutation) $\sigma$ of $V_i$.
Then a minimizer of the domain objective in Eq.~(\ref{SIeq:br_domain}) is obtained by sorting services by nondecreasing ratio $p(v)/w(v)$ (with the convention $p(v)/0=+\infty$ when $w(v)=0$).
This optimal chain can be computed in $O(|V_i|\log |V_i|)$ time.
\end{corollary}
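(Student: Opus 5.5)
The plan is to reduce the domain objective to the single-machine weighted completion time problem $1\,\|\,\sum_j w_jC_j$ and apply Smith's ratio rule by an adjacent-interchange argument.

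\textbf{Step 1: feasible trees are chains.} With $\Delta=1$, every vertex of $H_i$, including $\rho_i$, has at most one child. Since $H_i$ is an arborescence spanning $V_i\cup\{\rho_i\}$, it is a single directed path $\rho_i\to\sigma(1)\to\cdots\to\sigma(m)$, where $m=|V_i|$. Its depth is $m\le h$, so (C4) holds. Complete eligibility makes every permutation $\sigma$ feasible, so the feasible set is exactly the set of orderings.

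\textbf{Step 2: rewrite the objective.} In Eq.~(\ref{SIeq:br_domain}), $\mathrm{Anc}(\sigma(j))=\{\rho_i,\sigma(1),\dots,\sigma(j)\}$. The objective therefore equals
\[
p(\rho_i)W(V_i)+\sum_{j=1}^{m}w(\sigma(j))\,C_j,
\qquad C_j=\sum_{t\le j}p(\sigma(t)).
\]
The first term does not depend on $\sigma$. The remaining sum is exactly $1\,\|\,\sum w_jC_j$ with processing times $p(v)$, which matches Eq.~(\ref{eq:br_root_distance}) for a chain.

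\textbf{Step 3: exchange argument.} Swap adjacent positions $j$ and $j+1$, holding $u=\sigma(j)$ and $v=\sigma(j+1)$. Only the terms of $u$ and $v$ change. Placing $u$ before $v$ is no worse exactly when $p(u)w(v)\le p(v)w(u)$. Take any optimal order, which exists because the feasible set is finite. Bubble-sort it into nondecreasing $p/w$ order using only swaps that fix inversions. Each such swap does not increase the cost, so the sorted order is also optimal.

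The case $w=0$ needs care, and I expect it to be the main obstacle.
\begin{itemize}
\item If $w(v)=0$ and $p(v)>0$, the comparison $p(u)\cdot 0\le p(v)w(u)$ always holds. Putting $v$ last, as the convention $p/0=+\infty$ prescribes, is then consistent.
\item If $p(v)=w(v)=0$, the vertex contributes nothing and moving it changes no other term, so its position is irrelevant.
\end{itemize}
Ties in the ratio give equal cost, so any tie-breaking rule is fine. I would state the inversion condition as a cross-product inequality, $p(u)w(v)>p(v)w(u)$, rather than a quotient comparison, so that it is well defined for zero weights.

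\textbf{Step 4: running time.} Computing the sort keys takes $O(m)$ time. Sorting takes $O(m\log m)$, and building the chain takes $O(m)$, giving $O(|V_i|\log|V_i|)$ overall.
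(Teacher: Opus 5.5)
Your proposal is correct and follows the paper's proof essentially step for step. The paper likewise drops the constant root term $p(\rho_i)W(V_i)$, identifies the remainder as $\sum_j w_jC_j$ with processing times $p(v_j)$, shows that an adjacent swap changes the cost by $w(v)p(u)-w(u)p(v)$, and removes inversions to conclude that ratio sorting is optimal in $O(n\log n)$ time; your explicit handling of zero weights goes slightly beyond the paper's shorter argument. That handling is needed if you sort by the cross-product relation, because the relation is not transitive through vertices with $p(v)=w(v)=0$, and your separate observation that such vertices can be placed anywhere is what closes that case.
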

Proof in Supplementary Sec.~\ref{SIapp:proof_chain}.

\begin{proposition}[Star optimality ($\Delta\ge |V_i|$)]\label{SIprop:star}
Fix a domain $V_i$ with weights $w(v)\ge 0$ and compromise probabilities $p(v)\ge 0$, and assume $\Delta\ge |V_i|$, $h\ge 1$, and eligibility of every arc $\rho_i\to v$ for $v\in V_i$.
Then the star arborescence with arcs $\rho_i\rightarrow v$ for all $v\in V_i$ minimizes $\mathrm{BR}^{(i)}_{\mathrm{node}}(H_i)$ among all feasible $H_i$.
\end{proposition}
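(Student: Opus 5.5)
The plan is a termwise lower bound on the ancestor form in Eq.~(\ref{SIeq:br_domain}). For any feasible $H_i$, every service $v\in V_i$ has an ancestor set $\mathrm{Anc}_{H_i}(v)$ that contains at least $\rho_i$ and $v$ itself. These are two distinct vertices, since $\rho_i$ is virtual and not a service. Because all probabilities are nonnegative, this gives
\[
\sum_{x\in\mathrm{Anc}_{H_i}(v)}p(x)\;\ge\;p(\rho_i)+p(v).
\]
Multiplying by $w(v)\ge0$ and summing over $V_i$ yields, for every feasible $H_i$,
\[
\mathrm{BR}^{(i)}_{\mathrm{node}}(H_i)\;\ge\;p(\rho_i)W(V_i)+\sum_{v\in V_i}w(v)p(v).
\]
Equivalently, in the root-distance form of Eq.~(\ref{eq:br_root_distance}), the distance $d_{H_i}(\rho_i,v)$ is at least the length $p(v)$ of the final arc entering $v$.

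Next I show that the star attains this bound. In the star every service is a child of $\rho_i$, so $\mathrm{Anc}(v)=\{\rho_i,v\}$ exactly, and the inequality above holds with equality for each $v$.

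It remains to check that the star is feasible, so that it actually lies in the family being minimized. Under (C3), it is a rooted arborescence spanning $V_i$ whose arcs are all of the form $\rho_i\to v$, and these are eligible by hypothesis. Under (C4), the only vertex with positive out-degree is $\rho_i$, whose out-degree $|V_i|$ is at most $\Delta$, and every root-to-leaf depth equals $1\le h$. If the statement intends the applicable domain-policy constraints to be part of feasibility as well, I will note that the proposition's hypotheses restrict the feasible family only through (C3)--(C4) and eligibility, so no further check is needed.

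The main point of care is the case $V_i=\emptyset$, or where $\rho_i$'s own contribution is counted. Eq.~(\ref{SIeq:br_domain}) includes $\rho_i$ in each ancestor set, and that term is the same for every $H_i$, so it does not affect the argument. The rest is routine: because the lower bound holds termwise and is matched termwise, no exchange argument is required.
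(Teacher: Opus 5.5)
Your proposal is correct and follows essentially the same route as the paper's proof. The paper writes the domain objective as $p(\rho_i)\sum_{v\in V_i}w(v)+\sum_{v\in V_i}w(v)p(v)$ plus a nonnegative double sum over the strict intermediate ancestors $\mathrm{Anc}_{H_i}(v)\setminus\{\rho_i,v\}$, notes that this sum vanishes for the star, and checks the star's feasibility under (C3)--(C4) exactly as you do; this is your termwise bound $\sum_{x\in\mathrm{Anc}_{H_i}(v)}p(x)\ge p(\rho_i)+p(v)$ with equality for the star.
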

Proof in Supplementary Sec.~\ref{SIapp:proof_star}.

\begin{proposition}[Depth-two case ($h=2$)]\label{SIprop:h2}
Fix a domain $V_i$ with $n=|V_i|$, assume complete derivation eligibility and $h=2$, and let $\Delta\ge 1$ (so feasibility requires $n\le \Delta+\Delta^2$).
If $n>\Delta$, there exists an optimal $H_i$ in which exactly $\Delta$ services are attached directly to $\rho_i$ (depth 1), and all remaining services have depth 2.
Fix any depth-1 set $U\subseteq V_i$ with $|U|=\Delta$ and define leaves $L=V_i\setminus U$.
An optimal depth-two arborescence consistent with hub set $U$ is obtained by solving the assignment problem below (up to additive constants in Eq.~(\ref{SIeq:br_domain})):
\begin{equation}
  \begin{aligned}
    \min_{\{x_{uv}\}} \quad & \sum_{u\in U}\sum_{v\in L} p(u)\,w(v)\,x_{uv} \\
    \text{s.t.}\quad & \sum_{u\in U} x_{uv} = 1 \quad \forall v\in L \\
    & \sum_{v\in L} x_{uv} \le \Delta \quad \forall u\in U \\
    & x_{uv}\in\{0,1\}.
  \end{aligned}
  \label{SIeq:h2_assignment}
\end{equation}
Because costs factor as $p(u)w(v)$, an optimal assignment attaches the largest $w(v)$ to the smallest $p(u)$ (fill the lowest-$p$ hub up to capacity $\Delta$, then proceed in increasing $p$).
If $\Delta$ is a fixed constant, enumerating all $\binom{n}{\Delta}$ hub sets and solving the corresponding assignment yields an exact algorithm running in $O(n^{\Delta}\,\mathrm{poly}(n))$ time.
\end{proposition}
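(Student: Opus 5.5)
The plan is to rewrite the domain objective in Eq.~(\ref{SIeq:br_domain}) for a depth-two tree. A depth-1 service $u$ contributes $w(u)(p(\rho_i)+p(u))$. A depth-2 service $v$ under hub $u$ contributes $w(v)(p(\rho_i)+p(u)+p(v))$. The total is therefore
\[
p(\rho_i)W(V_i)+\sum_{v\in V_i}w(v)p(v)+\sum_{v\ \text{at depth }2}p(\mathrm{par}(v))\,w(v).
\]
The first two terms do not depend on $H_i$. Only the last term, the parent-risk cost paid by depth-2 services, needs to be minimized.

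The first claim is that some optimum has exactly $\Delta$ services at depth~1 when $n>\Delta$. The root has fanout at most $\Delta$, so at most $\Delta$ services sit at depth~1. Suppose an optimal tree has fewer than $\Delta$. Because $n>\Delta$, some depth-2 service $v$ exists. Move $v$ directly under $\rho_i$; this is allowed because the root has spare fanout and eligibility is complete. The move removes the nonnegative term $p(\mathrm{par}(v))w(v)$ and adds nothing, since $v$ has no children yet. The new tree is still feasible:
\begin{itemize}
\item depth stays at most~2;
\item the old parent loses a child;
\item $v$ has out-degree $0\le\Delta$.
\end{itemize}
The objective does not increase. Repeating the move until the root has $\Delta$ children gives an optimal tree with exactly $\Delta$ hubs. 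Depth $h=2$ then forces every other service to depth~2.

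For a fixed hub set $U$ with $|U|=\Delta$, every feasible tree using $U$ corresponds exactly to a 0/1 assignment of $L$ to $U$. Each leaf gets one parent, and each hub has fanout at most $\Delta$. Leaves cannot have children because of the depth limit. The parent-risk cost is then exactly the objective of Eq.~(\ref{SIeq:h2_assignment}), so solving that assignment is optimal for $U$. Its feasibility requires $|L|\le\Delta^2$, which is the stated capacity condition.

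The greedy rule follows from a rearrangement (exchange) argument. Think of hub capacity as $\Delta\cdot|U|$ slots, each carrying multiplier $p(u)$. The objective pairs leaf weights with slot multipliers. The sum of products is minimized when the largest weights are paired with the smallest multipliers, with unused slots taken from the highest-$p$ hubs. Concretely, if $w(v)>w(v')$ but $p(\mathrm{par}(v))>p(\mathrm{par}(v'))$, swapping their parents changes the cost by
\[
-(w(v)-w(v'))\bigl(p(\mathrm{par}(v))-p(\mathrm{par}(v'))\bigr)\le0.
\]
Similarly, moving a leaf from a high-$p$ hub to a lower-$p$ hub with free capacity never increases cost. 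Hence filling the hubs in increasing order of $p$, each to capacity, with leaves in decreasing order of $w$, is optimal.

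Finally, enumerate all $\binom{n}{\Delta}=O(n^\Delta)$ hub sets. For each, run the greedy rule in $O(n\log n)$ time, or a generic assignment solver in $\mathrm{poly}(n)$, and keep the best. The first claim guarantees the optimum is among the candidates, so the total running time is $O(n^\Delta\,\mathrm{poly}(n))$.

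The main obstacle is the first claim: checking that promoting a depth-2 service never breaks feasibility and never increases cost. Nonnegativity of $p$ and $w$ together with the depth-2 structure, in which leaves have no descendants, settles it.
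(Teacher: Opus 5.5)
Your proposal is correct and follows the paper's proof. It uses the same leaf-promotion step to saturate the root fanout, with the objective dropping by $p(\mathrm{par}(v))\,w(v)\ge 0$; the same reduction of a fixed hub set to the capacitated assignment problem; the same exchange-and-move argument for the greedy rule, which your slot/rearrangement framing packages slightly more cleanly; and the same enumeration over $\binom{n}{\Delta}$ hub sets.
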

Proof in Supplementary Sec.~\ref{SIapp:proof_h2}.

\subsection{Proof of coupled NP-hardness (chains)}\label{SIapp:proof_coupled_hard_chain}
\begin{proof}
We give a polynomial-time reduction from the scheduling problem $P_k\mathbin{\|}\sum_j w_j C_j$ on $k$ identical parallel machines.\footnote{In the standard three-field notation, $C_j$ denotes the completion time of job $j$ on its assigned machine.}
Skutella and Woeginger show that the problem is strongly NP-hard and admits a polynomial-time approximation scheme (PTAS).\cite{SkutellaWoeginger2000}

Fix $k\ge 3$ and take an instance with jobs $J=\{1,\dots,n\}$, processing times $t_j>0$, and weights $w_j\ge 0$.
Construct our instance as follows.
We may assume $n\ge k$: if $n<k$, an optimal schedule assigns each job to a distinct machine and can be found in polynomial time, so hardness is witnessed on instances with $n\ge k$.
Let $V$ contain one vertex $v_j$ per job $j$.
Set $E=\emptyset$ so that $\mathrm{Lat}(D)=0$ for all assignments.
Set the feasibility parameters $(\Delta,h)=(1,|V|)$.
Set all root compromise probabilities to zero: $p(\rho_i)=0$ for $i=1,\dots,k$.
Assign node compromise probabilities by scaling the processing times into $(0,1)$: let $T=1+\sum_{j\in J} t_j$ and define $p(v_j)=t_j/T$.
Set criticality weights $w(v_j)=w_j$.
This defines an instance satisfying the theorem's restrictions.

Because eligibility is complete, $\Delta=1$, and (C3) requires an arborescence, each feasible $H_i$ is a directed chain rooted at $\rho_i$ spanning $V_i$, i.e., any linear order of the vertices assigned to domain~$i$ is feasible.
Consider any feasible design $(D,\{H_i\})$ and interpret each vertex $v_j$ as job $j$, assigned to machine $D(v_j)$ and processed in the chain order induced by $H_{D(v_j)}$.
For a job $j$ scheduled on a machine with preceding jobs $j_1,\dots,j_m$, the completion time is
\[
  C_j \;=\; \sum_{\ell=1}^{m} t_{j_\ell} + t_j.
\]
In our objective, for the corresponding vertex $v_j$, the ancestor set along the chain (excluding the root, which has $p(\rho_i)=0$) is exactly $\{v_{j_1},\dots,v_{j_m},v_j\}$, so
\[
  \sum_{x\in \mathrm{Anc}_{H_{D(v_j)}}(v_j)} p(x) \;=\; \frac{1}{T}\Big(\sum_{\ell=1}^{m} t_{j_\ell} + t_j\Big) \;=\; \frac{C_j}{T}.
\]
Therefore, summing Eq.~(\ref{SIeq:br_domain}) over all domains and using $p(\rho_i)=0$,
\[
  \mathrm{BR}_{\mathrm{node}}(D,\{H_i\})
  \;=\;
  \sum_{j\in J} w_j \cdot \frac{C_j}{T}
  \;=\;
  \frac{1}{T}\sum_{j\in J} w_j C_j,
\]
so an optimizer of $\mathrm{BR}_{\mathrm{node}}$ yields an optimizer of $P_k\mathbin{\|}\sum_j w_j C_j$ (and vice versa), up to the positive scaling factor $1/T$.

\paragraph{Non-empty domains vs.\ unused machines}
Constraint (C1) requires each domain $V_i$ to be non-empty, corresponding to schedules that use all $k$ machines.
When $n\ge k$, this restriction does not change the scheduling optimum: given any schedule with an idle machine, select a machine with at least two jobs and move its last job to the idle machine.
This weakly decreases that job's completion time and leaves all other completion times unchanged, so $\sum_j w_j C_j$ does not increase.
By repeating, there exists an optimal schedule that uses all machines.
Therefore the scheduling optimum equals the optimum among schedules that use all machines, and our reduction is valid under (C1).
It follows that our coupled optimization is NP-hard for any fixed $k\ge 3$, and strongly NP-hard when $k$ is part of the input.
\end{proof}

\subsection{Proof of the two-domain min-cut regime}\label{SIapp:proof_coupled_min_cut_star}
\begin{proof}
Let $k=2$.
Under the assumptions of the proposition, for every assignment satisfying (C1) we may restrict attention to star derivations within each domain (Proposition~\ref{SIprop:star}).
In a star, each service $v$ has ancestors $\mathrm{Anc}(v)=\{\rho_{D(v)},v\}$, so
\[
  \mathrm{BR}_{\mathrm{node}}(D,\{H_i^{\star}\})
  \;=\;
  \sum_{v\in V} w(v)p(v)
  \;+\;
  \sum_{v\in V} w(v)\,p(\rho_{D(v)}),
\]
where the first term is constant over all $D$.
Let $a_{uv}=r_{uv}\ell_{uv}c_{\mathrm{pqc}}(u,v)$ so that $\mathrm{Lat}(D)=\sum_{(u,v)\in E} a_{uv}\,\mathbf{1}[D(u)\ne D(v)]$.
Dropping additive constants, the scalarized objective reduces to minimizing over $D$ the objective
\[
  \sum_{v\in V} c_{D(v)}(v) \;+\; \sum_{(u,v)\in E} b_{uv}\,\mathbf{1}[D(u)\ne D(v)],
\]
where $c_i(v)=p(\rho_i)w(v)$ and $b_{uv}=\lambda a_{uv}$.

Build a directed $s$--$t$ network with vertices $\{s,t\}\cup V$.
For each $v\in V$, add arcs $s\to v$ of capacity $c_2(v)$ and $v\to t$ of capacity $c_1(v)$.
For each interaction edge $(u,v)\in E$, add arcs $u\to v$ and $v\to u$, each of capacity $b_{uv}$.
For any $s$--$t$ cut $(S,T)$ (with $s\in S$, $t\in T$), define a partition $D$ by assigning $D(v)=1$ if $v\in S$ and $D(v)=2$ if $v\in T$.
Then the cut capacity equals the objective above: a vertex $v$ contributes $c_1(v)$ iff $v\in S$ (via $v\to t$ crossing), and $c_2(v)$ iff $v\in T$ (via $s\to v$ crossing).
For each $(u,v)\in E$, neither symmetric arc crosses when $D(u)=D(v)$. When $D(u)\ne D(v)$, exactly one crosses and contributes $b_{uv}$.
Thus, minimizing the scalarized objective is equivalent to a minimum $s$--$t$ cut and is solvable in polynomial time.
\paragraph{Enforcing non-empty domains}
Constraint (C1) requires that both domains contain at least one service vertex.
If two anchors $a,b\in V$ are specified and must satisfy $D(a)=1$ and $D(b)=2$, enforce these constraints by adding arcs $s\to a$ and $b\to t$ with capacity larger than any finite feasible cut (equivalently, treat $a$ and $b$ as fixed terminals on the $s$ and $t$ sides).
If no anchors are specified, enforce non-emptiness by taking the minimum over all ordered pairs $(a,b)$ of distinct vertices: for each pair, solve the anchored min-cut instance above and return the best solution.
This adds a factor of $O(|V|^2)$ to the runtime and remains polynomial.
\end{proof}

\subsection{Derivation of Corollary~\ref{SIcor:chain}}\label{SIapp:proof_chain}
\begin{proof}
Complete derivation eligibility and constraints (C3)--(C4) with $\Delta=1$ make every ordering of $V_i$ a feasible chain.
For chain order $(v_1,\ldots,v_n)$, its objective is
\[
  p(\rho_i)\sum_{t=1}^{n}w(v_t)
  +\sum_{t=1}^{n}w(v_t)\sum_{j=1}^{t}p(v_j).
\]
The first term is constant, and the second is $\sum_jw_jC_j$ with processing times $p(v_j)$.
For adjacent services $u$ and $v$, placing $u$ before $v$ rather than $v$ before $u$ changes this term by $w(v)p(u)-w(u)p(v)$.
Thus $u$ before $v$ is no worse whenever $p(u)/w(u)\le p(v)/w(v)$, using the convention in the corollary.
Repeatedly removing inversions proves that sorting by this ratio is optimal and takes $O(n\log n)$ time.
\end{proof}

\subsection{Proof of the additive issuer-score upper bound}\label{SIapp:proof_explicit_compiled_upper_bound}
\begin{proof}
Fix a domain $i$ and an issuer $a$, and abbreviate $T=T_i(a)$.
For each accepted target $v\in T$, let
\[
  \operatorname{Desc}_{H_i}(v)=\{u\in V_i : v\leadsto u \text{ in } H_i\}
\]
denote the descendants unlocked from $v$.
The explicit issuer contribution in domain $i$ is therefore
\[
  p(a)\sum_{u\in \bigcup_{v\in T} \operatorname{Desc}_{H_i}(v)} w(u)
  \;=\;
  p(a)\,W_{H_i}(T).
\]
The additive score instead counts each accepted target separately and contributes
\[
  p(a)\sum_{v\in T}\sum_{u\in \operatorname{Desc}_{H_i}(v)} w(u)
  \;=\;
  p(a)\sum_{v\in T} W_{H_i}(\{v\}).
\]
Because $w\ge 0$, the weight of a union is at most the sum of the individual weights:
\[
  \sum_{u\in \bigcup_{v\in T} \operatorname{Desc}_{H_i}(v)} w(u)
  \;\le\;
  \sum_{v\in T}\sum_{u\in \operatorname{Desc}_{H_i}(v)} w(u).
\]
Multiplying by $p(a)\ge 0$ preserves the inequality, so the additive contribution of issuer $a$ in domain $i$ is an upper bound on the explicit contribution.
Summing over all issuers and all domains, and then adding back the unchanged service-compromise and root terms, proves the pointwise upper bound on the full first-order objective.

Now assume that no accepted target in $T_i(a)$ is an ancestor of another.
In an arborescence, descendant sets of incomparable vertices are disjoint.
Hence the sets $\{\operatorname{Desc}_{H_i}(v)\}_{v\in T}$ are pairwise disjoint, so the union-weight inequality above is tight.
Therefore the additive and explicit contributions agree exactly issuer-by-issuer and domain-by-domain, and thus the full scores are equal.
In a star, each service has only itself as a service descendant, so accepted targets always form an antichain.
In a chain, all distinct services are comparable, so the antichain condition permits at most one target per issuer and domain.
For a target at position $j$, explicit propagation and the corresponding increase to $p_{\mathrm{eff}}$ both contribute $p(a)\sum_{t=j}^{m}w(v_t)$.
These observations give the star- and chain-overlay corollaries.
\end{proof}

\subsection{Proof of Proposition~\ref{SIprop:star}}\label{SIapp:proof_star}
\begin{proof}
Because $\Delta\ge |V_i|$, $h\ge 1$, and every root-to-service arc is eligible, the star with arcs $\rho_i\to v$ for all $v\in V_i$ is feasible under (C3)--(C4).
In any feasible $H_i$, write Eq.~(\ref{SIeq:br_domain}) as
\[
  \begin{aligned}
    \mathrm{BR}^{(i)}_{\mathrm{node}}(H_i)
    &=
    p(\rho_i)\sum_{v\in V_i} w(v)
    \;+\;
    \sum_{v\in V_i} w(v)p(v) \\
    &\quad+\;
    \sum_{v\in V_i}\sum_{u\in \mathrm{Anc}_{H_i}(v)\setminus\{\rho_i,v\}} w(v)p(u).
  \end{aligned}
\]
The last double sum is nonnegative (since $w,p\ge 0$) and is identically zero for the star, because $\mathrm{Anc}(v)=\{\rho_i,v\}$ in a star.
Therefore, for any feasible $H_i$, $\mathrm{BR}^{(i)}_{\mathrm{node}}(H_i)\ge \mathrm{BR}^{(i)}_{\mathrm{node}}(H_i^{\star})$, proving optimality.
\end{proof}

\subsection{Proof of Proposition~\ref{SIprop:h2}}\label{SIapp:proof_h2}
\begin{proof}
Let $n=|V_i|$ and assume $h=2$.
For any feasible $H_i$, define the depth-1 set (children of the root)
\[
  U=\{u\in V_i : (\rho_i\to u)\in A_i\}.
\]
By the out-degree bound in (C4), $|U|\le \Delta$.
Every $v\in V_i\setminus U$ has depth two and therefore has a parent in $U$.

\emph{Step 1 (saturating the root degree).}
Assume $n>\Delta$ and take any feasible $H_i$ with $|U|<\Delta$.
Pick any leaf $v\in V_i\setminus U$ and let $u\in U$ be its parent.
Form $H_i'$ by replacing the arc $u\to v$ with $\rho_i\to v$.
This preserves indegree one for all vertices, increases $\deg^+(\rho_i)$ by one (still $\le\Delta$), decreases $\deg^+(u)$ by one, and does not increase depth.
Thus $H_i'$ is feasible.
Only the ancestor set of $v$ changes: $u$ is removed from $\mathrm{Anc}(v)$, while all other services have the same ancestors.
Therefore, the objective decreases by exactly $p(u)w(v)\ge 0$.
Repeating this promotion until $|U|=\Delta$ shows that some optimum satisfies $|U|=\Delta$ and all remaining nodes have depth two.

\emph{Step 2 (assignment formulation for fixed hubs).}
Fix a hub set $U\subseteq V_i$ with $|U|=\Delta$ and define $L=V_i\setminus U$.
In any depth-two arborescence consistent with $U$, each leaf $v\in L$ chooses a parent $u\in U$ and each $u$ may have at most $\Delta$ children.
Let $x_{uv}\in\{0,1\}$ indicate whether $v$ attaches to $u$.
Then each $v\in L$ has exactly one parent ($\sum_{u\in U}x_{uv}=1$) and each hub satisfies the degree constraint ($\sum_{v\in L}x_{uv}\le\Delta$).
For such a structure, hub nodes have ancestors $\{\rho_i,u\}$, and leaves have ancestors $\{\rho_i,u,v\}$, yielding
\[
  \begin{aligned}
    \mathrm{BR}^{(i)}_{\mathrm{node}}(H_i)
    &=
    p(\rho_i)\sum_{v\in V_i} w(v)
    \;+\;
    \sum_{v\in V_i} w(v)p(v) \\
    &\quad+\;
    \sum_{u\in U}\sum_{v\in L} p(u)w(v)x_{uv}.
  \end{aligned}
\]
The first two terms are constant given $V_i$, so minimizing $\mathrm{BR}^{(i)}_{\mathrm{node}}$ over depth-two structures with hub set $U$ is equivalent to the assignment problem in Eq.~(\ref{SIeq:h2_assignment}).

\emph{Step 3 (greedy optimality).}
Order hubs so that $p(u_1)\le \cdots \le p(u_\Delta)$ and order leaves so that $w(v_1)\ge \cdots \ge w(v_{n-\Delta})$.
We first observe that some optimal solution fills hubs in this order: if some $u_a$ has remaining capacity while a leaf $v$ is assigned to $u_b$ with $p(u_b)>p(u_a)$, moving $v$ from $u_b$ to $u_a$ does not increase cost, and decreases it when $w(v)>0$.
Given this, consider any feasible assignment and any two leaves $v,v'$ assigned to hubs $u_a,u_b$ with $p(u_a)\le p(u_b)$ while $w(v)<w(v')$.
Swapping the parents of $v$ and $v'$ preserves feasibility (each hub keeps the same number of children) and changes the objective by
\[
  \begin{aligned}
    &\big(p(u_a)w(v')+p(u_b)w(v)\big) - \big(p(u_a)w(v)+p(u_b)w(v')\big) \\
    &\qquad=\;
    (p(u_b)-p(u_a))\,(w(v)-w(v')) \;\le\; 0.
  \end{aligned}
\]
Thus repeated exchanges yield an optimal assignment in which larger weights are assigned to smaller-$p$ hubs, which is exactly the stated greedy rule (fill $u_1$ up to capacity $\Delta$ with the largest leaves, then $u_2$, and so on).

\emph{Step 4 (exactness for constant $\Delta$).}
If $\Delta$ is constant, enumerating all $\binom{n}{\Delta}=O(n^\Delta)$ hub sets and computing the optimal assignment for each (e.g., by the greedy rule above, or by min-cost flow) yields an exact algorithm in $O(n^\Delta\,\mathrm{poly}(n))$ time.
\end{proof}

\subsection{Proof of Proposition~\ref{SIprop:star_treewidth}}\label{SIapp:proof_star_treewidth}
\begin{proof}
In a star, each service contributes the assignment-dependent unary term $w(v)p(\rho_{D(v)})$. The latency term supplies the pairwise disagreement costs displayed in the proposition.
This is the stated Potts objective.
Under the stated policy assumptions, only the specified anchors impose hard constraints on the labels.
For a fixed anchored instance, let each bag table contain one entry for every assignment of its vertices to $[k]$.
An entry stores the minimum cost in the processed subgraph conditional on that bag assignment, with each unary and pairwise term charged when its last required vertex is processed.
Introduce, forget, and join transitions preserve this invariant, while anchor violations receive infinite cost.
A bag contains at most $\tau+1$ vertices, so it has at most $k^{\tau+1}$ entries, and each transition takes $O(k^{\tau+1})$ time.
The $O(n)$ bags therefore give total time $O(nk^{\tau+1})$.
Hard unary constraints fix anchor $t_i$ to label $i$, so every feasible assignment uses all $k$ labels.
Conversely, every assignment satisfying (C1) contains at least one ordered tuple of representatives $(t_1,\ldots,t_k)$ with $D(t_i)=i$.
Taking the minimum over all $(n)_k$ tuples therefore recovers the exact (C1) optimum and gives the stated runtime.
\end{proof}

\subsection{Proof of Corollary~\ref{SIcor:treewidth_refinement}}\label{SIapp:proof_treewidth_refinement}
\begin{proof}
Take any tree decomposition of the underlying undirected graph $G$ with bags $\{B_t\}$ of size at most $\tau+1$.
For each original vertex $v$, let $S_v$ denote its refined principal set with $|S_v|\le s$.
Replace each bag by
\[
  B'_t = \bigcup_{v\in B_t} S_v.
\]
Every refined edge lies within some bag $B'_t$ because its endpoints come from the refined endpoints of an original edge whose endpoints co-occur in some bag of the original decomposition.
The connectedness condition for each refined vertex follows from the connectedness condition for its parent vertex $v$.
Hence $\{B'_t\}$ is a valid tree decomposition of the refined graph $G'$.
Each refined bag has size at most $s(\tau+1)$, so $\mathrm{tw}(G')\le s(\tau+1)-1$.
An expanded decomposition with $O(|V'|)$ bags gives $O(|V'|\,k^{s(\tau+1)})$ time for specified anchors or the at-most-$k$ relaxation by Proposition~\ref{SIprop:star_treewidth}.
For unanchored exact-$k$ optimization, minimizing over the $(|V'|)_k$ ordered anchor tuples gives the additional factor.
\end{proof}

\section{Alternative risk summaries}\label{SIsup:risk_summaries}
For any set $X$ of compromise points, define its deterministic total impact by
\[
\mathrm{BR}(X,D,\{H_i\})
=\sum_{i=1}^{k}\sum_{v\in
\mathrm{Reach}_{H_i}(X\cap(V_i\cup\{\rho_i\}))}w(v).
\]
The worst-case single-compromise impact is
\[
\mathrm{BR}_{\max}(D,\{H_i\})=\max_x \mathrm{BR}(\{x\},D,\{H_i\}),
\]
When only domain-root compromise is retained, the issuer summary is
\[
\mathrm{BR}_{\mathrm{issuer}}(D)
=\sum_i p(\rho_i)\sum_{v:D(v)=i}w(v).
\]
If root risks are equal, this expectation is constant in $D$. Segmentation then appears in worst-case or heterogeneous-risk summaries.

\section{Supporting evaluation diagnostics}\label{SIsup:eval_tables}
This supplement reports experimental settings, calibration data, solver-quality diagnostics, workload variants, and explicit-propagation checks supporting the evaluation.

\subsection{Evaluation protocol}\label{SIsup:eval_protocol}
Candidate designs come from local search over normalized risk--latency tradeoffs and from deterministic traffic partitions. Each study states its scalarization grid. The principal trace replay uses nine equally spaced $\alpha$ values on $[0,1]$. The reference scales are the all-crossing latency $L_0$ and total impact $R_0$.
For each latency budget, budget-constrained moves and swaps refine every feasible candidate before the minimum-blast-radius design is selected.
All runs use fixed computational budgets.
Replay experiments assume complete within-domain derivation eligibility, use $(\Delta,h)=(3,3)$, and visit only nonempty partitions that satisfy the domain-size capacity bound implied by (C3)--(C4).
Within each domain, services are ordered by $p(v)/w(v)$ and attached in that order to the earliest parent with remaining fanout, proceeding breadth first to depth $h$.
Each partition is therefore scored with one deterministic feasible derivation family. Optimization over all admissible intra-domain arborescences lies outside these experiments.
The replay experiments set $\ell_{uv}=1$, enumerate $k\in\{1,\ldots,6\}$, set all domain-root compromise priors to zero, and apply no fixed per-domain management charge.
The upper limit on $k$ is a common computational planning cap, not a value inferred from latency or the derivation-capacity bound. A row selecting $k=6$ is therefore right-censored at the largest tested count.

Across the replay text and tables, $\mathrm{BR}$ abbreviates the conservative $\mathrm{BR}_{\mathrm{node}}$ score, and $\mathrm{BR}_0$ denotes its $k=1$ value under the same workload, risk setting, derivation method, and feasibility filter.
Synthetic selections are generated from fixed seeds and pinned NumPy dependencies.
The generation scripts record a SHA-256 manifest for the resulting outputs.
Reported wall-clock measurements were collected on \RuntimeMachine{} using Python \RuntimePython{} and NumPy \RuntimeNumPy{}.
Each study states its specific settings alongside its results.

\subsection{Q1: Solver validity and transfer}
\suppressfloats[t]
The exact coupling ablation in \MainTableRef{tab:joint_vs_staged}{the coupling table} fixes $n=9$ and $k=3$, then evaluates 20 clustered-graph seeds under the heterogeneous-baseline (H) and cluster-skew (S) risk settings at latency budgets $B\in\{0.20,0.30,0.40\}$\,ms.
For each feasible assignment, the reference computes both the direct-issuance score and the chain or depth-two score.
The staged baseline minimizes the direct-issuance score under $B$ and then selects the smallest chain or depth-two score among all tied optima.
Each seed assigns the three domain labels deterministic heterogeneous root priors drawn uniformly from $[0.01,0.07)$. These priors remain fixed across assignments, budgets, and the H and S risk settings.
This favorable tie rule isolates the loss caused by choosing boundaries under the simpler issuance structure.

The two-domain transfer test uses clustered synthetic graphs with $n\in\{20,40\}$ and seeds $0$--$5$.
Each instance is evaluated at \FormalGridCalls{} equally spaced $\alpha$ values. The grid method solves every value independently.
The adaptive supported-point sweep starts from $\alpha=0$ and $1$, then adds exact solves only where the current solutions imply another supported tradeoff. Both methods solve the same scalarized two-domain direct-issuance problem with nonempty domains and no fixed anchors.
The recursion may omit tied designs on a collinear supported segment, so the comparison checks objective equality at the grid values rather than recovery of every tied design.
Across the 12 instances, the sweep matched every one of the \FormalImpactChecks{} grid objective values.
Depending on $n$, it required \FormalSweepCallsLow--\FormalSweepCallsHigh{} exact solves per instance on average, compared with \FormalGridCalls{} independent solves for the grid. This solver-call comparison leaves runtime unmeasured and does not cover unsupported hard-budget optima.

The scaling study uses a clustered synthetic family with $n\in\{50,100,200,500,1000\}$, expected average out-degree 12, $k\le6$, four restarts, $\alpha\in\{0,0.25,0.5,0.75,1\}$, at most 250 iterations, and $(\Delta,h)=(6,3)$.
\begin{figure}[tbp]
  \centering
  \begin{tikzpicture}
\begin{axis}[
  width=0.98\linewidth,
  height=6.2cm,
  xmode=log,
  ymode=log,
  log basis x=10,
  log basis y=10,
  xlabel={Services $n$},
  ylabel={Mean frontier-search runtime (s)},
  grid=both,
  grid style={line width=.1pt, draw=gray!30},
  major grid style={line width=.2pt,draw=gray!45},
]
\addplot+[thick,mark=*,mark size=1.4pt,color=black,error bars/.cd,y dir=both,y explicit] table [x=n,y=mean_s,y error minus=err_minus_s,y error plus=err_plus_s,col sep=comma] {data/fig_inputs/solver_scaling_summary.csv};
\end{axis}
\end{tikzpicture}
  \caption{\label{SIfig:solver_scaling}
  Fixed-schedule frontier search remains within the tens-of-seconds range on the tested sparse family.
  Increasing $n$ from 50 to 1000 raises mean runtime from $\ScalingRuntimeFifty$ to $\ScalingRuntimeThousand$ seconds, a $\ScalingRuntimeFactor\times$ increase for $20\times$ more services.
  Points are means over two seeded instances per size, and bars span the observed minimum and maximum.
  Fixed restart, scalarization, and iteration limits cap the work, so the curve measures throughput for this schedule rather than worst-case scaling or solution quality.}
\end{figure}

The scaling-quality check compares the reduced schedule with exhaustive optimization over partitions with chain derivations, for $n\in\{8,9,10\}$ and $1\le k\le6$. It matches all \SolverQualityCases{} small-instance budget optima across five seeds and four latency budgets per size.

\begin{table}[tbp]
  \caption{\label{SItab:solver_budget_path_ablation}
  Across \BudgetPathCases{} chain-instance and budget combinations with $1\le k\le6$, budget repair lowers the miss rate, and adding traffic-based starts eliminates misses in all tested cases.
  Supported points provide the exact scalarization-supported reference. The remaining rows show reduced-schedule scalarized search, then add budget-feasible moves and swaps, followed by deterministic traffic-based starts with repair.
  Miss rate and gap are measured against exhaustive budget optima.}
  \centering
  \small
  \PaperTableRows
  \PaperTableRowsSetup
\begin{tabular}{lrrr}
\PaperTableHideRows
\toprule
Method & Miss (\%) & Mean gap (\%) & Max gap (\%) \\
\midrule
\PaperTableBodyRows
Supported points & 23.3 & 1.44 & 18.44 \\
Scalarized search & 51.7 & 3.74 & 20.31 \\
Budget repair & 3.3 & 0.04 & 1.90 \\
Traffic starts + repair & 0.0 & 0.00 & 0.00 \\
\hiderowcolors
\bottomrule
\end{tabular}

\end{table}

In a separate exact-chain check on clustered $n=9$, $k\le4$ instances (seeds 0--4), exhaustive enumeration of nonempty partitions and exact chain ordering provides the reference for the budget-refined search.
The heuristic and exact solutions choose the same $k$ in both observed miss cases.
The remaining failure mode is within-$k$ partition quality under tight budgets, where local search can still leave more budget unused than the exact solution (at $B=0.10$, mean slack $\SmallGapHeuristicSlack$ vs.\ $\SmallGapExactSlack$\,ms among misses).
The miss at $B=\SmallGapNearTieBudget$ is a near-tie with a very small objective gap.

\begin{table}[tbp]
  \caption{\label{SItab:br_exact_regret_stress}
  For chain derivations under the tested independent-event model, optimizing $\mathrm{BR}_{\mathrm{node}}$ produces zero or small $\mathrm{BR}_{\mathrm{exact}}$ regret.
  Across eight $n=9$ instances, with exhaustive partition search over $1\le k\le4$ and exact optimization of chain order for each objective, the largest observed regret is $\BRExactMaxRegret\%$.
  H denotes the heterogeneous-baseline risk scenario. S additionally triples service priors in the designated cluster. In the second block, every service prior is first tripled. All scaling is clipped at $0.25$.
  Entries report mean $\pm$ normal-approximation 95\% confidence half-width and maximum regret. $B$ is in ms/request.}
  \centering
  \small
  \PaperTableRowsSetup
\begin{tabular}{@{}crrrr@{}}
\PaperTableHideRows
\toprule
$B$ (ms/request) & \multicolumn{2}{c}{H scenario} & \multicolumn{2}{c}{S scenario} \\
\cmidrule(lr){2-3}\cmidrule(l){4-5}
 & Mean regret (\%) & Maximum (\%) & Mean regret (\%) & Maximum (\%) \\
\midrule
\PaperTableBodyRows
\multicolumn{5}{@{}l}{\textbf{Baseline service priors}} \\
0.10 & 0.00$\pm$0.00 & 0.00 & 0.15$\pm$0.30 & 1.22 \\
0.20 & 0.00$\pm$0.00 & 0.00 & 0.00$\pm$0.00 & 0.00 \\
0.40 & 0.00$\pm$0.00 & 0.00 & 0.03$\pm$0.03 & 0.14 \\
\midrule
\multicolumn{5}{@{}l}{\textbf{All service priors $\times 3.0$ (clipped at $0.25$)}} \\
0.10 & 0.00$\pm$0.00 & 0.00 & 0.12$\pm$0.16 & 0.59 \\
0.20 & 0.00$\pm$0.00 & 0.00 & 0.01$\pm$0.02 & 0.09 \\
0.40 & 0.01$\pm$0.02 & 0.07 & 0.00$\pm$0.00 & 0.00 \\
\hiderowcolors
\bottomrule
\end{tabular}

\end{table}

The results in Table~\ref{SItab:br_exact_regret_stress} support the surrogate over the tested probability ranges under independence.

Topology alone does not establish semantic eligibility for a formal solver.
Central issuance maps to a star, one intermediate layer to depth two, and sequential delegation to a chain. Shared verifier acceptance instead requires explicit propagation.
SPIRE documents single-authority, nested-intermediate, and federated layouts, while OAuth 2.0 Token Exchange represents impersonation and delegation chains.\cite{SPIREScaling,SPIFFEFederation,RFC8693}
The two-domain cut additionally requires a binary split, and fixed-anchor results require policy-pinned services.
Issuer, trust-bundle, certificate-chain, delegation, and verifier metadata can recover these conditions. Service-call traces cannot.
Thus, the \ReplayWorkloads{} replay graphs establish interaction-graph widths up to \ReplayMaxWidth, and the standards establish realizability. Deployment prevalence remains unmeasured.
The chain regime provides an exact best response and a hardness boundary. It is not treated as a default architecture.
Across the \ReplayHeterogeneousCases{} heterogeneous-root cases, anchored $\alpha$-expansion and anchored local search match the exact-DP-seeded budget selection.

\begin{table}[tbp]
  \caption{\label{SItab:regime_applicability}
  The min-fill heuristic produces tree decompositions of width at most \ReplayMaxWidth{} for the tested interaction graphs.
  For anchored direct issuance under the configured candidate-generation schedule, exact dynamic programming is faster on five workloads. On Train-Ticket, generating the candidate pool with anchored $\alpha$-expansion takes \ReplayTrainAlphaDPRatio$\times$ the exact-DP time.
  The final column reports this ratio before both methods undergo the same budget-refinement step. It measures the configured schedule rather than isolated solver complexity.}
  \centering
  \small
  \setlength{\tabcolsep}{4pt}
  \PaperTableRowsSetup
\begin{tabularx}{\linewidth}{@{}>{\raggedright\arraybackslash}Xrr@{}}
\PaperTableHideRows
\toprule
Workload & Min-fill width & $\alpha$-expansion / exact-DP time \\
\midrule
\PaperTableBodyRows
Fintech proxy & 3 & 1.46$\times$ \\
Online Boutique & 2 & 3.42$\times$ \\
Retail proxy & 3 & 1.98$\times$ \\
Sample architecture & 4 & 1.06$\times$ \\
socialNetwork & 2 & 2.73$\times$ \\
Train-Ticket & 6 & 0.09$\times$ \\
\hiderowcolors
\bottomrule
\end{tabularx}

\end{table}

The homogeneous-root control has zero blast-radius gap for every method because blast radius is partition-invariant under direct issuance.

The joint-chain transfer uses six seeded $n=8$, $k=3$ clustered instances, four fixed-anchor starts per seed, heterogeneous concrete root priors, and $\alpha\in\{0.2,0.4,0.6,0.8\}$.
The exact reference enumerates all labeled assignments consistent with the anchors and applies exact ratio ordering within every domain.
\begin{table}[tbp]
  \caption{\label{SItab:joint_chain_transfer}
  Local search recovers the exact assignment in \ChainLocalSameAssignment\% of the tested chain cases and reaches a maximum objective gap of \ChainLocalMaxGap\%, despite its lower measured runtime.
  The exact method enumerates anchor-consistent assignments and applies exact ratio ordering within each domain.
  The experiment tests an eligible chain family and does not estimate the prevalence of chain delegation.}
  \centering
  \small
  \PaperTableRows
  \PaperTableRowsSetup
\begin{tabular}{lrrrrr}
\PaperTableHideRows
\toprule
Method & Cases & Mean gap (\%) & Max gap (\%) & Exact assignment (\%) & Mean time (ms) \\
\midrule
\PaperTableBodyRows
Exact & 96 & 0.00 & 0.00 & 100 & 2.61 \\
Local & 96 & 9.36 & 84.30 & 43 & 0.07 \\
\hiderowcolors
\bottomrule
\end{tabular}

\end{table}

The refinement stress test uses scope sizes $2$--$4$ and width caps $4,6,8$.
\begin{table}[!htbp]
  \caption{\label{SItab:formal_refinement_stress}
  Under anchored direct issuance, scoped refinement increases min-fill width and exact bounded-treewidth solver runtime in the tested families.
  Larger scopes require higher width caps. At scope size four, only the tree family is solved within the tested width caps.
  Width and runtime ratios are means of refined-to-base ratios across the tested cases. The last two columns report the smallest tested cap that solves every refined instance for that family and scope, followed by the mean runtime ratio at that cap. A dash means that no tested cap succeeds.}
  \centering
  \small
  \PaperTableRowsSetup
\begin{tabularx}{\linewidth}{@{}>{\raggedright\arraybackslash}Xrrrr@{}}
\PaperTableHideRows
\toprule
Graph family & Scope size & Mean width ratio & \multicolumn{2}{c}{Exact solver at minimum successful cap} \\
\cmidrule(l){4-5}
 & & & Width cap & Mean runtime ratio \\
\midrule
\PaperTableBodyRows
Banded & 2 & 2.50 & 6 & 67.8$\times$ \\
Banded & 3 & 4.00 & 8 & 3992.0$\times$ \\
Banded & 4 & 5.50 & -- & -- \\
\addlinespace[2pt]
Clustered & 2 & 2.50 & 6 & 60.9$\times$ \\
Clustered & 3 & 4.00 & 8 & 3539.8$\times$ \\
Clustered & 4 & 5.50 & -- & -- \\
\addlinespace[2pt]
Tree & 2 & 3.00 & 4 & 9.6$\times$ \\
Tree & 3 & 5.00 & 6 & 121.7$\times$ \\
Tree & 4 & 7.00 & 8 & 1453.6$\times$ \\
\hiderowcolors
\bottomrule
\end{tabularx}

\end{table}

Supplementary Table~\ref{SItab:formal_refinement_stress} quantifies Corollary~\ref{SIcor:treewidth_refinement} and Remark~\ref{SIrem:treewidth_refinement_tight}.

\paragraph{Search reliability}
The stability-screen transfer uses four workloads, 20 count resamples, 10 solver seeds, and a reduced schedule of four restarts, five $\alpha$ values, and 250 iterations, compared with a $12\times9\times800$ reference schedule.
We flag the reduced-schedule baseline count when its support falls below a chosen threshold under either count resampling or solver-seed variation.
The returned candidate set contains this baseline count and any count selected in at least 10\% of either perturbation family.
\begin{table}[tbp]
  \caption{\label{SItab:stability_guardrail_transfer}
  Stability-screen threshold ablation.
  At the retained $80\%$ threshold, the screen catches all \GuardrailDisagreements{} schedule disagreements, and the returned sets contain the reference count in \GuardrailCandidateRecoveries{} of those cases.
  Entries report flagged cases out of all cases, caught disagreements out of all disagreements, remaining disagreements out of unflagged cases, and disagreements for which the returned candidate set contains the reference domain count.
  Support is measured across count resamples and solver-seed perturbations. The screen is a review rule, not a correctness guarantee.}
  \centering
  \small
  \PaperTableRows
  \PaperTableRowsSetup
\begin{tabular}{ccccc}
\PaperTableHideRows
\toprule
\shortstack{Support\\threshold} & \shortstack{Flagged\\cases} & \shortstack{Caught\\mismatches} & \shortstack{Unflagged\\mismatches} & \shortstack{Reference-$k$\\coverage} \\
\midrule
\PaperTableBodyRows
60\% & 8/32 & 5/6 & 1/24 & 5/6 \\
70\% & 12/32 & 6/6 & 0/20 & 5/6 \\
80\% & 13/32 & 6/6 & 0/19 & 5/6 \\
90\% & 20/32 & 6/6 & 0/12 & 5/6 \\
\hiderowcolors
\bottomrule
\end{tabular}

\end{table}

\FloatBarrier

\subsection{Q2: Workload, risk, and crossing-cost effects}

\paragraph{Train-Ticket instantiation}\label{SIsup:train_ticket_instantiation}
The workload uses \TrainTicketTraces{} unique public Jaeger traces from seven Train-Ticket configurations.\cite{Steidl2022TrainTicketAnomalies}
Cross-service \texttt{CHILD\_OF} relations produce a directed parent-to-child call graph. Retaining the largest weakly connected component leaves \TrainTicketServices{} services and \TrainTicketEdges{} edges.
Aggregated edge counts are rescaled to the stated calls per top-level request.
Let $m(v)$ be the sum of incoming and outgoing trace-call multiplicities incident to service $v$, and let $m_{\max}=\max_v m(v)$.
The planning scenarios set
$w(v)=1.4+1.6\log(1+m(v))/\log(1+m_{\max})$.
Let $h(v)\in[0,1]$ be the first 32 bits of the SHA-256 hash of the service name scaled by $2^{32}-1$, let $o(v)$ be its outgoing call count, and let $o_{\max}=\max_v o(v)$.
Let $a(v)$ indicate that the name contains \texttt{admin}, and let $q(v)$ indicate an auth, user, payment, security, or assurance term.
\begin{samepage}
The baseline compromise prior is the scenario construction
\[
\operatorname{clip}_{[0.005,0.08]}\!\left(
0.015+0.020h(v)+0.006a(v)+0.005q(v)
+0.004\frac{\log(1+o(v))}{\log(1+o_{\max})}
\right).
\]
The reported $w(v)$ and $p(v)$ values are rounded to three decimal places and define a reproducible heterogeneous planning scenario. No empirical service-compromise frequencies are used.
\end{samepage}
Role-derived cluster labels affect only the skewed-risk multiplier.

\paragraph{Recommendation stability}\label{SIsup:recommendation_stability}
Recommendation stability uses 100 multinomial edge-count resamples with fixed solver seeds, separated from 20 solver seeds on the observed edge counts.
The shared reference-run value $k=\TrainDomains$ is recovered in at least $\TrainCountMatch\%$ of count resamples and $\TrainSolverMatch\%$ of solver seeds across the two risk scenarios at $B=0.10$\,ms.
The reported $k=6$ is right-censored at the tested upper limit.
The count experiment perturbs aggregate edge frequencies and does not model within-trace dependence.

For the critical-path sensitivity check, we extract the longest-duration root-to-leaf span chain from each of the \TrainTicketTraces{} traces.
For edge $(u,v)$, let $s_{uv}$ be the fraction of its observed calls appearing on those chains. Its multiplier is $(0.5+3s_{uv})$ normalized to call-weighted mean one.
Reweighting leaves the selected domain count unchanged in all $\CriticalStableRows$ risk--budget rows.
At $B=0.10$\,ms, it raises $\mathrm{BR}/\mathrm{BR}_0$ from $\CriticalHUniform$ to $\CriticalHWeighted$ under the heterogeneous baseline.
Under cluster skew, the ratio rises from $\CriticalSUniform$ to $\CriticalSWeighted$. At $B=0.20$ the change is at most $\CriticalLooseMaxChange$.

\begin{figure}[!ht]
  \centering
  \begin{tikzpicture}
\begin{axis}[
  name=pqcpathcost,
  width=0.47\textwidth,
  height=4.5cm,
  xmin=-0.2,
  xmax=4.2,
  xtick={0,1,2,3,4},
  xticklabels={N0,N1,N2,N3,N4},
  xlabel={Network path profile},
  ymode=log,
  ymin=0.02,
  ymax=25,
  ytick={0.03,0.1,1,10},
  yticklabels={0.03,0.1,1,10},
  title={(a) Measurement-derived effective cost},
  title style={font=\footnotesize},
  ylabel={Effective cost (ms/crossing call)},
  ylabel style={align=center},
  grid=major,
  grid style={line width=.1pt,draw=gray!35},
]
\addplot+[
  only marks,
  mark=*,
  mark size=1.7pt,
  mark options={fill=black},
  color=black,
  error bars/.cd,
  y dir=both,
  y explicit,
]
  table[x=path_index,y=a1_cost_ms,y error minus=a1_cost_err_minus,y error plus=a1_cost_err_plus,col sep=comma]
  {data/fig_inputs/pqc_path_architecture_figure.csv};
\end{axis}

\begin{axis}[
  name=pqcpathdomains,
  at={(pqcpathcost.east)},
  anchor=west,
  xshift=0.65cm,
  width=0.47\textwidth,
  height=4.5cm,
  xmin=-0.2,
  xmax=4.2,
  xtick={0,1,2,3,4},
  xticklabels={N0,N1,N2,N3,N4},
  xlabel={Network path profile},
  ymin=1.5,
  ymax=6.5,
  ytick={2,3,4,5,6},
  title={(b) Best generated domains ($k\leq6$)},
  title style={font=\footnotesize},
  ylabel={Number of domains},
  yticklabel pos=right,
  ylabel style={at={(axis description cs:1.14,0.5)},anchor=south},
  grid=major,
  grid style={line width=.1pt,draw=gray!35},
  legend to name=pqcpathlegend,
  legend columns=3,
  legend style={draw=none,fill=none,font=\footnotesize,/tikz/every even column/.append style={column sep=0.8em}},
]
\addplot+[
  only marks,
  mark=*,
  mark size=1.7pt,
  mark options={fill=black},
  color=black,
  error bars/.cd,
  y dir=both,
  y explicit,
]
  table[x=x_b010,y=k_b010,y error minus=k_b010_err_minus,y error plus=k_b010_err_plus,col sep=comma]
  {data/fig_inputs/pqc_path_architecture_figure.csv};
\addlegendentry{$B=0.10$ ms/request}
\addplot+[
  only marks,
  mark=square*,
  mark size=1.6pt,
  mark options={fill=blue!70!black},
  color=blue!70!black,
  error bars/.cd,
  y dir=both,
  y explicit,
]
  table[x=x_b020,y=k_b020,y error minus=k_b020_err_minus,y error plus=k_b020_err_plus,col sep=comma]
  {data/fig_inputs/pqc_path_architecture_figure.csv};
\addlegendentry{$B=0.20$ ms/request}
\addplot+[only marks,mark=triangle*,mark size=2.3pt,color=black,forget plot,filter discard warning=false]
  table[x=x_b010,y=k_b010_cap,col sep=comma]
  {data/fig_inputs/pqc_path_architecture_figure.csv};
\addplot+[only marks,mark=triangle*,mark size=2.3pt,color=blue!70!black,forget plot,filter discard warning=false]
  table[x=x_b020,y=k_b020_cap,col sep=comma]
  {data/fig_inputs/pqc_path_architecture_figure.csv};
\addlegendimage{only marks,mark=triangle*,mark size=2.3pt,color=black}
\addlegendentry{At least one estimate reaches $k=6$}
\end{axis}
\node[anchor=north] at ($(pqcpathcost.south)!0.5!(pqcpathdomains.south)+(0,-0.85cm)$)
  {\pgfplotslegendfromname{pqcpathlegend}};
\end{tikzpicture}
  \caption{\label{SIfig:pqc_path_architecture}
  Higher measured crossing costs reduce the number of domains in the best generated Train-Ticket designs.
  Panel (a) shows the effective mutual-authentication cost of X25519+ML-KEM-768 with ML-DSA-65 across five path profiles. Bars span stored bootstrap 5th to 95th percentile estimates.
  Panel (b) applies each estimate uniformly to all interaction edges under two boundary-latency budgets. Vertical bars span the selected domain counts across the three cost estimates.
  N0--N4 denote the path profiles whose round-trip time, rate, and loss appear in Table~\ref{SItab:pqc_named_profile}. $B$ is the latency budget in ms/request, and $k$ is the selected number of trust domains.
  Only N0 at $B=0.10$ changes across its interval, from five to six domains. Triangles mark the $k=6$ search cap when at least one of the three cost estimates selects six domains.}
\end{figure}

\paragraph{Named PQC calibration}
The measurement matrix contains \PQCMeasurementObservations{} observations from \PQCMeasurementBlocks{} randomized blocks per condition on an ARM64 host using OpenSSL~3.6.3.
C0 uses X25519 with ECDSA-P256 authentication, K1 uses hybrid X25519+ML-KEM-768 with ECDSA-P256 authentication, and A1 uses hybrid X25519+ML-KEM-768 with ML-DSA-65 authentication.
Each profile is measured with fresh and resumed sessions under server-only and mutual authentication across the five network paths in Table~\ref{SItab:pqc_named_profile}.
The displayed effective costs use mutual authentication, 20 requests per connection, and a $0.05$ full-handshake probability. The architecture replay separately assumes 20 calls per top-level request.

\begin{table}[!ht]
  \caption{\label{SItab:pqc_named_profile}
  For A1, the measured crossing-cost range across network paths exceeds the largest within-path spread among C0, K1, and A1.
  In the center-estimate A1 replay, the two lower-cost paths select more domains than the three higher-cost paths at both budgets.
  C0 uses X25519 with ECDSA-P256 authentication. K1 uses hybrid X25519+ML-KEM-768 key exchange with ECDSA-P256 authentication. A1 uses the same hybrid key exchange with ML-DSA-65 authentication.
  RTT is round-trip time, and a dash in the rate column denotes an unshaped path. $B$ is the latency budget in ms/request. Design cells report the conservative risk ratio $\mathrm{BR}_{\mathrm{node}}/\mathrm{BR}_0$, normalized to the single-domain baseline $\mathrm{BR}_0$, with the selected trust-domain count $k$ in parentheses.
  N0 is the unshaped local path. Across the stored bootstrap estimates, only N0 at $B=0.10$ changes, from five to six domains. A reported $k=6$ reaches the search cap.}
  \centering
  \small
  \setlength{\tabcolsep}{3pt}
  \PaperTableRowsSetup
  \begin{tabularx}{\linewidth}{@{}l*{8}{>{\raggedleft\arraybackslash}X}@{}}
    \PaperTableHideRows
    \toprule
    Path & \multicolumn{3}{c}{Network conditions} & \multicolumn{3}{c}{Crossing cost (ms/crossing call)} & \multicolumn{2}{c}{A1 design: $\mathrm{BR}/\mathrm{BR}_0$ ($k$)} \\
    \cmidrule(lr){2-4}\cmidrule(lr){5-7}\cmidrule(l){8-9}
    & RTT (ms) & Rate (Mbit/s) & Loss (\%) & C0 & K1 & A1 & $B=0.10$ & $B=0.20$ \\
    \midrule
    \PaperTableBodyRows
    N0 & 0 & -- & 0 & 0.0209 & 0.0194 & 0.0338 & 0.683 (5) & 0.619 (6) \\
N1 & 1 & 1000 & 0 & 0.0678 & 0.0726 & 0.0814 & 0.722 (6) & 0.662 (6) \\
N2 & 35 & 100 & 0.1 & 1.8809 & 1.8230 & 1.9756 & 0.944 (2) & 0.925 (4) \\
N3 & 70 & 20 & 1 & 3.9460 & 3.6150 & 4.1041 & 0.968 (2) & 0.944 (2) \\
N4 & 200 & 1 & 3 & 13.3346 & 14.5419 & 15.5815 & 0.968 (2) & 0.968 (2) \\
\hiderowcolors
\bottomrule

  \end{tabularx}
\end{table}

The architecture replay applies each profile/path coefficient uniformly to all interaction edges.
Absolute profile costs represent the full measured crossing cost. Incremental costs represent migration headroom relative to C0.
The profiles are deployment calibration inputs rather than comparative algorithm benchmarks. Their mean effective costs are not end-to-end tail-latency guarantees.

\begin{table}[!ht]
  \caption{\label{SItab:pqc_profile_architecture}
  Across the ten path--budget cells, center-estimate C0, K1, and A1 costs select the same domain count in eight cells. Incremental A1-minus-C0 costs select more domains than absolute A1 costs in six cells.
  The comparison uses Train-Ticket and the fixed bounded breadth-first derivation family.
  C0 uses X25519 with ECDSA-P256 authentication. K1 adds ML-KEM-768 to the key exchange, and A1 also uses ML-DSA-65 authentication. The $\Delta$A1 profile uses $c_{\mathrm{A1}}-c_{\mathrm{C0}}$ as the per-crossing migration cost.
  N0--N4 denote the path profiles listed by round-trip time, rate, and loss in Table~\ref{SItab:pqc_named_profile}. $B$ is the latency budget in ms/request.
  For each profile, $k$ is the selected trust-domain count. Risk ratio is $\mathrm{BR}_{\mathrm{node}}/\mathrm{BR}_0$, normalized to the single-domain baseline $\mathrm{BR}_0$. Lower ratios indicate less modeled impact. All profiles use the same optimizer schedule. A reported $k=6$ reaches the search cap.}
  \centering
  \small
  \setlength{\tabcolsep}{4pt}
  \PaperTableRows
  \begin{tabularx}{\linewidth}{@{}lc*{4}{>{\centering\arraybackslash}X}*{4}{>{\raggedleft\arraybackslash}X}@{}}
    \PaperTableHideRows
    \toprule
    Path & $B$ (ms/request) & \multicolumn{4}{c}{Domain count $k$} & \multicolumn{4}{c}{Risk ratio} \\
    \cmidrule(lr){3-6}\cmidrule(l){7-10}
    & & C0 & K1 & A1 & $\Delta$A1 & C0 & K1 & A1 & $\Delta$A1 \\
    \midrule
    \PaperTableBodyRows
    N0 & 0.10 & 6 & 6 & 5 & 6 & 0.657 & 0.630 & 0.683 & 0.614 \\
 & 0.20 & 6 & 6 & 6 & 6 & 0.610 & 0.607 & 0.619 & 0.608 \\
N1 & 0.10 & 6 & 6 & 6 & 6 & 0.704 & 0.704 & 0.722 & 0.613 \\
 & 0.20 & 6 & 6 & 6 & 6 & 0.662 & 0.673 & 0.662 & 0.606 \\
N2 & 0.10 & 2 & 2 & 2 & 6 & 0.944 & 0.944 & 0.944 & 0.722 \\
 & 0.20 & 3 & 3 & 4 & 6 & 0.904 & 0.904 & 0.925 & 0.687 \\
N3 & 0.10 & 2 & 2 & 2 & 5 & 0.968 & 0.968 & 0.968 & 0.783 \\
 & 0.20 & 2 & 2 & 2 & 6 & 0.944 & 0.944 & 0.944 & 0.722 \\
N4 & 0.10 & 2 & 2 & 2 & 2 & 0.968 & 0.968 & 0.968 & 0.944 \\
 & 0.20 & 2 & 2 & 2 & 4 & 0.968 & 0.968 & 0.968 & 0.925 \\
\hiderowcolors
\bottomrule

  \end{tabularx}
\end{table}

\FloatBarrier
\paragraph{Held-out graph replay}
For A1, five independent N0 runs and five independent N2 runs measure every Train-Ticket edge and session mode in 20 calibration trials, optimize at $B\in\{0.10,0.20\}$\,ms, and then remeasure only the selected crossing edges in 10 held-out trials.
Budget compliance is determined from the separately observed boundary latency.

Without a reserve, the edge-specific block maxima pass every N0 check and \PQCNTwoEdgePasses{} of the \PQCChecksPerPath{} N2 checks. The graph block maximum passes \PQCNZeroGraphPasses{} of the \PQCChecksPerPath{} N0 checks and every N2 check.
With a $5\%$ budget reserve, each rule passes all \PQCHeadroomChecksPerRule{} path--budget checks.
These checks support the reserve for the tested paths and budgets. Deployment therefore remains iterative: calibrate, optimize, remeasure the selected crossings, reject violations, and update costs or reserve before rerunning.

\FloatBarrier

\subsection{Q3: Explicit shared-issuer rescoring}
In the induced Train-Ticket cases, every generated candidate is evaluated with explicit issuer reachability to provide the reference. The additive score and a reachability-aware proxy are then compared as candidate-ranking rules.
The $J\!\to\!p_{\mathrm{eff}}$ reduction is exact under the antichain condition. When accepted services have overlapping descendant sets, the additive score remains an upper bound and final selection uses explicit evaluation.
Across the induced comparisons, the additive score identifies the best explicitly evaluated candidate in \OverlayCompiledRecoveries{} of \OverlayComparisons{} cases and misses by at most \OverlayMaxGap\%.
Evaluating the three candidates ranked highest by the reachability-aware proxy recovers the best generated candidate in all \OverlayComparisons{} cases.
The reported gaps are relative to the explicit best among generated candidates under the same budget. Global optimality lies outside this comparison.

\FloatBarrier

\section{Deployment considerations}\label{SIsup:deployment}
The model isolates \emph{credential authority} (who can mint or derive which identities) as the mechanism that determines blast-radius propagation under compromise.
The deployment class treats each trust domain as a key-establishment domain rooted in a common issuer or key-management authority.
After establishment, intra-domain payload interactions use symmetric protection or already-issued delegated credentials. Communication between independently rooted domains invokes public-key authentication or key establishment.
Deployments can add structure around that core through scoped credentials (role-based access control (RBAC) or attribute-based access control (ABAC)), policy-gated reachability, and finite compromise windows due to detection and rotation.
These effects can be represented as refinements without changing the cut-and-reachability structure.

\subsection{Network segmentation vs.\ trust domains}
A recurring deployment mistake is to equate network segmentation with cryptographic trust segmentation.
Network segmentation and trust domains describe different objects.
The model keeps them separate:
\begin{itemize}
  \item the interaction graph $G$ records which services communicate and therefore where latency-relevant crossings can occur,
  \item the domain assignment $D$ records which services share a cryptographic trust boundary, and
  \item the derivation graphs $H_i$ record how compromise propagates once an issuer, root, or delegated credential source is lost.
\end{itemize}
This distinction matters even on a tiny topology such as
\[
\texttt{client}\rightarrow\texttt{gateway}\rightarrow\texttt{orders}\rightarrow\texttt{payments}\rightarrow\texttt{ledger},
\]
with an auxiliary issuer \texttt{auth}.
Several readings of the same service graph are possible:
\begin{itemize}
  \item \emph{Gateway TLS only / collapsed internal segment.} TLS terminates at the gateway and internal services share one credential root or one implicit trusted segment.
  This yields low boundary latency. The internal segment behaves like one collapsed trust domain.
  \item \emph{Gateway plus isolated issuer.} The runtime topology is unchanged. Placing \texttt{auth} in its own trust domain prevents compromise of an internal service from automatically granting the issuer's authority.
  This adds a boundary crossing and can reduce blast radius.
  \item \emph{Per-service identities / service-to-service mutual TLS (mTLS).} Distinct service identities under one accepted trust bundle do not by themselves create distinct trust domains: compromise of their common authority retains shared minting power.
  Per-service trust domains require independently rooted credential systems, with cross-root authentication or key establishment on service interactions. This can reduce root-compromise reach while increasing boundary latency.
\end{itemize}
A VLAN, subnet, or gateway boundary should not be collapsed into one black-box node when the analysis target is credential blast radius.
Even if network topology is unchanged, moving a trust root or changing which services share an issuer can change propagation sharply.
JSON Web Token (JWT) issuer systems make this distinction explicit. The service-interaction graph may remain unchanged. Acceptance of one issuer by multiple services makes that issuer a shared propagation root regardless of network segmentation.
\MainSecRef{sec:explicit_prop_setting}{The explicit-propagation setting} allows additive issuer scoring when, for each issuer and domain, the accepted targets have disjoint descendant sets. In a chain this permits at most one target per domain. Direct issuance permits multiple targets because their descendant sets are disjoint.

\subsection{Extracting derivation structure from control-plane data}\label{SIsup:extract_hi}
For deployment, reconstruct $H_i$ from identity and key-management control planes rather than from network traces alone.
Audited control planes can expose the dominant trust roots and delegation layers.
Useful sources include service meshes, workload-identity systems, JWT/OIDC issuer configuration, and managed certificate authority (CA) or key management system (KMS) platforms.
In legacy or mixed-vendor environments, an exact reconstruction may be unavailable. Evaluate a small set of plausible $H_i$ candidates instead.

These control-plane sources map to the model as follows:
\begin{itemize}
  \item \emph{SPIFFE/SPIRE, service-mesh CA, workload identity.} Determine which workloads receive identities from each trust bundle or issuer and whether namespace- or cluster-level subissuers exist. This fixes roots, initial domain membership, and root-to-workload issuance edges. A star is appropriate when workloads receive credentials directly from a root. A depth-two approximation is appropriate when the control plane contains one intermediate issuer layer.
  \item \emph{JWT/OIDC issuer and verifier configuration.} Use issuer, audience, and scope configuration to determine which services accept each issuer and whether a token-minting service is a shared propagation root. For each issuer and domain, check whether any accepted target is an ancestor of another. If so, use explicit issuer reachability to avoid counting overlapping descendants more than once.
  \item \emph{CA/KMS and hardware security module (HSM) inventory and signing-service metadata.} Use key-custody and signing metadata to identify who signs for whom and whether intermediate or tenant issuers exist. This determines derivation edges and whether a star, chain, or depth-two abstraction is appropriate.
  \item \emph{Authorization and mesh policy.} Remove interactions that cannot occur from $E$ and exclude unavailable derivation arcs from $\mathcal A^{\mathrm{allow}}$. Check whether feasible derivation trees remain.
  \item \emph{RBAC and administrative scopes.} Use privileged-role inventories to decide which services should be split into scoped principals with separate $w$ and $p$ values.
\end{itemize}

The extraction sequence is to inventory dominant roots, map issuance or delegation edges, intersect them with verifier-acceptance and policy data, and collapse the result to the coarsest defensible planning abstraction before optimization.
Audited control planes support a direct reconstruction. Legacy systems require sensitivity analysis across plausible $H_i$ and $J$ families.

\subsection{Practitioner instantiation checklist}
This checklist maps common observability and security tooling to model inputs. Deployment-specific estimation is still required.
\begin{enumerate}
  \item \emph{Services/principals ($V$).} Decide the granularity: services only, or also explicit issuers/verifiers, sidecars, and gateways if they hold independent credentials.
  \item \emph{Interaction edges and rates ($E,r_{uv}$).} Use distributed tracing, such as OpenTelemetry or Jaeger, to extract a directed call graph and per-edge call counts or rates. When interpreting $\mathrm{Lat}(D)$ as milliseconds per request, normalize these values to calls per top-level request. If $c_{\mathrm{pqc}}$ is a raw per-event cost, convert the call rates to event-equivalent rates using the observed reuse and reauthentication policy. Do not apply this conversion when the cost is already amortized per interaction. Request classes or critical-path labels may provide additional edge weights.
  \item \emph{Latency sensitivity ($\ell_{uv}$).} Start with $\ell_{uv}=1$. Refine using path criticality (edges on paths critical to a service-level objective (SLO)), slack within traced request dependency graphs, or downstream fanout/queuing sensitivity when available.
  \item \emph{Crossing cost ($c_{\mathrm{pqc}}(u,v)$).} Microbenchmark the deployed cross-root key-establishment or authentication stack on representative hardware. Represent protocol amortization through either event-equivalent rates or effective per-interaction costs, not both.

  To model session state, estimate a cold/warm mixture $c_{\mathrm{eff}}(u,v)=c^{\mathrm{warm}}_{uv}+P^{\mathrm{cold}}_{uv}c^{\mathrm{cold}}_{uv}$.
  Here $c^{\mathrm{warm}}_{uv}$ is the reused-session cost, $c^{\mathrm{cold}}_{uv}$ is the additional setup penalty, and $P^{\mathrm{cold}}_{uv}$ is inferred from telemetry. Relevant factors include connection lifetime, request limits, stream limits, idle timeouts, and burst fanout.
  \item \emph{Criticality weights ($w(v)$).} Choose weights from business or mission impact or data sensitivity (e.g., assign greater weight to services handling personally identifiable information (PII) or payments). Normalize so $\sum_v w(v)$ is interpretable.
  \item \emph{Compromise probabilities ($p_T(x)$).} Pick an explicit planning horizon $T$ (e.g., expected detection/rotation window) and set $p_T(x)$ via exposure tiers or a hazard model. Represent uncertainty with intervals and design against upper endpoints.
  \item \emph{Policy gating.} If service-mesh authorization (mTLS + policy) or network reachability forbids some interactions, gate them by removing edges from $E$ (or setting $r_{uv}\ell_{uv}=0$). Similarly, remove derivation edges that cannot be exercised operationally.
  \item \emph{Scoped credentials (vertex refinement).} If a service holds multiple identities (e.g., normal service vs.\ admin/minting), split $v$ into scoped principals $(v,s)$, allocate weights/priors per scope, and distribute each base edge's traffic across scoped edges according to which scopes are exercised.
  \item \emph{Derivation constraints ($\mathcal A^{\mathrm{allow}},\Delta,h$).} Extract eligible derivation arcs from issuer policy and credential-control-plane metadata. Set fanout and depth from the intended key-management mechanism. Treat the remaining coupled optimization as a design-space search under these constraints.
\end{enumerate}

The same graph optimizer can be retained by refining the linear edge coefficients by request class:
\begin{equation}
  \mathrm{Lat}_{\mathrm{mean}}(D)
  \;=\;
  \sum_{(u,v)\in E} \mathbf{1}[D(u)\ne D(v)]
  \sum_{q\in\mathcal{Q}} \omega_q\,n_{uv,q}\,\bigl(\tau_{uv,q}+P^{\mathrm{cold}}_{uv,q}\kappa_{uv,q}\bigr).
\end{equation}
Here $n_{uv,q}$ is traced use of edge $(u,v)$ under request class $q$, $\omega_q$ is a class weight, $\tau_{uv,q}$ is the warm crossing cost, and $\kappa_{uv,q}$ is the additional cold-setup penalty.
These quantities are combined into weighted graph edges instead of being optimized over as a separate request-level directed acyclic graph (DAG) model.

Handshake flights, certificate growth, session reuse, retry behavior, and path limits enter through the calibrated edge coefficients and cold/warm mixture above.
Explicit protocol state machines, packet failure probabilities, and request-level precedence are outside the graph formulation.

\subsection{Operational refinements}
\paragraph{Scoped principals}
Services may hold credentials with different scopes, use rates, and risk.
Represent them by replacing each base vertex $v\in V$ with scoped principals $S_v$ and forming $V'=\{(v,s):v\in V,\,s\in S_v\}$.
Assign $w'(v,s)$ and $p'(v,s)$ per scope, then distribute each base edge's traffic among scoped edges while preserving its total weight.

Latency still depends only on whether an edge crosses domains, so the same procedures apply to $(V',E')$.
Refinement into at most $s$ principals per service increases the vertex count by at most a factor $s$. Under the direct-issuance assumptions of Proposition~\ref{SIprop:star_treewidth}, the bag-expansion argument preserves polynomial-time exact optimization for bounded width and constant $s$, with specified anchors or fixed $k$.
Refinement can isolate highly privileged principals behind stricter boundaries or separate issuers at low latency cost when their interaction rate is small.

\paragraph{Policy gating and reachability}
Policy and network reachability determine which interactions can occur.
Remove disallowed interactions from $E$, set $r_{uv}\ell_{uv}=0$, or replace $G$ with an effective interaction graph $G_{\mathrm{eff}}$ derived from reachability and authorization policy.
If compromise of $x$ cannot exercise a derivation edge, exclude that edge from $\mathcal A^{\mathrm{allow}}$. Reconstructed or redesigned trees must still satisfy (C3)--(C4).
The resulting model remains a cut-and-reachability problem with \emph{policy-conditioned} graphs.

\paragraph{Finite compromise windows}
Our compromise probabilities $p(x)$ are marginals over a planning horizon.
To make that horizon explicit, one can parameterize $p_T(x)$ by an exposure window $T$ (e.g., mean time to detect/rotate), using a simple hazard model such as $p_T(x)=1-\exp(-\lambda_x T)$.
Uncertainty in $\lambda_x$ or $T$ can be handled with interval bounds and robust design as in Supplementary Corollary~\ref{SIcor:interval_robustness}, yielding worst-case containment guarantees over plausible deployment regimes.

\putbib[refs]
\end{bibunit}
\end{document}